\documentclass[12pt]{article}
\usepackage{amssymb,amsmath,tikz,amsthm}
\usepackage[shortlabels]{enumitem}
\usepackage{paralist}
\usepackage[all]{xy}
\usepackage{comment}

\usepackage{vmargin}
\setmarginsrb{9mm}{1mm}{9mm}{6mm}%
          {10mm}{7mm}{10mm}{10mm}

\usepackage{hyperref}

\newtheorem{lemma}{Lemma}[section]
\newtheorem{proposition}[lemma]{Proposition}
\newtheorem{theorem}[lemma]{Theorem}
\newtheorem{corollary}[lemma]{Corollary}

\newtheorem{fact}[lemma]{Fact}

\newtheorem{question}[lemma]{Question}

\theoremstyle{definition}
\newtheorem{definition}[lemma]{Definition}
\newtheorem{remark}[lemma]{Remark}
\newtheorem{example}[lemma]{Example}



\def\N{\mathbb N}
\def\R{\mathbb R}
\def\Z{\mathbb Z}

\def\T{\mathbb T}





\def\ent{\mathrm{ent}}
\newcommand\eps{\varepsilon}

\newcommand{\SSS}{\mathbb S}
\newcommand{\TT}{\mathbb T}


\newcommand{\argu}{\hbox to 7truept{\hrulefill}}






%

\numberwithin{equation}{section}

\newlength{\bibitemsep}\setlength{\bibitemsep}{.0\baselineskip plus .05\baselineskip minus .05\baselineskip}
\newlength{\bibparskip}\setlength{\bibparskip}{0pt}
\let\oldthebibliography\thebibliography
\renewcommand\thebibliography[1]{%
  \oldthebibliography{#1}%
  \setlength{\parskip}{\bibitemsep}%
  \setlength{\itemsep}{\bibparskip}%
}


\author{Ilaria Castellano \and Anna Giordano Bruno \and  Nicol\`o Zava}


\newcommand{\Addresses}{{
  \bigskip
  \footnotesize

 \noindent I.~Castellano, \textsc{Fakult\"at f\"ur Mathematik, Universit\"at Bielefeld, D-33501 Bielefeld, Germany}\\
  \textit{E-mail address}: \texttt{ilaria.castellano@math.uni-bielefeld.de}

  \medskip

 \noindent  A.~Giordano Bruno, \textsc{Dipartimento di Scienze Matematiche,  Informatiche e Fisiche, Universit\`a degli Studi di Udine, 33100 Udine, Italy}\\
  \textit{E-mail address}: \texttt{anna.giordanobruno@uniud.it}

  \medskip

 \noindent N.~Zava, \textsc{Institute of Science and Technology Austria (ISTA), 3400 Klosterneuburg, Austria}\\
  \textit{E-mail address}: \texttt{nicolo.zava@gmail.com}
}}


\title{Weakly weighted generalised quasi-metric spaces and semilattices}

\date{}

\begin{document}

\maketitle

\abstract{Motivated by recent applications to entropy theory in dynamical systems, we generalise notions introduced by Matthews and define 
weakly weighted and componentwisely weakly weighted (generalised) quasi-metrics. We then systematise and extend to full generality the correspondences between these objects and other structures arising in theoretical computer science and dynamics. In particular, we study the correspondences with weak partial metrics, and, if the underlying space is a semilattice, with invariant (generalised) quasi-metrics satisfying the descending path condition, and with strictly monotone semi(-co-)valuations.

We conclude discussing, for endomorphisms of generalised quasi-metric semilattices, a generalisation of both the known intrinsic semilattice entropy and the semigroup entropy.}

\bigskip

\noindent {\em 2020 MSC:} 54E35, 
                             06A12, 
                             06B35, 
                             20M10, 
                              54C70, 
                              54E15. 
                               

\noindent {\em Keywords:}
generalised quasi-metric, weak partial metric, weak weight, weakly weighted quasi-metric, quasi-metric semilattice, semivaluation, intrinsic entropy.
\section{Introduction}

The notion of metric space has been generalised in several directions with the aim of producing non-Hausdorff topologies. 
For instance, Scott's breakthrough works \cite{Sco1,Sco2} used non-Hausdorff topologies to describe partial objects in computation (see also the survey~\cite{AbrJun} and the monograph~\cite{Gou}). 
Moreover, partial metrics were introduced by Matthews~(\cite{Mat}) and allow the distance from a point to itself to be non-zero.
On the other hand, quasi-metrics abandon the symmetry axiom; they were introduced and studied for the first time by Wilson~(\cite{Wil}), but this notion can be traced back to Hausdorff~(\cite{Hau}). 
 
Matthews introduced the notion of a \emph{weighted quasi-metric}, that is a quasi-metric $d$ on a non-empty set $X$ such that there exists a function $w\colon X \to\R_{\geq0}$ with 
\begin{equation}\label{weqintro}
d(x,y)+w(x)=d(y,x)+w(y)\quad\text{for every $x,y\in X$}.
\end{equation}
Indeed, he showed a precise correspondence between partial metrics and weighted quasi-metrics, where the specialisation orders of the partial metric and the corresponding weighted quasi-metric coincide:
\begin{equation}\label{Meq}
\xymatrix@1@C=35pt@M=1.25ex{
 \text{partial metrics} \ar@{<->}[r]^-{\cite{Mat}} & \text{weighted quasi-metrics}.}
\end{equation}
As immediately noticed by Matthews, not every quasi-metric is weighted, and moreover the question he posed of which quasi-metrizable spaces admit weighted quasi-metrics is still open -- see the paper~\cite{KV94} by K\"unzi and Vajner on topological aspects of weighted quasi-metrics.

\smallskip
Schellekens~(\cite{Sch}) noticed that several examples of partial metric spaces arising  in quantitative domain theory share a further underlying structure: 
the specialisation order associated to the weighted quasi-metric (equivalently, to the partial metric) turns them into semilattices. In this case we speak about \emph{quasi-metric semilattices} with a slight difference with the terminology used in~\cite{Sch}, which adopts the classical one for quasi-metric lattices from~\cite{Weber1,Weber2} (see Definition~\ref{invdef} and the comment after it).  

Inspired by the classical connection between (strictly increasing) valuations and (pseudo-)metrics on a lattice (see Birkhoff's monograph~\cite{Bir}), a correspondence between partial metrics and valuations on semilattices was studied by O'Neill~(\cite{O96}) and further discussed in~\cite{BukScott,BukShorina}.

As previously done by Nakamura~(\cite{Nak70}), Schellekens~(\cite{Sch}) introduced semi(-co-)valuations on semilattices, showing how these notions precisely extend the classical concept of valuations for lattices. Moreover, he provided a correspondence between \emph{invariant} (see Definition~\ref{invdef}) (co-)weighted quasi-metrics and strictly increasing semi(-co-)valuations~(\cite{Sch}), and a characterisation of invariant quasi-metrics on semilattices as those satisfying the \emph{descending path condition} (briefly, {\rm (DPC)} -- see Definition~\ref{def:DPC}) and the extension property~(\cite{Sch_1}): 

\begin{equation}\label{eq:correspondence_semilattices}
\xymatrix@1@C=10pt@M=1.25ex{\txt{invariant\\ partial metrics}\ar@{<->}[rr]^-{\cite{Mat}} && \txt{invariant weighted\\ quasi-metrics}\ar@{<->}[rr]^-{\cite{Sch_1}}\ar@{<->}[d]^-{\cite{Sch}} &&\txt{invariant quasi-metrics with\\ (DPC) and extension property}\\ && \txt{strictly monotone\\positive semi(-co-)valuations.} &&
}\end{equation}

\smallskip
%
The main aim of this paper is to complete, extend and systematise 
the above relations by Matthews and Schellekens. In order to achieve this, inspired by the notion of \emph{weak partial metrics} due to O'Neill~(\cite{O96}, 
see Definition~\ref{def:w_partial_m}), we introduce \emph{weakly weighted quasi-metrics} $d$ on a space $X$ by allowing the function $w\colon X\to \R$, which satisfies \eqref{weqintro},  to take any real value (see Definition~\ref{def:w_weighted}).  
%
%
%
This new notion immediately allows us to extend the correspondence in \eqref{Meq} to weak partial metrics and weakly weighted quasi-metrics: 
\begin{equation}\label{p-wintro}
\xymatrix@1@C=20pt@M=1.25ex{
\text{weak partial metrics} \ar@{<->}[rr]^-{\txt{Cor.~\ref{coro:wwqm_vs_wpm_one_to_one}}}& & \text{weakly weighted quasi-metrics.}}
\end{equation}
Then, in the spirit of Schellekens' work, we extend his results for semilattices collected in \eqref{eq:correspondence_semilattices}. In particular, 
we characterise the invariant quasi-metric semilattices that are weakly weighted  as those satisfying 
{\rm (DPC)}, 
and 
prove the existence of a correspondence between weak weights for invariant quasi-metrics and strictly monotone semi(-co-)valuations 
for a semilattice
: 
\begin{equation}\label{eq:correspondences_semilattices_new}\xymatrix@1@C=25pt@M=1.25ex{
\txt{invariant weak\\partial metrics}\ar@{<->}[rr]^-{\txt{Rem.~\ref{rem:specialisation_order_by_p}}} && \txt{invariant weakly \\ weighted quasi-metrics} \ar@{<->}[d]^-{\txt{Cor.~\ref{coro:w_wei_to_valu2}}} && \txt{invariant quasi-metrics\\ with {\rm(DPC)}} \ar@{<->}[ll]_-{\txt{Cor.~\ref{theo:oc_vs_weak_wei}}} \\
&&\txt{strictly monotone\\ semi(-co-)valuations.}  &&
}\end{equation}
Furthermore, our extensions allow for alternative proofs of the classical correspondences represented in \eqref{eq:correspondence_semilattices}.

\smallskip
A quasi-metric $d$ on a space $X$ is a \emph{generalised quasi-metric} if it can also assume the value $\infty$.
Generalised quasi-metric semilattices play a central role in the theory of intrinsic entropy. 
Indeed, the first attempt at a lattice-theoretic approach to entropy is due to Nakamura, who proved in \cite{Nak70} that Shannon's entropy from information theory (\cite{Sha}) can be characterised as a semivaluation on a semilattice.
Then, a notion of \emph{semigroup entropy} for normed semigroups and their monotone endomorphisms was proposed in~\cite{uatc} in order to find a general scheme for (almost) all classical entropies in mathematics; in most of the specific cases the underlying semigroup is a semilattice. 
Finally, in~\cite{CDFGBT} a notion of {\em intrinsic semilattice entropy} was introduced in the category of generalised quasi-metric semilattices and non-expansive homomorphisms as the key to a unifying approach able to cover all (or, at least, most) of the intrinsic-like entropies in literature. 
Since all of the examples of generalised quasi-metric semilattices appearing in~\cite{CDFGBT} satisfy the property {\rm (DPC)}, it is natural to ask whether we could retrieve a scheme of correspondences similar to~\eqref{eq:correspondences_semilattices_new} in this more general context.

To this end, 
first we introduce \emph{generalised weak partial metrics} and 
further extend the correspondence in~\eqref{p-wintro} to 
generalised weak partial metrics and weakly weighted generalised quasi-metrics: 
\[\xymatrix@1@C=20pt@M=1.25ex{
\text{generalised weak partial metrics} \ar@{<->}[rr]^-{\txt{Cor.~\ref{cor:pm ww}}} &&\text{weakly weighted generalised quasi-metrics.}
	}\]

We then investigate possible extensions of \eqref{eq:correspondences_semilattices_new} to the generalised setting. It is clear that weakly weighted generalised quasi-metrics and generalised weak partial metrics are not the right tool to characterise generalised quasi-metric semilattices with (DPC). In fact, a weakly weighted generalised quasi-metric space is actually a quasi-metric space provided that its specialisation order induces a semilattice (see Remark~\ref{rem:gqm_semil_not_weighted}). A more suitable alternative descends from the following observation. 	
The absorbing element $\infty$ forces the generalised quasi-metric space $(X,d)$ to split in connected components $\mathcal Q(x)$, with $x\in X$, and so it induces an equivalence relation $\cong_d$ on $X$. 
Each $\mathcal Q(x)$ is a quasi-metric space with the quasi-metric $d\restriction_{\mathcal Q(x)}$ given by the restriction of $d$. 
We then 
call $(X,d)$ \emph{componentwise weakly weighted} when there exists a function $w\colon X\to \R$ such that, for every $x\in X$, $w\restriction_{\mathcal Q(x)}$ is a weak weight for 
$d\restriction_{\mathcal Q(x)}$ 
(see Definition~\ref{def:componentwisely_ww_gen_qm}).

Moreover, we introduce the notion 
of \emph{generalised semi(-co-)valuation} (see Definition~\ref{def:gen_sv_and_scv}) for generalised quasi-metric semilattices, and extend the correspondences in~\eqref{eq:correspondences_semilattices_new}
to the most general framework:
$$\xymatrix@1@C=20pt@M=1.25ex{
\txt{invariant generalised\\ quasi-metrics  \\ satisfying {\rm(DPC)}} \ar@{<->}[rr]^-{\txt{Th.~\ref{theo:gen_oc_vs_weak_wei}}}&& \txt{invariant componentwise\\ weakly weighted \\ generalised quasi-metrics} \ar@{<->}[rr]^-{\txt{Cor.~\ref{coro:w_wei_to_valu2gen}}} && \txt{strictly monotone\\ generalised \\ semi(-co-)valuations.}
}$$

\smallskip
At the end of this paper, as an application of the above listed results we introduce a new notion of entropy. For an endomorphism $\phi$ of a semilattice $X$ with an equivalence relation $\cong$, first we introduce the concept of $(\phi,\cong)$-inert element, which extends that of $\phi$-inert element for a non-expansive endomorphism of a generalised quasi-metric semilattice introduced in~\cite{CDFGBT}. 
Then, using $(\phi,\cong)$-inert elements, we propose a new intrinsic entropy for $\phi$ with respect to $\cong$.
This entropy generalises both the semigroup entropy for normed semilattices from~\cite{uatc} and the intrinsic semilattice entropy from~\cite{CDFGBT}.

\medskip
The paper is organised as follows.

Section~\ref{wwspacessec} collects the main definitions, some basic properties of (componentwisely) weakly weighted (generalised) quasi-metric spaces and provides several examples. Those coming from entropy theory will be exploited  throughout the entire paper. 
In particular, \S\ref{ss:wwqm} treats the weakly weighted quasi-metric spaces and the comparison to the standard notions of (co-)weighted quasi-metric spaces, \S\ref{ss:ww} concerns weakly weighted generalised quasi-metric spaces, while \S\ref{ss:ww_gen} the even more general notion of componentwise weakly weighted generalised quasi-metric spaces.

Section~\ref{pmsec} is dedicated to partial metrics. In \S\ref{pmwwsec} we recall the notion of a weak partial metric 
and provide the above mentioned correspondence between weakly weighted quasi-metrics and weak partial metrics. 
In \S\ref{gpmwwsec} this correspondence is extended to weakly weighted generalised quasi-metrics and generalised weak partial metrics.

In Section~\ref{sec:quasi-metric_semilattices}, we turn our attention to generalised quasi-metric semilattices: in \S\ref{specordersec} we analyse the properties of the specialization order and 
{\rm (DPC)}, while \S\ref{invariantsec} concerns the invariance property. 

Section~\ref{wwqmsec} is dedicated to the characterization of weakly weighted invariant quasi-metric semilattices as those that satisfy {\rm (DPC)} (\S\ref{ss:weighting}) and (in \S\ref{ss:semivaluation}) to the correspondence between weakly weighted invariant quasi-metrics and strictly 
monotone semi(-co-)valuations.

In Section~\ref{ss:wwgen}, we extend the results from the previous section to generalised quasi-metric semilattices, by defining suitable notions of generalised semi(-co-)valuations.

In Section~\ref{entropysec}, we introduce the new intrinsic entropy for an endomorphism $\phi$ of a meet-semilattice $X$ with an equivalence relation $\cong$, and we recover the semigroup entropy (\S\ref{hSesec}) and the intrinsic semilattice entropy (\S\ref{tildehsec}) as particular cases.

\medskip
\noindent {\bf Notation.} We denote by $\R$ and $\N$ the real and natural numbers (including $0$), respectively. Moreover, we set, for a real constant $c$, $\R_{\geq c}=\{x\in\R\mid x\geq c\}$ and $\R_{>c}=\{x\in\R\mid x>c\}$. Finally, $\N_+=\N\setminus\{0\}$.

\section{Weakly weighted (generalised) quasi-metric spaces}\label{wwspacessec}

\begin{definition}\label{gqmdef}
A  {\em generalised quasi-metric} on a non-empty set $X$ is a function $d\colon X \times X \to\R_{\geq 0}\cup\{\infty\}$ satisfying the  properties:
\begin{compactenum}
	\item[(QM1)] \label{qm1}  for $x,y\in X$, $d(x,y)=d(y,x)=0$ if and only if $x=y$;
	\item[(QM2)] \label{qm2}  $d(x,z)\leq d(x,y)+d(y,z)$, for every $x,y,z\in X$, with   $r<r + \infty = \infty + \infty = \infty$ for every $r\in\R_{\geq0}$.
\end{compactenum}
Whenever  the function $d$ admits only finite values, i.e., $d\colon X \times X \to\R_{\geq0}$, we 
simply call it 
\emph{quasi-metric}. 

The pair $(X,d)$ is a ({\em generalised}) {\em quasi-metric space} provided that $d$ is a (generalised) quasi-metric on $X$.
\end{definition}

A (generalised) quasi-metric $d$ on a set $X$ is said to be a ({\em generalised}) {\em metric} if it satisfies the following axiom:
\begin{compactenum}
	\item[(QM3)] $d$ is {\em symmetric} (i.e., $d(x,y)=d(y,x)$ for every $x,y\in X$).
\end{compactenum}
In this case, the pair $(X,d)$ is a ({\em generalised}) {\em metric space}.

Given a (generalised) quasi-metric $d$ on a set $X$, we define the {\em conjugated (generalised) quasi-metric $d^{-1}$} by 
\begin{equation}\label{eq:conj_qm}
	d^{-1}(x,y):=d(y,x),\quad\text{for every $x,y\in X$,}
\end{equation}
and the {\em symmetrisation} $d^s$ of $d$, which is a (generalised) metric, by the law
\begin{equation}
	d^s(x,y):=\max\{d(x,y),d^{-1}(x,y)\}=\max\{d(x,y),d(y,x)\},\quad\text{for every $x,y\in X$.}
\end{equation}

If $(X,d)$ is a generalised quasi-metric space and $Y$ is a subset of $X$, then $Y$ becomes a generalised quasi-metric space with the restriction of $d$ to $Y$, that we denote by $d\restriction_Y$ with some abuse of notation.


\begin{definition}\label{Qdef}
	Let $(X,d)$ be a generalised quasi-metric space. 
	We define 
	the equivalence relation $\cong_d$ on $X$ as follows:
	$$x\cong_dy\ \text{if}\ d^s(x,y)<\infty, \quad \text{for every $x,y\in X$.}$$
	For every $x\in X$, the {\em connected component of $x$} is the equivalence class $[x]_{\cong_d}$ of $x$, which we denote by $\mathcal Q_X(x)$ for consistency with \cite{Za_q_c}.
\end{definition}
If there is no risk of ambiguity, we simply refer to $\mathcal Q_X(x)$ as $\mathcal Q(x)$. 
The family of connected components provides a partition of $X$ with the following further property: for every $x\in X$, $(\mathcal Q(x),d\restriction_{\mathcal Q(x)})$ is a quasi-metric space. 

Notice that quasi-metric spaces can be regarded as  generalised quasi-metric spaces with only one connected component.

 A map $f\colon(X,d_X)\to(Y,d_Y)$ between (generalised) quasi-metric spaces is {\em non-expansive} (or {\em $1$-Lipschitz}) if, for every $x,y\in X$, $d_Y(f(x),f(y))\leq d_X(x,y)$.

\subsection{Weakly weighted quasi-metric spaces}\label{ss:wwqm}

We start with one of the central notions of the paper. 

\begin{definition}\label{def:w_weighted}
	Let $(X,d)$ be a generalised quasi-metric space. Then $(X,d)$ and $d$ are called {\em weakly weighted} if there exists a  function $w\colon X\to\R$, called {\em weak weight} for $d$, such that, for every $x,y\in X$,
	\begin{equation}\label{eq:w_weight}
		d(x,y)+w(x)=d(y,x)+w(y).
	\end{equation}
	We also say that $d$ is weakly weighted by $w$.
\end{definition}

Clearly, every generalised metric is weakly weighted. In fact, \eqref{eq:w_weight} is fulfilled if we take $w$ to be the constant function $0$ (or, in general, any constant function). Moreover, a generalised quasi-metric is weakly weighted by a function $w$ if and only if its conjugate is weakly weighted by $-w$.


The definition above is inspired by the following concepts of Matthews~(\cite{Mat}) and Schellekens~(\cite{Sch}), respectively. A quasi-metric space $(X,d)$ and its quasi-metric $d$ are said to be:
\begin{compactenum}[-]
	\item {\em weighted} if there exists a  function $w\colon X\to\R_{\geq 0}$, called {\em weight}, satisfying \eqref{eq:w_weight} for every pair $x,y\in X$;
	\item {\em co-weighted} if there exists a  function $w\colon X\to\R_{\geq 0}$, called {\em co-weight}, such that, for every $x,y\in X$,
	\begin{equation}\label{eq:co-weight}
		d(x,y)+w(y)=d(y,x)+w(x);\end{equation}
	equivalently, the conjugated quasi-metric $d^{-1}$ is weighted by $w$.
\end{compactenum}

\smallskip
In this subsection, we first focus on weakly weighted quasi-metric spaces, motivated by the connection with weighted and co-weighted quasi-metrics that we formalise in Remark~\ref{rem:weakly_weighted_qm}. We leave to the subsequent subsection considerations regarding weakly weighted generalised quasi-metric spaces.

\begin{remark}\label{rem:weakly_weighted_qm}
	Let $(X,d)$ be a quasi-metric space.
	If $w\colon X\to\R_{\geq 0}$ is a weight for $d$, then trivially $w$ is a weak weight for $d$, and thus $(X,d)$ is weakly weighted. 
	Similarly, if $w\colon X\to\R_{\geq 0}$ is a co-weight for $d$, then $-w$ satisfies \eqref{eq:w_weight} and so it is a weak weight for $(X,d)$. 
	
	Hence, Definition~\ref{def:w_weighted} generalises both weighted and co-weighted quasi-metric spaces.  
	We will see in Example~\ref{ex:w_weighted_qm}(c) that there are weakly weighted quasi-metric spaces that are neither weighted nor co-weighted.
\end{remark}

\begin{proposition}\label{prop:subspace_wei}
	Let $(X,d)$ be a (weakly/co-) weighted quasi-metric space, $w\colon X\to\R$ be a (weak/co-) weight, and $Y\subseteq X$. Then $(Y,d\restriction_{Y})$ is (weakly/co-) weighted and $w\restriction_{Y}$ is a (weak/co-) weight.
\end{proposition}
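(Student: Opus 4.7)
The plan is to observe that this is essentially a trivial restriction argument, since the defining conditions for a (weak/co-)weight are universally quantified equations over pairs of points, together with (in the weighted/co-weighted cases) a pointwise non-negativity constraint. All three conditions are inherited by restriction to any subset.

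More explicitly, I would fix arbitrary $x, y \in Y$. Since $Y \subseteq X$, both $x$ and $y$ lie in $X$, so the appropriate defining identity — namely \eqref{eq:w_weight} in the weakly weighted case, the same identity with $w \geq 0$ in the weighted case, and \eqref{eq:co-weight} with $w \geq 0$ in the co-weighted case — holds for the pair $(x,y)$ in $(X,d)$. By the very definitions $d\restriction_{Y}(x,y) = d(x,y)$, $d\restriction_{Y}(y,x) = d(y,x)$, $w\restriction_{Y}(x) = w(x)$ and $w\restriction_{Y}(y) = w(y)$, the same identity translates verbatim into the corresponding identity for $d\restriction_Y$ and $w\restriction_Y$. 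This gives the weak weight property for $(Y, d\restriction_Y)$ in all three cases.

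For the weighted and co-weighted versions, I would additionally note that non-negativity $w(x) \geq 0$ for all $x \in X$ immediately implies $w\restriction_Y(x) \geq 0$ for all $x \in Y$, so $w\restriction_Y \colon Y \to \R_{\geq 0}$ has the required codomain.

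There is no real obstacle here: the proof is a one-line verification that a universally quantified property survives restriction of the index set. The only thing to be careful about is treating the three parallel statements (weighted, weakly weighted, co-weighted) uniformly — but since they differ only by the shape of the defining identity and the codomain of $w$, a single argument handles all three cases simultaneously, and the proposition can essentially be dispatched without computation.
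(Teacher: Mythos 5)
Your argument is correct and is exactly the intended one: the paper states this proposition without proof precisely because, as you observe, the defining identities \eqref{eq:w_weight} and \eqref{eq:co-weight} (and the non-negativity of $w$ in the weighted/co-weighted cases) are universally quantified pointwise conditions that survive restriction to any subset. Nothing further is needed.
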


Weak weights, weights and co-weights are not unique, as stated in the following proposition. On the other hand, the result 
shows that two weak weights  for the same quasi-metric 
differ only by a constant. For the sake of brevity, if $S$ is a 
non-empty set, $f\colon S\to\R$ a function and $c\in\R$, we denote by $f+c$ the function from $S$ to $\R$ defined by the law $(f+c)(x)=f(x)+c$, for every $x\in S$.
Moreover, we write $f\geq 0$ ($f\leq 0$) if, for every $x\in S$, $f(x)\geq 0$ ($f(x)\leq 0$, respectively).

\begin{proposition}\label{prop:w+c}
	Let $(X,d)$ be a quasi-metric space and 
	$w,w^\prime\colon X\to\R$.
	\begin{compactenum}[(a)]
		\item Suppose that $w$ is a weak weight for $d$. 
		Then $w^\prime$ is a weak weight for $d$ if and only if $w^\prime=w+c$ for some $c\in\R$. 
		\item {\rm (\cite[Proposition 8]{Sch})} Suppose that $w$ is a (co-)weight for $d$. 
		Then $w^\prime$ is a (co-)weight for $d$ if and only if $w^\prime=w+c$ for some $c\in\R$ and $w^\prime\geq 0$.
	\end{compactenum}
\end{proposition}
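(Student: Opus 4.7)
The proposition is essentially algebraic, and my plan is to deduce both parts from a single observation about subtracting two instances of the defining equation.

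For part (a), the ``if'' direction is immediate: assuming $w$ satisfies \eqref{eq:w_weight}, I plug $w^\prime = w + c$ into the left- and right-hand sides of \eqref{eq:w_weight} and note that the constant $c$ appears on both sides, so the identity is preserved. For the ``only if'' direction, the key idea is to subtract the identity $d(x,y) + w(x) = d(y,x) + w(y)$ from $d(x,y) + w^\prime(x) = d(y,x) + w^\prime(y)$. The $d$-terms cancel, leaving $w^\prime(x) - w(x) = w^\prime(y) - w(y)$ for all $x,y \in X$, so the function $w^\prime - w$ is constant on $X$. Fixing any basepoint $x_0 \in X$ and setting $c := w^\prime(x_0) - w(x_0)$ then gives $w^\prime = w + c$.

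For part (b), I would derive both directions from part (a) together with the definition of a (co-)weight. For the ``only if'' direction, I observe that any (co-)weight is, in particular, a weak weight (see Remark~\ref{rem:weakly_weighted_qm}), so part (a) immediately yields $w^\prime = w + c$ for some $c \in \R$; the condition $w^\prime \geq 0$ is then simply the non-negativity requirement built into the definition of a (co-)weight. For the ``if'' direction, I assume $w^\prime = w + c$ and $w^\prime \geq 0$: the non-negativity is given, and the appropriate functional equation (either \eqref{eq:w_weight} for a weight, or \eqref{eq:co-weight} for a co-weight) is verified exactly as in part (a) by substituting $w^\prime = w + c$ and cancelling the constant that appears on both sides.

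There is no real obstacle here; the proof is completely mechanical once one notices that the defining equation is linear in the weight function, so that the set of weak weights is an affine subspace of $\R^X$, and subtracting two elements gives a function whose increments vanish. The only subtlety worth spelling out is that, in part (b), the ``if'' direction needs the hypothesis $w^\prime \geq 0$ explicitly, whereas in part (a) no sign constraint is needed precisely because weak weights are allowed to take any real value.
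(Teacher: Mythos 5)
Your proof is correct and follows essentially the same route as the paper: the ``only if'' direction of (a) is exactly the paper's subtraction argument showing $w'-w$ is constant, and the paper likewise disposes of the ``if'' direction and of part (b) by the same mechanical substitution (it simply says (b) ``can be similarly derived''). One small imprecision: a co-weight $w$ is not itself a weak weight for $d$ --- by Remark~\ref{rem:weakly_weighted_qm} it is $-w$ that is a weak weight --- so in part (b) you should apply (a) to $-w$ and $-w'$ (or repeat the subtraction on \eqref{eq:co-weight} directly); the conclusion $w'=w+c$ is unaffected.
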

\begin{proof}
	Let us prove item (a), while item (b) can be similarly derived. Suppose then that $w$ is a weak weight for $d$. It is easy to see that for every $c\in\R$ the function $w+c$ still satisfies \eqref{eq:w_weight}. 
	
	Suppose now that $w^\prime$ is another weak weight for $d$. Then, for every $x,y\in X$,
	$$\begin{aligned}w(x)-w^\prime(x)&\,=d(x,y)+w(x)-(d(x,y)+w^\prime(x))=\\
		&\,=d(y,x)+w(y)-(d(y,x)+w^\prime(y))=w(y)-w^\prime(y),\end{aligned}$$
	which is obtained using \eqref{eq:w_weight} for both $w$ and $w^\prime$. Hence $w$ and $w^\prime$ differ by a constant, and so the claim follows.
\end{proof}

Let us introduce an equivalence relation $\sim$ on $\R^X$, which is the family of all functions from a non-empty set $X$ to $\R$, by putting, 
\begin{equation}\label{sim} 
	\text{for every $f,g\in\R^X$, $f\sim g$ if there exists a constant $c\in\R$ such that $g=f+c$.} 
\end{equation}
Then Proposition~\ref{prop:w+c}(a) and (b) can be rewritten respectively in the following way: 
\begin{compactenum}[(a)]
	\item if $w$ is a weak weight for $d$, then $[w]_\sim$ is the family of all weak weights for $d$;
	\item if $w$ is a (co-)weight for $d$, then $[w]_\sim\cap\R_{\geq 0}^X$ is the family of all (co-)weights for $d$.
\end{compactenum}


Using Proposition~\ref{prop:w+c}, we can give a useful characterisation of those weakly weighted quasi-metric spaces that are (co-)weighted.

\begin{proposition}\label{prop:bound}
	For  a weakly weighted quasi-metric space $(X,d)$  the following equivalences hold:
	\begin{compactenum}[(a)]
		\item $(X,d)$ is weighted if and only if some (equivalently, every) weak weight $w\colon X\to \R$ for $d$ has a lower bound $\underline{b}_w$, i.e., $w(X)\subseteq\, [\underline{b}_w,+\infty[$.
		
		In particular, given a weak weight $w$,  the weak weight $w-c\in[w]_{\sim}$ is a weight for every $c\leq \underline{b}_w$;
		\item $(X,d)$ is co-weighted if and only if some (equivalently, every) weak weight $w\colon X\to \R$ for $d$ has an upper bound $\bar{b}_w$, i.e., $w(X)\subseteq\ ]-\infty, \bar{b}_w]$.
		
		In particular, given a weak weight $w$ for $d$,  the weak weight $-w+c\in[-w]_{\sim}$ is a co-weight for $d$ for every $c\geq \bar{b}_w$.
	\end{compactenum}
\end{proposition}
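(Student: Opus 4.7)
The plan is to leverage Proposition~\ref{prop:w+c}(a), which tells us that the weak weights for a fixed quasi-metric form a single equivalence class under $\sim$: any two weak weights differ by an additive constant. This makes both parts essentially a bookkeeping exercise about translations of real-valued functions, and it also accounts for the ``some (equivalently, every)'' qualifier, since translations preserve the property of being bounded below (resp.\ above).

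For (a), the forward direction goes as follows: if $w_0\geq 0$ is a weight for $d$, then $w_0$ is in particular a weak weight, so any other weak weight $w$ has the form $w=w_0+c$ for some $c\in\R$ by Proposition~\ref{prop:w+c}(a); hence $w(x)\geq c$ for all $x\in X$, and $c$ is a lower bound. For the backward direction, given a weak weight $w$ with $w(x)\geq\underline{b}_w$ and any $c\leq\underline{b}_w$, Proposition~\ref{prop:w+c}(a) tells us that $w-c$ is again a weak weight, while $(w-c)(x)\geq\underline{b}_w-c\geq 0$ shows it is non-negative; so $w-c$ is a weight. This simultaneously yields the backward implication and the ``in particular'' clause.

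Part (b) follows by a dual argument. First one records from Remark~\ref{rem:weakly_weighted_qm} that $-w_0$ is a weak weight whenever $w_0$ is a co-weight, and conversely that for any weak weight $w$ and any $c\geq\bar{b}_w$, the translate $-w+c$ is non-negative and satisfies the co-weight identity~\eqref{eq:co-weight} (after cancelling the $c$'s one recovers precisely the weak weight identity for $w$). Then the proof mirrors (a): if $w_0\geq 0$ is a co-weight, any weak weight is of the form $-w_0+c$ and is therefore bounded above by $c$; if $w$ is a weak weight with upper bound $\bar{b}_w$, then $-w+c$ is a co-weight for every $c\geq\bar{b}_w$. The only step that is not purely mechanical is the sign-flip between weak weights and co-weights, which is precisely why (a) involves lower bounds while (b) involves upper bounds; beyond keeping signs straight there is no real obstacle.
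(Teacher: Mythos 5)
Your proof is correct and follows essentially the same route as the paper's: both arguments reduce everything to Proposition~\ref{prop:w+c}, using the fact that all weak weights differ by a constant to handle the ``some (equivalently, every)'' clause and to translate a bounded weak weight into a genuine (co-)weight. The sign bookkeeping in part (b) is handled correctly.
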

\begin{proof}
	Let us prove item (a), while the dual item (b) can be similarly shown. 
	
	If $(X,d)$ is weighted, then there is a weak weight $w$ 
	for $d$ with non-negative values (see Proposition~\ref{prop:w+c}(b))  and so $\underline{b}_w=0$ is a lower bound.  By Proposition~\ref{prop:w+c}(a),   if a particular weak weight 
	for $d$ is bounded from below, then  
	every weak weight for $d$ has a lower bound.
	
	Conversely, suppose that there is a weak weight $w$ for $d$ bounded from below by a constant $\underline{b}_w$. If $\underline{b}_w\geq 0$, then $w$ is a weight for $d$ by definition and $(X,d)$ is weighted. Otherwise, we can use $w-c\in[w]_{\sim}$ to weight $(X,d)$ whenever $c\leq \underline{b}_w$.
\end{proof}

\begin{remark}\label{fading}
	Let $(X,d)$ be a quasi-metric space.
	In \cite{Sch}, a function $f\colon X\to\R$ such that $\inf_{x\in X}f(x)=0$ is called {\em fading} and \cite[Proposition 8]{Sch} (see Proposition~\ref{prop:w+c}(b)) shows that, if $w$ is a (co-)weight for $d$, then there exists a unique fading (co-)weight for $d$. 
	Therefore, we can say that a weakly weighted quasi-metric space $(X,d)$ is weighted if and only if the class of all its weak weights $[w]_{\sim}$ contains a fading element. In particular, the fading representative can be explicitly obtained as follows: if $w$ is a weak weight for $d$, the function $w^\prime=w-\inf_{x\in X}w(x)$ has the desired property. 
	
A dual observation holds for co-weights.
\end{remark}

Before presenting some examples, we state a technical property of weakly weighted quasi-metric spaces that will be useful in the sequel:

\begin{lemma}
	\label{lem:ww property}
	Let $(X,d)$ be a generalised quasi-metric space that is weakly weighted by $w$. For every $x,y,z\in X$ such that $d(x,z)=d(x,y)+d(y,z)$, one has $d(z,x)=d(z,y)+d(y,x)$.
\end{lemma}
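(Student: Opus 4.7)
The plan is to derive the conclusion by a short algebraic manipulation of three instances of the weak weight identity \eqref{eq:w_weight}, and then separately check the case in which some distance is $\infty$.

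First, I would apply \eqref{eq:w_weight} to the ordered pairs $(x,y)$, $(y,z)$ and $(x,z)$ to rewrite the reversed distances:
\begin{align*}
d(y,x) &= d(x,y) + w(x) - w(y),\\
d(z,y) &= d(y,z) + w(y) - w(z),\\
d(z,x) &= d(x,z) + w(x) - w(z).
\end{align*}
Adding the first two equations, the term $w(y)$ cancels, yielding
\[
d(z,y) + d(y,x) = d(x,y) + d(y,z) + w(x) - w(z).
\]
Substituting the hypothesis $d(x,z) = d(x,y) + d(y,z)$ and comparing with the third displayed identity gives $d(z,y) + d(y,x) = d(x,z) + w(x) - w(z) = d(z,x)$, which is the desired equality.

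The only delicate point is that $(X,d)$ is merely a \emph{generalised} quasi-metric space, so one or more of the distances may be $\infty$, whereas $w$ takes only real values. Since the weak weight equation $d(a,b) + w(a) = d(b,a) + w(b)$ together with the finiteness of $w$ forces $d(a,b) = \infty$ if and only if $d(b,a) = \infty$, the above rewriting remains valid as an identity in $\R_{\geq 0} \cup \{\infty\}$. Concretely, if $d(x,z) = \infty$, then $d(z,x) = \infty$; moreover, $d(x,y) + d(y,z) = \infty$ forces at least one of $d(x,y)$, $d(y,z)$ to equal $\infty$, hence at least one of $d(y,x)$, $d(z,y)$ equals $\infty$ by the same observation, so that $d(z,y) + d(y,x) = \infty = d(z,x)$. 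If instead $d(x,z)$ is finite, the hypothesis forces both $d(x,y)$ and $d(y,z)$ to be finite, hence all six quantities involved are finite and the algebraic computation above applies verbatim.

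There is no real obstacle: the statement is essentially a symmetry between a quasi-metric and its conjugate, visible once one remembers that a quasi-metric $d$ is weakly weighted by $w$ precisely when $d^{-1}$ is weakly weighted by $-w$. Thus the equality $d(x,z) = d(x,y) + d(y,z)$ for $d$ transports to the analogous equality for $d^{-1}$, which is exactly the conclusion. The only care needed is bookkeeping around the value $\infty$, handled by the case split above.
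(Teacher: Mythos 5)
Your proposal is correct and follows essentially the same route as the paper: three applications of the weak weight identity \eqref{eq:w_weight} combined with the hypothesis $d(x,z)=d(x,y)+d(y,z)$, which the paper writes as a single chain of equalities rather than as a sum of rearranged equations. Your explicit case analysis for the value $\infty$ is a welcome extra precaution (justified, since $w$ is real-valued and $\infty - c = \infty$), though the paper's chain of equalities already goes through in $\R_{\geq 0}\cup\{\infty\}$ for the same reason.
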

\begin{proof} 
	Let $x,y,z\in X$ be such that $d(x,z)=d(x,y)+d(y,z)$. Then \eqref{eq:w_weight} yields
	\begin{align*}
		d(z,x)&\,=d(z,x)+w(z)-w(z)=d(x,z)+w(x)-w(z)=\\
		&\,=d(x,y)+d(y,z)+w(x)-w(z)=d(y,x)+w(y)+d(y,z)-w(z)=\\
		&\,=d(y,x)+d(z,y)+w(z)-w(z)=d(y,x)+d(z,y). \qedhere
	\end{align*}
\end{proof}

Below we provide several examples of weak weights. In particular, Example~\ref{ex:w_weighted_qm}(c) shows that Proposition~\ref{prop:bound} makes it is easier to produce  
weakly weighted quasi-metric spaces that are neither weighted nor co-weighted. Moreover, it exposes a method to produce (co-)weights: it suffices to bound the weak weight involved by restricting the domains.


\begin{example}\label{ex:w_weighted_qm}
	\begin{compactenum}[(a)]
		\item Let $\SSS=\{0,1\}$ be a two-point space endowed with the quasi-metric $d_{\SSS}$ defined as follows: $d_{\SSS}(0,1)=0$ and $d_{\SSS}(1,0)=1$. The topological space $(\SSS,d_{\SSS})$ is also known as {\em Sierpi\'nski space}. Then $d_{\SSS}$ is both weighted and co-weighted. Take, for example, the weight $w(0)=1$ and $w(1)=0$ and the co-weight $w^\prime(0)=0$ and $w^\prime(1)=1$.
		\item On the three point space $\TT=\{0,1,2\}$, define the quasi-metric $d_\TT$ as follows: $$d_\TT(0,1)=d_\TT(0,2)=d_\TT(1,2)=0,\text{ }d_\TT(2,1)=d_\TT(1,0)=1\text{ and }d_\TT(2,0)=2.$$
		Then $d_\TT$ is both weighted and co-weighted. We can indeed define a weight $w(0)=2$, $w(1)=1$ and $w(2)=0$, and a co-weight $w^\prime(0)=0$, $w^\prime(1)=1$ and $w^\prime(2)=2$.
		\item Let us endow $\R$ with the quasi-metric $d$ defined as follows:
		$$d_\R(x,y)=\max\{x-y,0\}.$$
		Then $d_\R$ is weakly weighted by the weak weight $w$ defined by $w(x)=-x$, for every $x\in X$. By virtue of Proposition~\ref{prop:bound}, since $w$ is neither lower nor upper bounded, $(\R,d_\R)$ is neither weighted nor co-weighted. Again Proposition~\ref{prop:bound} implies that $(\R_{\geq 0},d_\R\restriction_{\R_{\geq 0}})$ is co-weighted and $(\R_{\leq 0},d_\R\restriction_{\R_{\leq 0}})$ is weighted.
	\end{compactenum}
	Notice that  $d_\SSS$ and $d_\TT$ coincide with the quasi-metric induced by $d_\R$ on the subsets $\{0,1\}$ and $\{0,1,2\}$, respectively.
\end{example}

Next we see that not all quasi-metrics are weakly weighted. It should be noticed the difference between the quasi-metric $d_\TT$ in Example~\ref{ex:w_weighted_qm}(b) and the ones defined in Example~\ref{ex:qm_not_weighted1} and Example~\ref{ex:qm_not_weighted2} on the same three-point set.

\begin{example}\label{ex:qm_not_weighted1}
	On a three-point space $\TT=\{0,1,2\}$, define a quasi-metric $d$ as follows:
	$$d(0,1)=d(0,2)=d(1,2)=0,\text{ and }d(2,1)=d(2,0)=d(1,0)=1.$$
	Then $d$ is not weakly weighted. Suppose by contradiction that a weak weight $w$ 
	for $d$ exists. Since $d(0,2)=d(0,1)$ and $d(2,0)=d(1,0)$, \eqref{eq:w_weight} implies that $w(0)=w(1)$, which is a contradiction because of
	$$w(0)=d(0,1)+w(0)=d(1,0)+w(1)=1+w(1).$$
\end{example}

The following examples come from entropy theory and will be exploited  throughout the entire paper.

\begin{example}\label{entex}
	\begin{compactenum}[(a)]
		\item Let $S$ be a finite set. On the power set $\mathcal P(S)$, we define the quasi-metric $d_{\mathcal P(S)}$ as follows:
		\begin{equation}\label{eq:d_P(X)}
			d_{\mathcal P(S)}(A,B):=\lvert B\setminus A\rvert=\lvert(A\cup B)\setminus A\rvert, \quad\text{for every $A,B\in\mathcal P(S)$.}
		\end{equation}
		Then $d_{\mathcal P(S)}$ is weighted by 
		$w\colon \mathcal P(S)\to \R_{\geq 0}$ defined by $w(A)=\lvert A\rvert$, for every $A\subseteq S$.  Moreover, a co-weight for $d_{\mathcal P(S)}$ is 
		$w'\colon \mathcal P(S)\to \R_{\geq0}$ defined by $w'(A)=|X|-|A|$ for every $A\in\mathcal P(S)$.
		\item Let $G$ be a finite abelian group and $L(G)$ be the family of all subgroups of $G$. We define on $L(G)$ the quasi-metric $d_{L(G)}$ as follows: 
		\begin{equation}\label{eq:d_L(G)}
		d_{L(G)}(H,K):=\log\lvert H+K:H\rvert, \quad \text{for every $H,K\in L(G)$.}
	\end{equation}
		Then $d_{L(G)}$ is weighted by $w\colon L(G)\to \R_{\geq0}$ defined by $w(H)=\log\lvert H\rvert$, for every $H\in L(G)$.  Moreover, a co-weight for $d_{L(G)}$ is  $w'\colon L(G)\to \R_{\geq0}$ defined by $w^\prime(H)=\log\lvert G\rvert-\log\lvert H\rvert$.
	\end{compactenum}
\end{example}

The quasi-metric spaces defined in the above example will be generalised in Example~\ref{ex:w_weighted_gqm}. 

\subsection{Weakly weighted generalised quasi-metric spaces}\label{ss:ww}

Let us focus on generalised quasi-metric spaces that are weakly weighted.

\begin{proposition}\label{prop:characterisation_ww_gqm}
	Let $(X,d)$ be a generalised quasi-metric space. Then $d$ is weakly weighted if and only if the following two properties hold:
	\begin{compactenum}[(a)]
		\item for every $x\in X$, $d\restriction_{\mathcal Q(x)}$ is weakly weighted; 
		\item for every $x,y\in X$, $d(x,y)=\infty$ if and only if $d(y,x)=\infty$.
	\end{compactenum}
\end{proposition}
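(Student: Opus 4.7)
The plan is to prove both implications directly, using the partition of $X$ into connected components induced by $\cong_d$ and the arithmetic convention $r + \infty = \infty$.

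For the forward direction, assume $d$ is weakly weighted by some $w\colon X\to \R$. Property (a) is immediate: for any $x\in X$, the restriction $w\restriction_{\mathcal Q(x)}$ is a real-valued function satisfying \eqref{eq:w_weight} on $\mathcal Q(x)$, since this holds globally; moreover, on $\mathcal Q(x)$ the restriction $d\restriction_{\mathcal Q(x)}$ takes only finite values (by definition of $\cong_d$ via $d^s$), so it is a bona fide quasi-metric and $w\restriction_{\mathcal Q(x)}$ is a weak weight for it. For property (b), suppose $d(x,y)=\infty$ but $d(y,x)<\infty$. Then the equation $d(x,y)+w(x)=d(y,x)+w(y)$ forces an infinite left-hand side and a finite right-hand side, a contradiction; hence $d(y,x)=\infty$ as well.

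For the converse, assume (a) and (b). For each connected component $\mathcal Q$ of $(X,d)$ (i.e., each equivalence class of $\cong_d$), fix by (a) a weak weight $w_{\mathcal Q}\colon \mathcal Q\to \R$ for $d\restriction_{\mathcal Q}$. Define $w\colon X\to \R$ componentwise by setting $w(x):=w_{\mathcal Q(x)}(x)$ for every $x\in X$; this is well defined since the connected components partition $X$. To verify that $w$ is a weak weight for $d$, fix $x,y\in X$ and distinguish two cases. If $x\cong_d y$, then $x,y\in \mathcal Q(x)=\mathcal Q(y)$ and the identity $d(x,y)+w(x)=d(y,x)+w(y)$ is exactly the weak-weight equation for $w_{\mathcal Q(x)}$, which holds by construction. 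If instead $x\not\cong_d y$, then $d^s(x,y)=\infty$, so at least one of $d(x,y), d(y,x)$ equals $\infty$; by (b) both do, and the equation $\infty + w(x) = \infty + w(y)$ holds trivially under the convention $r+\infty = \infty$.

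The only subtlety is making sure the componentwise gluing produces a globally defined real function (not just one defined up to a constant on each piece). This is not actually a problem here, since we are free to choose any representative weak weight on each component independently: the equation \eqref{eq:w_weight} for points in different components is automatically satisfied because both sides are $\infty$, so no compatibility between components is required. I expect this trivialisation across components to be the one point worth stating explicitly, as it illustrates why Definition~\ref{def:componentwisely_ww_gen_qm} (componentwise weak weighting) will be strictly more general than the weakly weighted notion in the generalised setting.
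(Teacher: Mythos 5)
Your proof is correct and follows essentially the same route as the paper's: restrict $w$ to each component and use finiteness of $w$ for the forward direction, then glue component-wise weak weights for the converse, with property (b) making the cross-component instances of \eqref{eq:w_weight} hold trivially as $\infty=\infty$. The closing remark about why no compatibility condition is needed between components is a fair observation but does not change the argument.
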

\begin{proof}
	Let $\{X_i\}_{i\in I}$ be the family of connected components of $X$ and denote $d\restriction_{X_i}$ by $d_i$. If $d$ is weakly weighted by $w$, then every $d_i$ is weakly weighted (by $w\restriction_{X_i}$) and \eqref{eq:w_weight} implies property (b) as $w$ assumes only finite values.
	
	Conversely, suppose that (a) and (b) are fulfilled. Define, for every $x\in X$, 
	$w(x)=w_i(x)$, where $x\in X_i$ and $w_i$ is a weak weight for $d_i$. Then $w$ trivially satisfies \eqref{eq:w_weight}, for every 
	$x,y\in X$ with $x\cong_dy$. Moreover, condition (b) implies that $w$ satisfies the desired property also for points $x,y\in X$ with $x\not\cong_dy$.
\end{proof}

 To discuss  a corollary of Proposition \ref{prop:characterisation_ww_gqm}, let us describe the coproducts of the category ${\bf QMet}$ of generalised quasi-metric spaces and non-expansive maps between them.

\begin{remark}
Let $\{(X_i,d_i)\}_{i\in I}$ be a family of generalised quasi-metric spaces. Their  coproduct in {\bf QMet} is given by the disjoint union of the underlying 
	sets $X=\bigsqcup_{i\in I}X_i$ together with the generalised quasi-metric $d$ defined as follows: if, for every $i\in I$, $j_i\colon X_i\to X$ represents the canonical inclusion of $X_i$ into $X$, then
	$$d(j_i(x),j_k(y))=\begin{cases}
		\begin{aligned}
			& d_i(x,y) &\text{if $i=k$,}\\
			& \infty &\text{otherwise,} 
		\end{aligned}
	\end{cases}$$
	for every $j_i(x),j_k(y)\in X$.  
	Indeed, it can be proved that $(X,d)$ satisfies the universal property of coproducts.
	
	If each $d_i$ is weakly weighted by $w_i$, then $d$ is weakly weighted by the function $w$ making all the triangles in the following diagram  commute:
	$$\xymatrix{
		& & X\ar^w[d] & &\\
		& & \R & &\\
		\cdots & X_i\ar_{w_i}[ur]\ar^{j_i}[uur] & \cdots & X_k\ar^{w_k}[ul]\ar_{j_k}[uul] &\cdots
	}$$
\end{remark}

Thanks to the previous remark, Proposition \ref{prop:characterisation_ww_gqm} immediately implies the following characterisation.
\begin{corollary}
A generalised quasi-metric space is weakly weighted if and only if it is the coproduct of a family of  weakly weighted quasi-metric spaces.
\end{corollary}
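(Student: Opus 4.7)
The plan is to read this as an almost immediate translation of Proposition~\ref{prop:characterisation_ww_gqm} into the categorical language of coproducts in {\bf QMet}, using the explicit description of coproducts given in the remark that precedes the corollary.

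For the ``if'' direction, I would start from a family $\{(X_i,d_i)\}_{i\in I}$ of weakly weighted quasi-metric spaces with weak weights $w_i\colon X_i\to \R$, and form their coproduct $(X,d)$ as recalled. Each connected component of $(X,d)$ is exactly one of the pieces $j_i(X_i)$, since within a piece the distance is $d_i$ (finite) and between different pieces it is $\infty$ in both directions. Thus condition (a) of Proposition~\ref{prop:characterisation_ww_gqm} holds because $d\restriction_{j_i(X_i)}$ is isometric to $(X_i,d_i)$ (which is weakly weighted by $w_i$), and condition (b) holds because across distinct pieces both distances are infinite. Proposition~\ref{prop:characterisation_ww_gqm} then gives that $(X,d)$ is weakly weighted.

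For the ``only if'' direction, suppose $(X,d)$ is weakly weighted. By Proposition~\ref{prop:characterisation_ww_gqm}(a), for each $x\in X$ the restriction $d\restriction_{\mathcal Q(x)}$ is a weakly weighted quasi-metric (it assumes only finite values by definition of a connected component). Let $\{X_i\}_{i\in I}$ be the collection of connected components, and consider their coproduct $(X',d')$ in {\bf QMet}, as described in the preceding remark. The natural bijection $X'\to X$ induced by the inclusions $X_i\hookrightarrow X$ is an isometry: distances within a component agree by construction, and Proposition~\ref{prop:characterisation_ww_gqm}(b) ensures that distances across distinct components in $(X,d)$ are $\infty$ in both directions, matching exactly the coproduct distance. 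Hence $(X,d)$ is (isomorphic in {\bf QMet} to) the coproduct of the family of weakly weighted quasi-metric spaces $\{(\mathcal Q(x), d\restriction_{\mathcal Q(x)})\}$.

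There is no real obstacle here: the statement is essentially a repackaging of Proposition~\ref{prop:characterisation_ww_gqm} once the coproduct in {\bf QMet} has been described. The only point requiring a brief verification is that the connected components of the coproduct $(X,d)$ constructed in the ``if'' direction are precisely the $j_i(X_i)$, which is immediate from the definitions of $\cong_d$ and of $d$ on the coproduct.
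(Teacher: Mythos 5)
Your proof is correct and follows exactly the route the paper intends: the corollary is stated as an immediate consequence of Proposition~\ref{prop:characterisation_ww_gqm} together with the description of coproducts in {\bf QMet} given in the preceding remark, and you have simply spelled out both directions of that translation (including the key observation that condition (b) forces both distances between points of distinct components to be infinite, matching the coproduct metric). No gaps.
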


 To conclude the subsection, let us focus on a specific example of generalised quasi-metric space. 
Given a directed graph $\Gamma=(V,E)$ and two vertices $x,y\in V$, a {\em directed path $P$ connecting $x$ to $y$} is a finite subset of edges $\{(z_{i-1},z_i)\}_{i=1,\dots,n}$ such that $z_0=x$ and $z_n=y$. We define the {\em path generalised quasi-metric} $d_\Gamma$ on $V$ as follows: for every $x,y\in V$,
$$
d_\Gamma(x,y):=\begin{cases}
	\begin{aligned}
		&\inf\{\lvert P\rvert\mid \text{$P$ is a directed path connecting $x$ to $y$}\} &\text{if $x\neq y$,}\\
		&0 &\text{otherwise.}
	\end{aligned}
\end{cases}$$
By definition, $d_\Gamma(x,y)=\infty$ in case there is no directed path from $x$ to $y$ in $\Gamma$. So, $d_\Gamma$ is a generalised quasi-metric as there may be no directed path connecting a vertex to another one. More precisely, $d_\Gamma$ is a quasi-metric if and only if $\Gamma$ is {\em strongly connected}, i.e., for every pair of vertices $x,y$ there is a path connecting $x$ to $y$ and a path connecting $y$ to $x$.

\begin{proposition}\label{prop:ww_dgraphs}
	Let $\Gamma=(V,E)$ be a strongly connected directed graph. Then the quasi-metric $d_\Gamma$ is weakly weighted if and only if $d_\Gamma$ is a metric, that is, $\Gamma$ is a non-directed graph (i.e., $(x,y)\in E$ if and only if $(y,x)\in E$).
\end{proposition}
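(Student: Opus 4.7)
The reverse implication is immediate: any (generalised) metric is weakly weighted by the constant zero function, as noted right after Definition~\ref{def:w_weighted}, and if $\Gamma$ is non-directed then reversing any directed path yields a directed path of the same length in the opposite direction, so $d_\Gamma(x,y) = d_\Gamma(y,x)$ for all $x,y \in V$.

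For the forward direction, I would assume $d_\Gamma$ is weakly weighted by some $w \colon V \to \mathbb{R}$ and, fixing an edge $(x,y) \in E$ with $x \neq y$ (so $d_\Gamma(x,y) = 1$), aim to prove $d_\Gamma(y,x) = 1$, which forces $(y,x) \in E$. The plan is to use strong connectedness to pick a shortest directed path $y = y_0, y_1, \ldots, y_m = x$ of length $m = d_\Gamma(y,x) \geq 1$; then all the $y_i$ are distinct (otherwise the path could be shortened), each step satisfies $d_\Gamma(y_i, y_{i+1}) = 1$, and the triangle inequality is tight along the whole path, namely $d_\Gamma(y_0, y_m) = \sum_{i=0}^{m-1} d_\Gamma(y_i, y_{i+1}) = m$.

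The key step is to transfer this forward tightness into a statement about reverse distances. I would apply Lemma~\ref{lem:ww property} iteratively: from the tight decomposition $d_\Gamma(y_0, y_m) = d_\Gamma(y_0, y_k) + d_\Gamma(y_k, y_m)$ holding for every $k = 1, \ldots, m-1$, the lemma yields the reverse decomposition $d_\Gamma(y_m, y_0) = d_\Gamma(y_m, y_k) + d_\Gamma(y_k, y_0)$, so that $d_\Gamma(x,y) = \sum_{i=0}^{m-1} d_\Gamma(y_{i+1}, y_i)$. A direct alternative is to sum the weak-weight identity $d_\Gamma(y_{i+1}, y_i) = 1 + w(y_i) - w(y_{i+1})$ over $i$, obtaining $m + w(y) - w(x)$, and then substitute $w(y) - w(x) = 1 - m$ from the weak-weight equation applied to the edge $(x,y)$ itself. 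Either way, since $d_\Gamma(x,y) = 1$ and each of the $m$ summands is a positive integer (the $y_i$ being pairwise distinct and $d_\Gamma$ taking positive integer values on distinct vertices of a strongly connected directed graph), we are forced to have $m = 1$, i.e., $(y,x) \in E$. The main thing to watch is the integrality and strict positivity of the reverse distances $d_\Gamma(y_{i+1}, y_i)$, which hinges on the $y_i$ being pairwise distinct; beyond this, no serious obstacle is anticipated.
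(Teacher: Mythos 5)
Your proof is correct, but it takes a genuinely different route from the paper's. The paper argues by contradiction on a minimal counterexample: it considers the set $C$ of values $d_\Gamma(x,y)+d_\Gamma(y,x)$ over pairs with $d_\Gamma(x,y)>d_\Gamma(y,x)$, picks a pair realising the minimum, splits a shortest path from $x$ to $y$ at an interior vertex $z$, applies Lemma~\ref{lem:ww property} to transfer the tight decomposition to the reverse direction, and produces a strictly smaller element of $C$ --- an infinite-descent argument that establishes symmetry of $d_\Gamma$ on all pairs at once. You instead work edge by edge: fixing $(x,y)\in E$ and a shortest reverse path $y=y_0,\dots,y_m=x$, you telescope the weak-weight identity $d_\Gamma(y_{i+1},y_i)=1+w(y_i)-w(y_{i+1})$ along the path, combine with the weak-weight equation on the edge $(x,y)$ to get $\sum_{i=0}^{m-1}d_\Gamma(y_{i+1},y_i)=1$, and use integrality and positivity of the $m$ summands to force $m=1$, hence $(y,x)\in E$; symmetry of $d_\Gamma$ then follows from the (correctly handled) equivalence between $\Gamma$ being non-directed and $d_\Gamma$ being a metric. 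Your ``direct alternative'' has the added advantage of bypassing Lemma~\ref{lem:ww property} entirely, making the argument more self-contained and computational, while the paper's descent argument is perhaps more conceptual in that it directly attacks the failure of symmetry of the quasi-metric rather than of the edge relation. Both proofs hinge on the same essential fact, namely that $d_\Gamma$ takes positive integer values on distinct vertices of a strongly connected graph, and your identification of the pairwise distinctness of the $y_i$ as the point to watch is exactly right.
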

\begin{proof}
	Clearly, if $d_\Gamma$ is a metric, then it is weakly weighted (a weak weight is any constant  function). Moreover, $d_\Gamma$ is a metric if and only if $\Gamma$ is non-directed. In fact, for every distinct $x,y\in V$, $(x,y)\in E$ if and only if $d_\Gamma(x,y)=1$ if and only if $d_\Gamma(y,x)=1$, which is equivalent to $(y,x)\in E$.
	
	Assume now that $d_\Gamma$ is weakly weighted by $w$. Define 
	$$C=\{d_\Gamma(x,y)+d_\Gamma(y,x)\mid (x,y)\in V\times V: d(x,y)> d(y,x)\}\subseteq\N$$
	(recall that $d_\Gamma$ has integer values). Suppose, by contradiction, that $C$ is non-empty. Then there exist $x,y\in V$ such that $d_\Gamma(x,y)>d_\Gamma(y,x)$ and $d_\Gamma(x,y)+d_\Gamma(y,x)$ is the minimum of $C$. Let $P$ and $Q$ be two directed paths going from $x$ to $y$ and from $y$ to $x$, respectively, of minimum length. Note that $d_\Gamma(x,y)\geq 2$, and so we can take a point $z\notin\{x,y\}$ that is crossed by the path $P$. Hence, $P$ is divided into two directed paths whose concatenation gives $P$: $P_1$ and $P_2$ going from $x$ to $z$ and from $z$ to $y$, respectively. Note that $\lvert P_1\rvert=d_\Gamma(x,z)$ and $\lvert P_2\rvert=d_\Gamma(z,y)$ as, otherwise, we could provide a directed path going from $x$ to $y$ that is strictly shorter than $P$. Moreover, $d_\Gamma(x,y)=d_\Gamma(x,z)+d_\Gamma(z,y)$. 
	By Lemma~\ref{lem:ww property}, one has $d_\Gamma(y,x)=d_\Gamma(y,z)+d_\Gamma(z,x)$. Then $d_\Gamma(x,z)+d_\Gamma(z,y)=d_\Gamma(x,y)>d_\Gamma(y,x)=d_\Gamma(y,z)+d_\Gamma(z,x)$ implies that either $d_\Gamma(x,z)>d_\Gamma(z,x)$ or $d_\Gamma(z,y)>d_\Gamma(y,z)$, and so a contradiction as $d_\Gamma(x,y)+d_\Gamma(y,x)$ was taken as minimal with that property.
\end{proof}

\begin{example}\label{ex:qm_not_weighted2}
	On $\TT$, we define the quasi-metric $d^\prime$ as the path quasi-metric associated to the following strongly connected directed graph:
	$$
	\begin{tikzpicture}
	\fill (0,0) circle (2pt) (2,0) circle (2pt) (1,1.73) circle (2pt);
	\draw[shorten >=0.1cm,->] (0,0)--(2,0);
	\draw[shorten >=0.1cm,->] (2,0)--(1,1.73); 
	\draw[shorten >=0.1cm,->] (1,1.73)--(0,0);
	\draw (0,0) node [below left]{$0$};
	\draw (2,0) node [below right]{$1.$};
	\draw (1,1.73) node [above]{$2$};  
	\end{tikzpicture}
	$$
	Then $d^\prime$ is not weakly weighted by Proposition~\ref{prop:ww_dgraphs}.
\end{example}

Thanks to Proposition~\ref{prop:characterisation_ww_gqm}, we can drop the request of strong connectivity from the statement of Proposition~\ref{prop:ww_dgraphs}.

\begin{corollary}\label{coro:ww_dgraphs}
	Let $\Gamma=(V,E)$ be a directed graph. Then $d_\Gamma$ is weakly weighted if and only if $d_\Gamma$ is a metric, that is, $\Gamma$ is a non-directed graph.
\end{corollary}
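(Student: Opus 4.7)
The strategy is to reduce the statement to Proposition~\ref{prop:ww_dgraphs} via Proposition~\ref{prop:characterisation_ww_gqm}. The first observation is that the connected components of $(V,d_\Gamma)$ under $\cong_{d_\Gamma}$ coincide with the strongly connected components (SCCs) of $\Gamma$: by definition, $x\cong_{d_\Gamma}y$ iff $d_\Gamma(x,y)<\infty$ and $d_\Gamma(y,x)<\infty$, i.e.\ iff $x$ and $y$ reach each other in $\Gamma$. I would then verify the key lemma that, for an SCC $C$, the restriction $d_\Gamma\restriction_C$ equals the path quasi-metric $d_{\Gamma[C]}$ of the induced subgraph $\Gamma[C]$. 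The nontrivial inclusion is $d_\Gamma\restriction_C(x,y)\geq d_{\Gamma[C]}(x,y)$: if a minimal $x$-$y$ path in $\Gamma$ passed through $z\notin C$, then $x$ reaches $z$ (initial segment) and $z$ reaches $y$ (final segment); since $y$ reaches $x$ inside $C$, one gets that $z$ reaches $x$ and is reached by $x$, contradicting $z\notin C$. Hence every minimal path between vertices of $C$ stays inside $C$, and the two quasi-metrics agree. Note also that $\Gamma[C]$ is strongly connected by construction.

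With these preparations, I would apply Proposition~\ref{prop:characterisation_ww_gqm}: $d_\Gamma$ is weakly weighted iff
\begin{compactenum}[(a)]
\item $d_{\Gamma[C]}$ is weakly weighted for every SCC $C$ of $\Gamma$, and
\item for every $x,y\in V$, $d_\Gamma(x,y)=\infty$ iff $d_\Gamma(y,x)=\infty$.
\end{compactenum}
By Proposition~\ref{prop:ww_dgraphs}, condition (a) is equivalent to $\Gamma[C]$ being non-directed for every SCC $C$. Condition (b) says that reachability in $\Gamma$ is a symmetric relation.

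It remains to show that (a) and (b) together are equivalent to $\Gamma$ being non-directed. The forward direction is trivial: if $\Gamma$ is non-directed, then reversing any directed path yields a directed path, so reachability is symmetric, and each induced subgraph on an SCC is again non-directed. For the converse, assume (a) and (b), and pick any edge $(x,y)\in E$. Then $x$ reaches $y$, and by (b) also $y$ reaches $x$, so $x$ and $y$ lie in the same SCC $C$. Hence $(x,y)$ is an edge of $\Gamma[C]$, and by (a) the graph $\Gamma[C]$ is non-directed, giving $(y,x)\in E$ as required.

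The main obstacle is the lemma that paths between vertices of an SCC stay inside that SCC, which ensures $d_\Gamma\restriction_C=d_{\Gamma[C]}$ and legitimises the use of Proposition~\ref{prop:ww_dgraphs} on each component; the remaining arguments are straightforward bookkeeping.
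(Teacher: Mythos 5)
Your proof is correct and is essentially the paper's intended argument: the corollary is stated there as an immediate consequence of Proposition~\ref{prop:characterisation_ww_gqm} combined with Proposition~\ref{prop:ww_dgraphs}, which is exactly your reduction. You additionally verify the detail the paper leaves implicit (that directed paths between vertices of a strongly connected component stay inside that component, so $d_\Gamma\restriction_C=d_{\Gamma[C]}$ and Proposition~\ref{prop:ww_dgraphs} genuinely applies), which is a worthwhile check.
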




\subsection{Componentwisely weakly weighted generalised quasi-metric spaces}\label{ss:ww_gen}


Proposition~\ref{prop:characterisation_ww_gqm} and Corollary~\ref{coro:ww_dgraphs} suggest that the notion of weak weightedness is too restrictive in the realm of generalised quasi-metric spaces. So, inspired by Proposition~\ref{prop:characterisation_ww_gqm}, we propose a more suitable version.

\begin{definition}\label{def:componentwisely_ww_gen_qm}
	Let $(X,d)$ be a generalised quasi-metric space. Then $X$ is {\em componentwisely weakly weighted} if there exists a function $w\colon X\to\R$, called {\em componentwise weak weight} 
	for $(X,d)$ (or, 
	for $d$), such that \eqref{eq:w_weight} is fulfilled for every $x,y\in X$ satisfying $x\cong_dy$ (i.e., for every $x\in X$, $w\restriction_{\mathcal Q(x)}$ is a weak weight for $d\restriction_{\mathcal Q(x)}$).
\end{definition}

Clearly, a weakly weighted generalised quasi-metric space is, in particular, componentwisely weakly weighted, and the two notions coincide for quasi-metric spaces.

\begin{fact}\label{fact:cww_iff_each_Q_ww}
	A generalised quasi-metric space $(X,d)$ is componentwisely weakly weighted if and only if each connected component with the induced quasi-metric is weakly weighted (i.e., condition (a) of Proposition~\ref{prop:characterisation_ww_gqm} holds). More precisely, $w\colon X\to\R$ is a componentwise weak weight of $d$ if and only if, for every $x\in X$, $w\restriction_{\mathcal Q(x)}$ is a weak weight of the restriction of $d$ to ${\mathcal Q(x)}$.	
\end{fact}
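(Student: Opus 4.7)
The statement is essentially a direct unpacking of Definition~\ref{def:componentwisely_ww_gen_qm}, so the plan is to verify both directions by a short bookkeeping argument and, for the nontrivial direction, a patching construction.

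First I would handle the ``more precisely'' clause, since the main biconditional follows from it. The forward direction is immediate: if $w\colon X\to\R$ is a componentwise weak weight, then for any $x\in X$ and any $y,z\in\mathcal{Q}(x)$ we have $y\cong_d z$, so \eqref{eq:w_weight} applies with $w$ replaced by $w\restriction_{\mathcal{Q}(x)}$, showing that $w\restriction_{\mathcal{Q}(x)}$ is a weak weight of $d\restriction_{\mathcal{Q}(x)}$. Conversely, if $w\restriction_{\mathcal{Q}(x)}$ is a weak weight of $d\restriction_{\mathcal{Q}(x)}$ for every $x\in X$, then for any $y,z\in X$ with $y\cong_d z$ the pair lies in a common component $\mathcal{Q}(x)$, so \eqref{eq:w_weight} holds; by definition this makes $w$ a componentwise weak weight.

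For the main biconditional, the ``only if'' direction is now trivial: a componentwise weak weight restricted to any component is a weak weight for the restricted quasi-metric by the clause just proved. For the ``if'' direction, suppose each connected component $(\mathcal{Q}(x), d\restriction_{\mathcal{Q}(x)})$ is weakly weighted. Let $\{X_i\}_{i\in I}$ be the partition of $X$ into connected components (i.e., the equivalence classes of $\cong_d$), and for each $i\in I$ pick a weak weight $w_i\colon X_i\to\R$ for $d\restriction_{X_i}$. Define $w\colon X\to\R$ by $w(y)=w_i(y)$ whenever $y\in X_i$; this is well defined since the $X_i$ are pairwise disjoint and cover $X$. Then $w\restriction_{\mathcal{Q}(x)} = w_i$ whenever $\mathcal{Q}(x)=X_i$, so every restriction is a weak weight, and by the ``more precisely'' clause $w$ is a componentwise weak weight.

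I do not anticipate any real obstacle: the only mild point is to notice that on different components the chosen weak weights may be entirely unrelated (indeed Proposition~\ref{prop:w+c}(a) shows they are only determined up to additive constants within a component), but since \eqref{eq:w_weight} is required to hold in Definition~\ref{def:componentwisely_ww_gen_qm} only for pairs $y,z$ with $y\cong_d z$, no compatibility between components is needed and the disjoint gluing suffices.
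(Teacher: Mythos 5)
Your proof is correct and follows essentially the same route as the paper: the ``only if'' direction is immediate from Definition~\ref{def:componentwisely_ww_gen_qm}, and the ``if'' direction is obtained by gluing arbitrarily chosen weak weights on the connected components, with no compatibility needed since \eqref{eq:w_weight} is only required within a single component.
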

\begin{proof}
	The ``only if'' implication is trivial. Conversely, the componentwise weak weight can be obtained by gluing together all the weak weights defined on the connected components. More explicitly, suppose that $\{(X_i,d_i)\}_{i\in I}$ is the partition of a generalised quasi-metric space $(X,d)$ in its connected components endowed with the inherited quasi-metrics that are weakly weighted by the corresponding functions $w_i\colon X_i\to\R$. Then the function $w\colon X\to\R$ defined by $w(x)=w_i(x)$ if $x\in X_i$ is a componentwise weak weight.
\end{proof}

According to Proposition~\ref{prop:ww_dgraphs}, a directed graph $\Gamma=(V,E)$ is componentwisely weakly weighted if and only if each strongly connected component (i.e., each connected component of $(V,d_\Gamma)$) is a non-directed graph. So, one can easily find an example of such a directed graph not satisfying condition (b) in Proposition~\ref{prop:characterisation_ww_gqm}, and so providing an example of a componentwisely weakly weighted generalised quasi-metric space that is not weakly weighted.

\medskip
In the following example we provide two generalised quasi-metric spaces that are componentwisely weakly weighted. However, for the proof of the property we wait for stronger results (see Example~\ref{ex:PX_and_LG_cww}).

\begin{example}\label{ex:w_weighted_gqm}
	\begin{compactenum}[(a)]
		\item For a set $S$,  we define a generalised quasi-metric $d_{\mathcal P(S)}$ on $\mathcal P(S)$ as in \eqref{eq:d_P(X)}.
		\item For an abelian group $G$, we define a generalised quasi-metric $d_{L(G)}$ on $L(G)$ as in \eqref{eq:d_L(G)}, with the abuse of notation $\log\infty=\infty$.  The large-scale geometry of the metric space $(L(G),d_{L(G)}^s)$ is studied in \cite{DikProZav}.
	\end{compactenum}
\end{example}

Even though the two examples provided in Example~\ref{ex:w_weighted_gqm} are componentwisely weakly weighted, they are not weakly weighted unless the set $S$ (respectively, the group $G$) is finite, i.e., unless they are quasi-metric spaces according to Proposition~\ref{prop:characterisation_ww_gqm}. In fact, if $S$ (respectively, $G$) is infinite, then the pair $\emptyset$ and $S$ (respectively, $\{0\}$ and $G$) does not satisfy condition (b) of the mentioned result.

\medskip
The equivalence relation $\sim$ from \eqref{sim} can be extended in the following way. Given a family $\{S_i\}_{i\in I}$ of pairwise disjoint subsets of a set $S$ (most commonly, a partition of $S$), two functions $f,g\colon S\to\R$ are equivalent relatively to the family $\{S_i\}_{i\in I}$, and we write 
\begin{equation}\label{simsim}
	f\approx_{\{S_i\}_i}g,\quad \text{if}\ f\restriction_{S_i}\sim g\restriction_{S_i}\ \text{for every $i\in I$.}
\end{equation}
If the family is clear from the context, we simply write $\approx$ for $\approx_{\{S_i\}_i}$.

\smallskip
Proposition~\ref{prop:w+c}(a) implies its extension to componentwise weak weights for generalised quasi-metric spaces.

\begin{corollary}\label{approx}
Let $X$ be a generalised quasi-metric space, and $w$ be a componentwise weak weight for $X$. Then the set of all componentwise weak weights for $X$ is $[w]_{\approx}$ where $\approx$ is relative to $\{\mathcal Q(x)\}_{x\in X}$.
\end{corollary}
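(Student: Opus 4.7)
The plan is to reduce the statement to the already-proved quasi-metric version (Proposition~\ref{prop:w+c}(a)) component by component, using Fact~\ref{fact:cww_iff_each_Q_ww} as the bridge between componentwise weak weights on $X$ and weak weights on each quasi-metric space $(\mathcal Q(x),d\restriction_{\mathcal Q(x)})$.

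First I would observe one inclusion: if $w'\in [w]_{\approx}$ where $\approx$ is relative to $\{\mathcal Q(x)\}_{x\in X}$, then for every $x\in X$ there exists a constant $c_x\in\R$ such that $w'\restriction_{\mathcal Q(x)}=w\restriction_{\mathcal Q(x)}+c_x$. Since $w\restriction_{\mathcal Q(x)}$ is a weak weight for the quasi-metric $d\restriction_{\mathcal Q(x)}$ (by Fact~\ref{fact:cww_iff_each_Q_ww}), and adding a constant preserves \eqref{eq:w_weight} (the trivial direction of Proposition~\ref{prop:w+c}(a)), $w'\restriction_{\mathcal Q(x)}$ is also a weak weight for $d\restriction_{\mathcal Q(x)}$. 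Applying Fact~\ref{fact:cww_iff_each_Q_ww} once more in the reverse direction yields that $w'$ is a componentwise weak weight for $X$.

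For the other inclusion, let $w'$ be any componentwise weak weight for $X$. Then, by Fact~\ref{fact:cww_iff_each_Q_ww}, both $w\restriction_{\mathcal Q(x)}$ and $w'\restriction_{\mathcal Q(x)}$ are weak weights for the quasi-metric $d\restriction_{\mathcal Q(x)}$, for every $x\in X$. Now Proposition~\ref{prop:w+c}(a) applied to each connected component $(\mathcal Q(x),d\restriction_{\mathcal Q(x)})$ provides a constant $c_x\in\R$ with $w'\restriction_{\mathcal Q(x)}=w\restriction_{\mathcal Q(x)}+c_x$. This is precisely the defining condition of $w'\approx w$ relative to $\{\mathcal Q(x)\}_{x\in X}$.

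There is no real obstacle here: the statement is essentially the componentwise repackaging of Proposition~\ref{prop:w+c}(a), and the only point requiring minor care is ensuring that the equivalence relation $\approx$ is being applied correctly to a family indexed by points of $X$ rather than by equivalence classes (which is harmless, since two points in the same class yield the same set $\mathcal Q(x)$, so the conditions coincide).
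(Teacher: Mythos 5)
Your proposal is correct and matches the paper's intended argument: the paper gives no explicit proof, presenting the statement as an immediate consequence of Proposition~\ref{prop:w+c}(a) applied on each connected component, which is precisely your reduction via Fact~\ref{fact:cww_iff_each_Q_ww}. Both inclusions are handled correctly, and your remark about the family $\{\mathcal Q(x)\}_{x\in X}$ being indexed by points rather than by classes is a harmless but accurate observation.
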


\section{Generalisations of partial metrics}\label{pmsec}

\subsection{Partial metrics and their relation with weakly weighted quasi-metrics}\label{pmwwsec}

Partial metrics were introduced by Matthews~(\cite{Mat}). Then, O'Neill~(\cite{O96}) extended the notion maintaining the same terminology. In this paper we prefer to distinguish the two notions, and we call them partial metrics and weak partial metrics, respectively.

\begin{definition}[\cite{O96}]\label{def:w_partial_m}
	A {\em weak partial metric} $p$ on a non-empty set $X$ is a function $p\colon X\times X\to\R$ satisfying the following properties:
	\begin{compactenum}[(PM1)]
		\item for every $x,y\in X$, $x=y$ if and only if $p(x,x)=p(y,y)=p(x,y)$;
		\item for every $x\in X$, the function $p(x,\cdot)$ is minimised at $x$ (i.e., $p(x,x)\leq p(x,y)$, for every $y\in X$);
		\item $p$ is symmetric (i.e., for every $x,y\in X$, $p(x,y)=p(y,x)$);
		\item for every $x,y,z\in X$, $p(x,z)\leq p(x,y)+p(y,z)-p(y,y)$.
	\end{compactenum}
	A {\em partial metric}, according to~\cite{Mat}, is a weak partial metric $p$ on $X$ satisfying the further property $p\geq 0$. 
	
	The pair $(X,p)$ is a ({\em weak}) {\em partial metric space} if $p$ is a (weak) partial metric on $X$.
\end{definition}
If we consider the equivalence relation $\sim$ defined in \eqref{sim}, it is trivial to check that, if $(X,p)$ is a weak partial metric space, every function $p^\prime\colon X\times X\to\R$ satisfying $p^\prime\sim p$ is actually a weak partial metric. A similar result holds for partial metrics with the further assumption that $p^\prime\geq 0$.

Before introducing  some examples, let us show how weak partial metrics are tightly connected to weakly weighted quasi-metrics.
The corresponding connection between partial metrics and weighted quasi-metrics was proved in~\cite{Mat}, where the two notions were introduced.

\begin{theorem}\label{theo:wwqm_vs_wpm}
	Let $X$ be a non-empty set. 
	\begin{compactenum}[(a)]
		\item If $d$ is a quasi-metric on $X$ that is weakly weighted by $w$, then the function $p_{d,w}\colon X\times X\to\R$ defined by the law
		$$p_{d,w}(x,y):=d(x,y)+w(x),\quad\text{for every $x,y\in X$,}$$
		is a weak partial metric on $X$. If $w'$ is another weak weight for $d$, then $p_{d,w'}\sim p_{d,w}$.
		
		In particular, if $w$ is a weight (and so $d$ is weighted) then $p_{d,w}$ is a partial metric.
		\item If $p$ is a weak partial metric on $X$, then the function $d_p\colon X\times X\to\R_{\geq 0}$ defined by the law
		$$d_p(x,y):=p(x,y)-p(x,x),\quad\text{for every $x,y\in X$,}$$
		is a quasi-metric on $X$ which is weakly weighted by $w_p\colon X\to\R$ such that $w_p(x)=p(x,x)$ for every $x\in X$. Moreover, if $p^\prime\colon X\times X\to\R$ satisfies $p^\prime\sim p$, then 
		 $p^\prime$ is a weak partial metric, $d_{p'}=d_p$ and $w_{p'}\sim w_p$.
		
		In particular, if $p$ is a partial metric, then $w_p$ is a weight for $d_p$.
	\end{compactenum}
\end{theorem}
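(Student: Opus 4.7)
The plan is a routine back-and-forth check, pairing each axiom for weak partial metrics with one for weakly weighted quasi-metrics and vice versa. For clarity I treat the two directions separately.

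For part (a), assuming $d$ is a quasi-metric on $X$ weakly weighted by $w$ and setting $p=p_{d,w}$, I would verify the four axioms in order. Axiom (PM3) is immediate: the defining identity \eqref{eq:w_weight} rewrites exactly as $p(x,y)=p(y,x)$. Axiom (PM2) reduces to $w(x)\le d(x,y)+w(x)$, which holds since $d\ge 0$. Axiom (PM4) collapses after cancelling $w(x)$ and $w(y)$ to the triangle inequality (QM2) for $d$. Axiom (PM1) is the only slightly delicate point: if $x=y$ then $p(x,x)=w(x)=p(y,y)=p(x,y)$ by (QM1) for $d$; conversely $p(x,x)=p(x,y)$ yields $d(x,y)=0$ and $p(x,x)=p(y,y)$ combined with \eqref{eq:w_weight} (or with (PM3) already established) forces $d(y,x)=0$, so (QM1) gives $x=y$. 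The uniqueness-up-to-$\sim$ statement is an immediate consequence of Proposition~\ref{prop:w+c}(a): if $w'=w+c$ then $p_{d,w'}=p_{d,w}+c$. Finally, if $w\ge 0$ then $p_{d,w}=d+w\ge 0$, so $p_{d,w}$ is a partial metric.

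For part (b), starting from a weak partial metric $p$ on $X$, I set $d=d_p$ and $w=w_p$ as in the statement. Nonnegativity of $d$ is (PM2). For (QM1): $d(x,y)=d(y,x)=0$ together with (PM3) gives $p(x,x)=p(x,y)=p(y,x)=p(y,y)$, and (PM1) yields $x=y$; the reverse implication is trivial. For (QM2), subtracting $p(x,x)$ from both sides of (PM4) and adding and subtracting $p(y,y)$ on the right produces exactly $d(x,z)\le d(x,y)+d(y,z)$. That $w$ weakly weights $d$ is just
\[d(x,y)+w(x)=p(x,y)=p(y,x)=d(y,x)+w(y),\]
where the middle equality is (PM3). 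For the $\sim$-invariance statement I would observe directly that shifting $p$ by a constant $c$ preserves each of (PM1)--(PM4) (in (PM4) the shifts $+c$, $+c$, $-c$ on the right collapse to $+c$, matching the left), so $p'=p+c$ is again a weak partial metric; moreover $d_{p'}(x,y)=(p(x,y)+c)-(p(x,x)+c)=d_p(x,y)$, while $w_{p'}=w_p+c$. If $p\ge 0$, then $w_p(x)=p(x,x)\ge 0$, so $w_p$ is a (nonnegative) weight.

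The main obstacle, mild as it is, is the $x=y$ direction of (PM1) in part (a): one has to use both halves of \eqref{eq:w_weight} together with (QM1) for $d$, rather than either alone. Everything else is a direct translation between the two axiom systems and the one-line observation that adding a constant commutes with all the defining identities, so no computation goes beyond a single substitution.
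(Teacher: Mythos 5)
Your proposal is correct and follows essentially the same route as the paper: (PM4) is obtained from (QM2) by adding and subtracting the weight, and conversely (QM2) from (PM4) by subtracting $p(x,x)$, with the remaining axioms and the $\sim$-invariance handled exactly as the paper does (the paper merely abbreviates the (PM1)/(PM2) checks and the constant-shift verifications as ``easily verified''). Your extra detail on the $x=y$ direction of (PM1) is a correct filling-in of one of those abbreviated steps.
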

\begin{proof}
	(a) Let  $w$ be a weak weight for the quasi-metric $d$ on $X$. Properties (PM1) and (PM2) are easily verified for $p_{d,w}$ and (PM3) descends from \eqref{eq:w_weight}. Let us show (PM4). If $x,y,z\in X$, then the triangle inequality (QM2) yields
	$$\begin{aligned}p_{d,w}(x,z)&\,=d(x,z)+w(x)\leq d(x,y)+w(x)+d(y,z)+w(y)-w(y)=\\
		&\,=p_{d,w}(x,y)+p_{d,w}(y,z)-p_{d,w}(y,y).\end{aligned}$$
	
	The second assertion can be easily verified, in view of Proposition~\ref{prop:w+c}, whereas the last assertion is clear by definition.
	
	(b) Let $p$ be a weak partial metric. Property (PM2) implies that $d_p\geq 0$. Property (QM1) for $d_p$ is satisfied because of (PM1) and (PM3). Let us  prove property (QM2) by using (PM4). For every $x,y,z\in X$ one has
	$$d_p(x,z)=p(x,z)-p(x,x)\leq p(x,y)+p(y,z)-p(y,y)-p(x,x)=d_p(x,y)+d_p(y,z).$$
	Finally, by construction, $w_p$ is a weak weight for $d_p$.
	
	The second assertion can be easily verified, also using Proposition~\ref{prop:w+c},  whereas the last assertion is clear by definition.
\end{proof}


\begin{corollary}\label{coro:wwqm_vs_wpm_one_to_one}
	Let $X$ be a non-empty set. 
	If $d$ is quasi-metric on $X$ weakly weighted by $w$ and $p$ is a weak partial metric on $X$, then $$d_{p_{d,w}}=d\quad \text{and}\quad p_{d_p,w_p}=p.$$
	So, there exists a one-to-one correspondence between weakly weighted quasi-metrics (respectively, weighted quasi-metrics) on $X$ and equivalence classes of weak partial metrics (respectively, partial metrics) on $X$. 
\end{corollary}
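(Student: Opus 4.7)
The plan is to derive this corollary directly from Theorem~\ref{theo:wwqm_vs_wpm} by first checking the two stated identities through an unfolding of definitions, and then packaging the constructions $(d,w)\mapsto p_{d,w}$ and $p\mapsto d_p$ into mutually inverse bijections at the level of equivalence classes.

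For the first identity, I would compute
$$d_{p_{d,w}}(x,y) = p_{d,w}(x,y) - p_{d,w}(x,x) = \bigl(d(x,y)+w(x)\bigr) - \bigl(d(x,x)+w(x)\bigr) = d(x,y),$$
using $d(x,x)=0$ from (QM1) in the last step. Symmetrically,
$$p_{d_p,w_p}(x,y) = d_p(x,y)+w_p(x) = \bigl(p(x,y)-p(x,x)\bigr)+p(x,x) = p(x,y),$$
which yields $p_{d_p,w_p}=p$. Both calculations are mechanical consequences of the definitions recorded in Theorem~\ref{theo:wwqm_vs_wpm}(a) and (b).

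For the bijective correspondence, I would define $\Phi$ sending a weakly weighted quasi-metric $d$ on $X$ to $[p_{d,w}]_\sim$, where $w$ is \emph{any} weak weight for $d$; this is well-defined by the second assertion of Theorem~\ref{theo:wwqm_vs_wpm}(a), which states that distinct weak weights produce $\sim$-equivalent weak partial metrics. Conversely, I would define $\Psi\colon [p]_\sim \mapsto d_p$, which is well-defined by the second assertion of Theorem~\ref{theo:wwqm_vs_wpm}(b) asserting $d_{p'}=d_p$ whenever $p'\sim p$, together with the fact that $d_p$ is weakly weighted (by $w_p$). The two identities computed above immediately give $\Psi\circ\Phi = \mathrm{id}$ and $\Phi\circ\Psi = \mathrm{id}$.

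For the parenthetical claim on \emph{weighted} quasi-metrics and \emph{partial} metrics, I would invoke the ``in particular'' parts of Theorem~\ref{theo:wwqm_vs_wpm}: if $w\geq 0$, then $p_{d,w}(x,y)=d(x,y)+w(x)\geq 0$ is a partial metric; conversely, if $p\geq 0$, then $w_p=p(\cdot,\cdot)\geq 0$ is a weight for $d_p$. Hence $\Phi$ and $\Psi$ restrict to inverse bijections on the respective sub-collections. I do not foresee any real obstacle here; the one mildly subtle point is the compatibility of the non-negativity conditions with the equivalence $\sim$, but this is precisely Proposition~\ref{prop:w+c}(b) on the quasi-metric side and an immediate check on the weak partial metric side.
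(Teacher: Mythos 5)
Your proposal is correct and follows essentially the same route as the paper: the same two one-line computations of $d_{p_{d,w}}=d$ and $p_{d_p,w_p}=p$, and the same appeal to the second assertions of Theorem~\ref{theo:wwqm_vs_wpm}(a) and (b) to see that $d\mapsto[p_{d,w}]_\sim$ and $[p]_\sim\mapsto d_p$ are well-defined mutually inverse maps. Your extra remarks on the weighted/partial-metric restriction (via the ``in particular'' clauses and Proposition~\ref{prop:w+c}(b)) only make explicit what the paper leaves implicit.
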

\begin{proof} 
	Let us consider a quasi-metric $d$ weakly weighted by $w$. By Theorem~\ref{theo:wwqm_vs_wpm}(a), $[p_{d,w}]_\sim$ does not depend on the choice of $w$, so the map $d\mapsto [p_{d,w}]_\sim$ is well-defined. 
	Moreover, for every $x,y\in X$,
	$$d_{p_{d,w}}(x,y)=p_{d,w}(x,y)-p_{d,w}(x,x)=d(x,y)+w(x)-d(x,x)-w(x)=d(x,y).$$
	In particular, $d_{p_{d,w}}$ does not depend on the choice of $w$.
	
	Conversely, let us take two weak partial metrics $p\sim p'$ on $X$. According to Theorem~\ref{theo:wwqm_vs_wpm}(b), $d_p=d_{p'}$ and it is weakly weighted, so the map $[p]_\sim\mapsto d_p$ is well-defined. 
	Moreover, for every $x,y\in X$,
	\[p_{d_{p},w_{p}}(x,y)=d_p(x,y)+w_p(x)=p(x,y)-p(x,x)+p(x,x)=p(x,y).\qedhere\]
\end{proof}


\begin{example}\label{strings}
	Let $\Sigma^\ast$ be the family of all strings, both finite and infinite, on the alphabet $\Sigma$. For every pair of strings $s,s^\prime\in\Sigma$, we denote by $\lvert s\rvert$ the length of $s$ and by $l(s,s^\prime)$ the length of the longest common prefix of both $s$ and $s^\prime$. Then we define a partial metric $p$ on $\Sigma^\ast$ by the law $p(s,s^\prime)=2^{-l(s,s^\prime)}$, for every $s,s^\prime\in\Sigma^\ast$. 
	
	Using Theorem~\ref{theo:wwqm_vs_wpm} we can associate to $p$ the quasi-metric $d_p$ (that is, $d_p(s,s^\prime)=2^{-l(s,s^\prime)}-2^{-\lvert s\rvert}$, for every $s,s^\prime\in\Sigma^\ast$), which is weighted by the function $w_p$ (namely, $w_p(s)=2^{-\lvert s\rvert}$, for every $s\in\Sigma^\ast$).
\end{example}

The following example from \cite{AssKou} provides a weak partial metric space $(X,p)$ coming from a biological setting. The authors call the weak partial metric {\em strong} to emphasise that the request (PM2) in Definition~\ref{def:w_partial_m} is replaced by the following stronger one:
\begin{compactenum}
	\item[(PM2S)] for every $x,y\in X$, $p(x,x)<p(x,y)$.
\end{compactenum}

\begin{example}\label{ex:strong_pm}
	Let $\Sigma^{<\infty}$ be the family of all finite strings on the 
	(finite) alphabet $\Sigma$. For two strings $\overline x=(x_1,\dots,x_n),\overline y=(y_1,\dots,y_m)\in\Sigma^{<\infty}$, an {\em alignment of $\overline x$ and $\overline y$} is given by two strings $\overline x^\prime=(x_1^\prime,\dots,x_k^\prime)$ and $\overline y^\prime=(y_1^\prime,\dots,y_k^\prime)$ of the same length that can be obtained from $\overline x$ and $\overline y$, respectively, by adding occurrences of a new, blank, character $\#\notin\Sigma$. Given such an alignment $\overline x^\prime$ and $\overline y^\prime$, we compute its {\em score} $s^\prime(\overline x^\prime,\overline y^\prime)$ by comparing the two strings at each position $i\in\{1,\dots,k\}$ and adding the following real values:
	\begin{compactenum}[-]
		\item $\alpha$ if $x_i^\prime=y_i^\prime\neq\#$;
		\item $\beta$ if $x_i^\prime\neq y_i^\prime$ and both $x_i^\prime\neq\#$ and $y_i^\prime\neq \#$;
		\item $\gamma$ if precisely one of the two characters $x_i^\prime$ and $y_i^\prime$ is $\#$;
		\item $0$ if $x_i^\prime=y_i^\prime=\#$.
	\end{compactenum}

	For example, the alignment
$$\begin{aligned}\overline x^\prime&\,=G\,\#\,A\,\#\,T\,T\,A\,C\,A\,\#\quad \text{and}\\
	\overline y^\prime&\,=G\,C\,A\,\#\,T\,C\,A\,C\,G\,A 
\end{aligned}$$
of the strings $\overline x=GATTACA$ and $\overline y=GCATCACGA$ on the alphabet $\Sigma=\{G,A,T,C\}$ has score $s^\prime(\overline x^\prime,\overline y^\prime)=5\alpha+2\beta+2\gamma+0$.

	Then we assign to two strings $\overline x$ and $\overline y$ the following value:
	$$s(\overline x,\overline y)=\max\{s^\prime(\overline x^\prime,\overline y^\prime)\mid\text{$\overline x^\prime$ and $\overline y^\prime$ form an alignment of $\overline x$ and $\overline y$}\}.$$
	The {\em score scheme} just described is a useful technique to compare partial DNA strands (see~\cite{EidJonTay}).
	
	The function $p\colon\Sigma^{<\infty}\times\Sigma^{<\infty}\to\R$ defined by $p(\overline x,\overline y)=-s(\overline x,\overline y)$, for every $\overline x,\overline y\in\Sigma^{<\infty}$, is a strong weak partial metric provided that $\alpha>\beta$, $\alpha>\gamma$, $\beta\geq 2\gamma$ and $\gamma<0$ (see~\cite[Proposition 2.1]{AssKou}).
	
	If we note that, for every $\overline x\in\Sigma^{<\infty}$, $\overline x$ and $\overline y$ provide the best alignment of $\overline x$ and $\overline y$ themselves (the one with the highest score), we can characterise the weakly weighted quasi-metric $d_p$ induced by $p$ as follows: for every $\overline x,\overline y\in\Sigma^{<\infty}$, $d_p(\overline x,\overline y)=\alpha\lvert\overline x\rvert-s(\overline x,\overline y).$
\end{example}

\subsection{Generalised partial metrics and their relation with weakly weighted generalised quasi-metrics}\label{gpmwwsec}

Inspired by the notions of weak partial metric and generalised quasi-metric space, we introduce the following.

\begin{definition}
	A {\em generalised weak partial metric $p$} on a non-empty set $X$ is a function $p\colon X\times X\to\R\cup\{\infty\}$ satisfying the properties (PM1)--(PM4) of Definition~\ref{def:w_partial_m} (with the usual convention that $b<\infty+a=a+\infty=\infty$, for every $a,b\in\R$) and the following further one:
	\begin{compactenum}[(PM5)]
		\item for every $x\in X$, $p(x,x)<\infty$.
	\end{compactenum}
	The pair $(X,p)$ is called {\em generalised weak partial metric space}.
\end{definition}

Note that property (PM4) is well-defined thanks to (PM5).

Similarly to what we have done for generalised quasi-metric spaces, we can introduce an equivalence relation $\simeq_p$ on a generalised partial metric space $(X,p)$ as follows: for every $x,y\in X$, $x\simeq_py$ if $p(x,y)<\infty$. Then the equivalence classes of $\simeq_p$ are called {\em $p$-connected components}. Moreover, for every $x\in X$, $\mathcal Q^p_X(x)$ denotes the $p$-connected component to which $x$ belongs, i.e., $[x]_{\simeq_p}$. To distinguish between these two notions and the correspondent ones defined for generalised quasi-metric spaces, we emphasise the role of $p$ in the notation. 

As one may expect, also weakly weighted generalised quasi-metric spaces and generalised weak partial metric spaces are two faces of the same coin.

\begin{theorem}\label{theo:wwgqm_vs_gpm}
	Let $X$ be a non-empty set.
	\begin{compactenum}[(a)]
		\item If $d$ is a generalised quasi-metric on $X$ which is weakly weighted by $w$, then the function $p_{d,w}\colon X\times X\to\R\cup\{\infty\}$ defined by putting 
		$$p_{d,w}(x,y):=d(x,y)+w(x),\quad \text{for every $x,y\in X$,}$$
		is a generalised weak partial metric on $X$.
		Moreover, $\cong_d=\simeq_{p_{d,w}}$.
		\item If $p$ is a generalised weak partial metric on $X$, then the function $d_p\colon X\times X\to\R_{\geq 0}\cup\{\infty\}$ defined by
		$$d_p(x,y):=p(x,y)-p(x,x),\quad\text{for every $x,y\in X$,}$$
		is a generalised quasi-metric weakly weighted by the function $w_p\colon X\to\R$ defined by $w_p(x):=p(x,x)$, for every $x\in X$.
		Moreover, $\cong_{d_p}=\simeq_p$.
	\end{compactenum}
\end{theorem}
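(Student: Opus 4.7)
The plan is to imitate closely the proof of Theorem~\ref{theo:wwqm_vs_wpm}, which already handled the finite case. The bulk of the algebraic verifications of the axioms (PM1)--(PM4) and (QM1)--(QM2) carry over verbatim whenever all distances are finite; the only genuinely new work is to (i) check the additional axiom (PM5), (ii) confirm that the relevant expressions still make sense under the conventions $b < a+\infty = \infty+a = \infty$ adopted in Definition~\ref{gqmdef}, and (iii) verify that the two equivalence relations agree. I would therefore structure the argument into the two natural parts (a) and (b), in each case dispatching the finite-value content by a brief reference to the proof of Theorem~\ref{theo:wwqm_vs_wpm}.

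For part (a), the crucial observation is that $p_{d,w}(x,x) = d(x,x) + w(x) = w(x) \in \R$, which simultaneously gives (PM5) and ensures that the ``only if'' direction of (PM1) can be treated as in the finite case: if $p_{d,w}(x,x) = p_{d,w}(y,y) = p_{d,w}(x,y)$, then the common value $w(x) = w(y)$ is finite, so $d(x,y) = 0$, and then \eqref{eq:w_weight} forces $d(y,x) = 0$, whence $x=y$ by (QM1). Properties (PM2), (PM3), (PM4) extend painlessly: (PM2) reads $w(x) \le d(x,y) + w(x)$ regardless of whether $d(x,y)$ is finite; (PM3) is exactly \eqref{eq:w_weight}, and both sides are simultaneously finite or simultaneously $\infty$; (PM4) is the triangle inequality (QM2) after adding $w(x)$ to both sides, and is vacuously satisfied when the right-hand side is $\infty$. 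For the equivalence of connected components, I would use \eqref{eq:w_weight} to observe that $d(x,y) < \infty$ iff $d(y,x) < \infty$, and hence $x \cong_d y$ iff $d(x,y) < \infty$ iff $p_{d,w}(x,y) < \infty$ iff $x \simeq_{p_{d,w}} y$.

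For part (b), axiom (PM5) legitimises the subtraction in the definition $d_p(x,y) := p(x,y) - p(x,x)$, so $d_p$ is a well-defined function into $\R_{\ge 0} \cup \{\infty\}$ (non-negativity comes from (PM2)). Property (QM1) follows from (PM1) and (PM3) as in Theorem~\ref{theo:wwqm_vs_wpm}(b), while the triangle inequality (QM2) is obtained from (PM4) by the identical computation, with the $\infty$-conventions rendering the inequality trivial when either $d_p(x,y)$ or $d_p(y,z)$ equals $\infty$. That $w_p(x) := p(x,x)$ is a weak weight for $d_p$ is just (PM3) rearranged, and since $w_p$ is $\R$-valued by (PM5) it fits the framework of Definition~\ref{def:w_weighted}. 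Finally, because $p(x,x)$ is always finite, one has $d_p(x,y) < \infty$ iff $p(x,y) < \infty$, which yields $\cong_{d_p} = \simeq_p$. The only potential obstacle is bookkeeping with $\infty$, but the arithmetic conventions make every step routine once it is recognised that $w$ and the diagonal values $p(x,x)$ are forced to be finite.
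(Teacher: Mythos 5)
Your proposal is correct and follows exactly the route the paper intends: the published proof simply says the result is ``an easy adaptation'' of Theorem~\ref{theo:wwqm_vs_wpm} and hints that the coincidence of the equivalence relations uses the symmetry of the partial metric, which is precisely what you carry out (your key observations --- that $p_{d,w}(x,x)=w(x)$ and $p(x,x)$ are forced to be finite, and that \eqref{eq:w_weight} makes $d(x,y)$ and $d(y,x)$ simultaneously finite or infinite --- are the whole content of the adaptation). You have merely written out the bookkeeping that the authors leave implicit, so no further changes are needed.
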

\begin{proof}
	The proof is an easy adaptation of that of Theorem~\ref{theo:wwqm_vs_wpm}. 
	To verify the coincidence of the equivalence relations, use the symmetry of the partial metrics.
\end{proof}

In order to have the complete extension of Theorem~\ref{theo:wwqm_vs_wpm}, we need to discuss the following.

Let $d$ be a generalised quasi-metric on a non-empty set $X$ weakly weighted by $w$ and $w^\prime$, and let $\{X_i\}_{i\in I}$ represent the partition of $X$ in its connected components with respect to $d$. They coincide with the $p$-connected components with respect to $p_{d,w}$ and $p_{d,w'}$ in view of Theorem~\ref{theo:wwgqm_vs_gpm}(a).
Then $p_{d,w}\approx_{\{X_i\times X_i\}_i}p_{d,w^\prime}$ (see \eqref{simsim} for the definition of $\approx$). 

On the other hand, let $p$ and $p^\prime$ be two generalised weak partial metrics on $X$, with $\simeq_p=\simeq_{p^\prime}$ and $p\approx_{\{X_i\times X_i\}_i}p^\prime$, where $\{X_i\}_{i\in I}$ is the family of all $p$-connected components of $X$. They coincide with the connected components with respect to $d_p$ and $d_{p'}$ by Theorem~\ref{theo:wwgqm_vs_gpm}(b). Then $d_p=d_{p^\prime}$ and $w_p\approx_{\{X_i\}_i}w_{p^\prime}$. 

Now, as in Corollary~\ref{coro:wwqm_vs_wpm_one_to_one}, we obtain  the following correspondence.

\begin{corollary}
	\label{cor:pm ww} 
Let $X$ be a non-empty set.
If $d$ is a generalised quasi-metric on $X$ weakly weighted by $w$ and $p$ is a generalised weak partial metric on $X$, then $d_{p_{w}}=d$ and $p_{d_p,w_p}= p$.
	
Hence, 
there is a one-to-one correspondence between weakly weighted generalised quasi-metrics on $X$ and equivalence classes with respect to  $\approx_{\{X_i\times X_i\}_i}$ 
of generalised weak partial metrics on $X$ inducing the same partition into ($p$-)connected components $\{X_i\}_{i\in I}$. 
\end{corollary}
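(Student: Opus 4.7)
The plan is to mirror the proof of Corollary~\ref{coro:wwqm_vs_wpm_one_to_one}, substituting Theorem~\ref{theo:wwgqm_vs_gpm} for Theorem~\ref{theo:wwqm_vs_wpm} and being careful with infinity arithmetic. First I would verify the two pointwise identities. For $d_{p_{d,w}}=d$: by the defining formulas,
\[
d_{p_{d,w}}(x,y)=p_{d,w}(x,y)-p_{d,w}(x,x)=d(x,y)+w(x)-d(x,x)-w(x)=d(x,y),
\]
which is legitimate even when $d(x,y)=\infty$ because $w(x)\in\R$ and $d(x,x)=0$, so both sides are $\infty$ in that case. For $p_{d_p,w_p}=p$:
\[
p_{d_p,w_p}(x,y)=d_p(x,y)+w_p(x)=p(x,y)-p(x,x)+p(x,x)=p(x,y),
\]
which is legitimate because $p(x,x)<\infty$ by (PM5).

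Next, I would establish the one-to-one correspondence. Let $\{X_i\}_{i\in I}$ be the partition of $X$ into the connected components with respect to $d$. By Theorem~\ref{theo:wwgqm_vs_gpm}(a) this also coincides with the partition into $p_{d,w}$-connected components, so the map $d\mapsto[p_{d,w}]_{\approx_{\{X_i\times X_i\}_i}}$ has a well-defined target. By Corollary~\ref{approx}, any other weak weight $w'$ for $d$ satisfies $w'\restriction_{X_i}=w\restriction_{X_i}+c_i$ for some $c_i\in\R$, and a direct computation then gives $p_{d,w'}\restriction_{X_i\times X_i}=p_{d,w}\restriction_{X_i\times X_i}+c_i$, so $p_{d,w'}\approx_{\{X_i\times X_i\}_i}p_{d,w}$; hence the map is independent of the choice of $w$. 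In the other direction, if $p$ and $p'$ are generalised weak partial metrics on $X$ with the same partition $\{X_i\}_{i\in I}$ into $p$-connected components and $p\approx_{\{X_i\times X_i\}_i}p'$, say $p'\restriction_{X_i\times X_i}=p\restriction_{X_i\times X_i}+c_i$, then for $(x,y)\in X_i\times X_i$
\[
d_{p'}(x,y)=p'(x,y)-p'(x,x)=p(x,y)+c_i-p(x,x)-c_i=d_p(x,y),
\]
while outside the components both sides equal $\infty$; hence $d_{p'}=d_p$, and $[p]_{\approx}\mapsto d_p$ is well-defined.

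Finally, the two identities $d_{p_{d,w}}=d$ and $p_{d_p,w_p}=p$ imply that the two well-defined maps are mutually inverse, establishing the bijection between weakly weighted generalised quasi-metrics on $X$ and $\approx_{\{X_i\times X_i\}_i}$-equivalence classes of generalised weak partial metrics on $X$ inducing the partition $\{X_i\}_{i\in I}$. The main obstacle I would expect is the careful bookkeeping around $\infty$: specifically, one must check that each subtraction $p(x,y)-p(x,x)$ or $d(x,y)+w(x)-d(x,x)-w(x)$ is meaningful regardless of whether $d(x,y)$ or $p(x,y)$ is infinite. This is resolved by (PM5) together with the requirement that weak weights take values in $\R$, and by the fact — guaranteed by Theorem~\ref{theo:wwgqm_vs_gpm} — that $\cong_d$ and $\simeq_p$ coincide for corresponding objects, so the ``infinite locus'' matches up on both sides of the correspondence.
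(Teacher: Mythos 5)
Your proposal is correct and follows essentially the same route as the paper: the paper likewise verifies $d_{p_{d,w}}=d$ and $p_{d_p,w_p}=p$ by the same two computations (adapting Corollary~\ref{coro:wwqm_vs_wpm_one_to_one}) and establishes well-definedness of both maps on $\approx_{\{X_i\times X_i\}_i}$-classes via Theorem~\ref{theo:wwgqm_vs_gpm}, exactly as you do. Your explicit bookkeeping of the $\infty$ cases, resolved by (PM5) and the finiteness of weak weights, is a point the paper leaves implicit but is handled correctly here.
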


Let us provide an example of a generalised weak partial metric (and so, according to Theorem~\ref{theo:wwgqm_vs_gpm}, of a weakly weighted generalised quasi-metric) inspired by Example~\ref{strings}.

\begin{example}\label{ex:gwpm}
	Let $\Sigma$ be an alphabet. Define the set
	$$\Sigma_{-\infty}^{\ast}=\{(x_n)_{n\in(-\infty,m)\cap\Z}\mid m\in\Z\cup\{\infty\},\,\forall n<m, x_n\in\Sigma\}.$$
	Let us fix some notation. For every $\overline x=(x_n)_{n\in(-\infty,m)\cap\Z}\in\Sigma_{-\infty}^{\ast}$, $L_{\overline x}\in\Z$ denotes the last index for which $\overline x$ is defined. More explicitly, $L_{\overline x}=m-1$. Then, for every $k\leq L_{\overline x}$, $\overline x[k]$ denotes the substring obtained by cutting $\overline x$ at the level $k$, i.e., $\overline x[k]=(x_n)_{n\in(-\infty,k]\cap\Z}$. Set now, for every $\overline x,\overline y\in\Sigma_{-\infty}^{\ast}$,
	$$l(\overline x,\overline y)=\sup\{k\in\Z\mid k\leq\min\{L_{\overline x},L_{\overline y}\},\,\overline x[k]=\overline y[k]\}.$$
	Then the function $p\colon\Sigma_{-\infty}^{\ast}\times\Sigma_{-\infty}^{\ast}\to\R\cup\{\infty\}$ defined by putting, for every $\overline x,\overline y\in\Sigma_{-\infty}^{\ast}$, $p(\overline x,\overline y)=2^{-l(\overline x,\overline y)}$ is a generalised weak partial metric on $\Sigma_{-\infty}^{\ast}$. If $\Sigma$ has at least two elements $\sigma_1,\sigma_2$, it is not a weak partial metric; in fact, the distance between the constant sequences $(\sigma_1)_{n\in\Z}$ and $(\sigma_2)_{n\in\Z}$ is infinite. Moreover, according to Theorem~\ref{theo:wwgqm_vs_gpm}, $d_p$ is a weakly weighted generalised quasi-metric.
\end{example}

\begin{remark}
	Let now $(X,d)$ be a generalised quasi-metric space and $w$ be a componentwise weak weight for it. Then it can be easily shown that the function $p_{d,w}$ defined as 
	$$p_{d,w}(x,y)=\begin{cases}
		\begin{aligned} & d(x,y)+w(x) &\text{if $x\cong_d y$,}\\
			& \infty &\text{otherwise,}
		\end{aligned}
	\end{cases}$$for every $x,y\in X$, is indeed a generalised weak partial metric. 
By Corollary~\ref{cor:pm ww}, it follows that every componentwise weak weight induces a weakly weighted generalised quasi-metric on $X$ which can be explicitly described as follows: for every $x,y\in X$,
	$$d_{p_{d,w}}(x,y)=\begin{cases}\begin{aligned}
			& d(x,y) &\text{if $x\cong_dy$,}\\
			& \infty &\text{otherwise.}
	\end{aligned}\end{cases}$$
	In particular, the identity map $id_X\colon(X,d_{p_{d,w}})\to(X,d)$ is non-expansive and $d_{p_{d,w}}$ is the coarsest weakly weighted generalised quasi-metric on $X$ having this property, i.e., if $d^\prime$ is a weakly weighted generalised quasi-metric on $X$ such that $id_X\colon(X,d^\prime)\to(X,d)$ is non-expansive, then $id_X\colon(X,d^\prime)\to(X,d_{p_{d,w}})$ is non-expansive. 
	
	For example, let $\Gamma=(V,E)$ be a directed graph endowed with the generalised quasi-metric $d_\Gamma$ and suppose that each strongly connected component of $\Gamma$ is a non-directed graph (see \S\ref{ss:ww} and \S\ref{ss:ww_gen}). In other words, $d_\Gamma$ is componentwisely weakly weighted, say by $w_\Gamma$.  In this case, the quasi-metric $d_{p_{d_{_\Gamma},w_{_\Gamma}}}$  coincides with $d_{\Gamma'}$, where $\Gamma'$ is obtained as disjoint union of all strongly connected components of $\Gamma$.
\end{remark}

\section{Generalised quasi-metric semilattices}
\label{sec:quasi-metric_semilattices}

Let $(X,\leq)$ be a {\em partially ordered set} (briefly, {\em poset}), i.e., a pair of a set $X$ and a {\em partial order} $\leq$ on it, which is a reflexive, transitive and antisymmetric relation. 
The {\em dual poset} $(X,\geq)$ consists of the same set $X$ and the \emph{inverse order} $\geq$ (i.e., $x\geq y$ if and only if $y\leq x$). A subset $Y$ of $X$ is said to be {\em convex} if, for every $x,y,z\in X$, $y\in Y$ provided that $x\leq y\leq z$ and $x,z\in Y$. 

A {\em meet-semilattice} (respectively, {\em join-semilattice}) is a poset $(X,\leq)$ such that for every $x,y\in X$ the {\em meet} $x\wedge y=\inf\{x,y\}$ (respectively, the {\em join} $x\vee y=\sup\{x,y\}$) exists. If a poset $(X,\leq)$ is simultaneously a meet- and a join-semilattice, then it is a {\em lattice}.

\begin{remark}\label{rem:duality_join_meet}
The poset $(X,\leq)$ is a meet-semilattice if and only if $(X,\geq)$ is a join-semilattice. More precisely, we have that $x\wedge_{\leq}y=x\vee_{\geq}y$, for every $x,y\in X$.
\end{remark}

We write simply {\em semilattice} to indicate either a meet-semilattice or a join-semilattice when the statement holds for both of them; 
in such a case we use an $\ast$ to indicate the semilattice operation (indeed a semilattice can be seen as a commutative semigroup where every object is idempotent). 
When  maintaining the distinction between meet- and join-semilattices is instead necessary, sometimes we write results explicitly only for meet-semilattices, while the dual statements can be similarly deduced using Remark~\ref{rem:duality_join_meet}. 

\smallskip
An equivalence relation $\cong$ on a semilattice $(X,\ast)$ is a {\em congruence} if, for every $x,y,z,w\in X$, $x\ast z\cong y\ast w$ provided that $x\cong y$ and $z\cong w$.

\begin{fact}\label{prop:congruence}
Let $(X,\ast)$ be a semilattice and $\cong$ a congruence on it. Then every equivalence class of $\cong$ is a convex subsemilattice of $X$.
\end{fact}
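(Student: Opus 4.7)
The plan is to unpack what is required (closure under $\ast$ and convexity) and prove each of them directly from the idempotence of the semilattice operation and the congruence property. By Remark~\ref{rem:duality_join_meet}, it is enough to deal with a meet-semilattice, so I will write $\ast=\wedge$ and use the characterisation $x\leq y \Leftrightarrow x\wedge y = x$.

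First I would verify subsemilattice closure. Fix an equivalence class $C=[a]_{\cong}$ and take $x,y\in C$, so $x\cong a$ and $y\cong a$. Applying the congruence property to these two relations yields $x\wedge y \cong a\wedge a$, and since $\ast$ is idempotent we have $a\wedge a = a$. Thus $x\wedge y\cong a$, that is, $x\wedge y\in C$.

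Next I would verify convexity. Suppose $x\leq y\leq z$ with $x,z\in C$; we want $y\in C$. The key observation is that $x\leq y$ translates to the equation $x\wedge y=x$, while $y\leq z$ gives $y\wedge z = y$. Starting from $x\cong z$ (which holds because both are in $C$) and $y\cong y$ (by reflexivity), the congruence property supplies
\[
x\wedge y \;\cong\; z\wedge y.
\]
Rewriting the two sides via the previous equations gives $x\cong y$, so $y\in[x]_{\cong}=C$.

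Neither step presents a serious obstacle; the only mild subtlety is remembering that convexity must be deduced from an algebraic identity (namely $x\wedge y=x$) rather than directly from the order, since the congruence hypothesis is stated purely in terms of the operation $\ast$. If one instead worked with a join-semilattice, the dual computation uses $x\leq y\Leftrightarrow x\vee y=y$ and $y\vee z=z$, leading to $y=x\vee y\cong z\vee y=z$ and hence $y\in C$ again.
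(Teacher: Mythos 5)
Your proof is correct and follows essentially the same route as the paper's: closure under $\ast$ via idempotence and the congruence property, and convexity by applying the congruence to $x\cong z$ and $y\cong y$ to get $x\ast y\cong z\ast y$, then identifying the two sides using $x\leq y\leq z$ in both the meet and join cases. No gaps.
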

\begin{proof}
Let $x,y\in X$ satisfying $x\cong y$. Then $x\ast y\cong x\ast x=x$, thus $[x]_{\cong}$ is a subsemilattice of $X$. To prove that $[x]_{\cong}$ is convex, take two points $y,z\in X$ such that $x\leq y\leq z$ and $z\cong x$. Then $x\ast y\cong z\ast y$, and thus $x\cong y$ (if $\ast=\wedge$, then $x\wedge y=x$ and $z\wedge y=y$, while, if $\ast=\vee$, then $x\vee y=y$ and $z\vee y=z\cong x$).
\end{proof}


\subsection{The specialisation order of a generalised quasi-metric}\label{specordersec}


Every generalised quasi-metric space $(X,d)$ comes with the partial order $\leq_d$, called {\em specialisation order}, 
 i.e., 
 for $x,y\in X$, $x\leq_dy$ if and only if $d(x,y)=0$. 
\begin{remark}
 The partial order $\leq_d$ is indeed nothing but the specialisation order of the quasi-metric topology $\tau_d$ 
 induced 
 by $d$ on $X$.  Let us recall that 
$\tau_d$ is the $T_0$-topology generated by the base $\{B^d_\eps(x)\mid x\in X,\varepsilon>0\}$, where $B^d_\eps(x)=\{y\in X\mid d(x,y)<\eps\}$ is the $\varepsilon$-ball around the point $x\in X$.  Moreover, if $\tau$ is a topology on a set $X$, then the {\em specialisation order} $\leq_\tau$ of $\tau$ is defined as follows: for $x,y\in X$, $x\leq_\tau y$ if $y$ is contained in the closure of $\{x\}$.
\end{remark}

\begin{fact}[Monotonicity]\label{prop:mono_d}
Let $(X,d)$ be a generalised quasi-metric space and $x_1,x_2,y_1,y_2\in X$. Then 
\[d(x_1,y_1)\leq d(x_2,y_2)\ \text{provided}\ x_1\leq_d x_2\ \text{and}\ y_2\leq_d y_1.\]
\end{fact}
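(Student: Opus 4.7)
The plan is straightforward: two applications of the triangle inequality (QM2), inserting the intermediate points $x_2$ and $y_2$ in the right order, will do the job. The hypotheses $x_1 \leq_d x_2$ and $y_2 \leq_d y_1$ unpack, by definition of the specialisation order, as $d(x_1,x_2)=0$ and $d(y_2,y_1)=0$; these two zeros are exactly what will make the two triangle-inequality steps collapse into the desired inequality.

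More concretely, I would first apply (QM2) to the triple $(x_1,x_2,y_1)$ to obtain
\[
d(x_1,y_1) \leq d(x_1,x_2) + d(x_2,y_1) = d(x_2,y_1),
\]
using $d(x_1,x_2)=0$. Then I would apply (QM2) again to the triple $(x_2,y_2,y_1)$ to obtain
\[
d(x_2,y_1) \leq d(x_2,y_2) + d(y_2,y_1) = d(x_2,y_2),
\]
using $d(y_2,y_1)=0$. Chaining the two gives $d(x_1,y_1) \leq d(x_2,y_2)$.

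I do not anticipate any obstacle here: the only mild care needed is handling the convention $r + \infty = \infty$ from (QM2), but since we are bounding a single quantity from above, possibly infinite values of $d(x_2,y_2)$ pose no issue. The whole argument is a two-line calculation and requires none of the weak-weight machinery developed earlier in the paper.
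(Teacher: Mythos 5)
Your proof is correct and is essentially the paper's own argument: the paper chains the same two triangle-inequality applications into the single estimate $d(x_1,y_1)\leq d(x_1,x_2)+d(x_2,y_2)+d(y_2,y_1)=d(x_2,y_2)$, using exactly the two vanishing distances you identify. No issues.
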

\begin{proof}
Take four points $x_1,x_2,y_1,y_2\in X$ such that $x_1\leq_d x_2$ and $y_2\leq_d y_1$. Then the triangle inequality (QM2) implies
\[d(x_1,y_1)\leq d(x_1,x_2)+d(x_2,y_2)+d(y_2,y_1)=d(x_2,y_2),\]
since $d(x_1,x_2)=0=d(y_2,y_1)$.
\end{proof}

\begin{remark}\label{rem:convex}
Given  a generalised quasi-metric space $(X,d)$, for every $x\in X$, $\mathcal Q(x)$ is convex.
In fact, fix $\overline x\in X$, take $x,y,z\in X$, and suppose that $x,z\in\mathcal Q(\overline x)$ and $x\leq_d y\leq_d z$. Since $z\in\mathcal Q(\overline x)=\mathcal Q(x)$ and  $d(y,z)=0$ by hypothesis, we get $d(y,x)\leq d(y,z)+d(z,x)=d(z,x)<\infty$. Hence, $y\in\mathcal Q(x)$ as $d(y,x)=0$ by assumption.
\end{remark}

We extend the following definition from~\cite{Sch96} (see also \cite{Sch_1} where it is called {\em order-convexity}) 
to every generalised quasi-metric space.

\begin{definition}\label{def:DPC}
For a generalised quasi-metric space $(X,d)$, the {\em descending path condition} is:
\begin{compactenum}
\item[(DPC)] for every 
$x,y,z\in X$ satisfying $x\geq_dy\geq_dz$, one has $d(x,z)=d(x,y)+d(y,z)$.
\end{compactenum}
\end{definition}

The following result immediately descends from the definition of {\rm(DPC)}.

\begin{fact}\label{DPCsub}
Let $(X,d)$ be a generalised quasi-metric space and $Y$ a subset of $X$. 
If $(X,d)$ satisfies {\rm(DPC)}, then $(Y,d\restriction_Y)$ satisfies {\rm(DPC)}.
\end{fact}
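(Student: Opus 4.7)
The plan is to unpack the definitions and observe that the specialisation order restricts cleanly, after which (DPC) for $X$ transfers verbatim to $Y$.

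First I would note that for any two points $y,x\in Y$ we have $y\leq_{d\restriction_Y} x$ if and only if $(d\restriction_Y)(y,x)=0$, which by definition of the restriction is equivalent to $d(y,x)=0$, i.e.\ $y\leq_d x$. Thus the specialisation order of $(Y,d\restriction_Y)$ is exactly the restriction to $Y$ of the specialisation order of $(X,d)$.

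Next, I would take an arbitrary chain $x,y,z\in Y$ with $x\geq_{d\restriction_Y} y\geq_{d\restriction_Y} z$. By the previous step this is the same as $x\geq_d y\geq_d z$ in $X$. Applying (DPC) for $(X,d)$ yields $d(x,z)=d(x,y)+d(y,z)$. Since $x,y,z\in Y$, each of these three distances agrees with the corresponding value of $d\restriction_Y$, so $(d\restriction_Y)(x,z)=(d\restriction_Y)(x,y)+(d\restriction_Y)(y,z)$, which is precisely (DPC) for $(Y,d\restriction_Y)$.

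There is no real obstacle here: the fact is essentially a tautology once one checks that the specialisation order behaves well under restriction. The only thing worth flagging is that (DPC) is phrased purely in terms of three points and the distances among them, with no condition requiring the existence of intermediate points outside the chain, which is exactly what makes the property hereditary to subsets.
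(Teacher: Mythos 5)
Your proof is correct and matches the paper's treatment: the paper simply observes that the fact ``immediately descends from the definition of (DPC)'', and your write-up is exactly the routine verification behind that remark, including the (correct) observation that the specialisation order of $(Y,d\restriction_Y)$ is the restriction of $\leq_d$.
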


As  a straightforward 
consequence of Fact~\ref{DPCsub}, if a generalised quasi-metric space satisfies 
{\rm (DPC)}, 
then so does each connected component. Moreover, 
(DPC) 
can be separately verified in each connected component.

\begin{proposition}\label{DPCQ}
Let $(X,d)$ be a generalised quasi-metric space. Then $(X,d)$ satisfies {\rm(DPC)} if and only if each connected component $(\mathcal Q(x),d\restriction_{\mathcal Q(x)})$ 
of $X$ does it.
\end{proposition}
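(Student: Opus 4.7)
The forward implication is immediate from Fact~\ref{DPCsub}, which says that (DPC) descends to every subset; in particular it descends to each connected component equipped with the restriction of $d$.

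For the converse, I plan to argue directly. Assume each connected component $(\mathcal Q(x), d\restriction_{\mathcal Q(x)})$ satisfies (DPC) and fix a triple $x,y,z\in X$ with $x\geq_d y\geq_d z$, i.e., $d(y,x)=d(z,y)=0$. The triangle inequality always yields $d(x,z)\leq d(x,y)+d(y,z)$, so only the reverse inequality requires work; I will split into cases according to whether $d(x,z)$ is finite.

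If $d(x,z)=\infty$, then the inequality above forces $d(x,y)+d(y,z)=\infty$ as well, and the conventions on arithmetic with $\infty$ from Definition~\ref{gqmdef} give the desired equality automatically. If instead $d(x,z)<\infty$, the plan is to show that $x$, $y$ and $z$ all lie in the same connected component and then invoke the componentwise (DPC). Indeed, $d(z,x)\leq d(z,y)+d(y,x)=0$ gives $d(z,x)=0$, and the triangle inequality also yields $d(x,y)\leq d(x,z)+d(z,y)=d(x,z)<\infty$ and $d(y,z)\leq d(y,x)+d(x,z)=d(x,z)<\infty$. Combined with $d(y,x)=d(z,y)=0$, this shows that $d^s(x,y)$, $d^s(y,z)$ and $d^s(x,z)$ are all finite, so $x,y,z\in\mathcal Q(x)$. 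Since the specialisation order of the restriction $d\restriction_{\mathcal Q(x)}$ coincides with $\leq_d$ on $\mathcal Q(x)$, applying (DPC) inside this component gives $d(x,z)=d(x,y)+d(y,z)$.

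There is no serious obstacle here; the only delicate point is the case $d(x,z)=\infty$, which is handled uniformly by the extended arithmetic, and ensuring that all three points really do land in a single component before applying the hypothesis, which is guaranteed by the finiteness estimates above together with the convexity observed in Remark~\ref{rem:convex}.
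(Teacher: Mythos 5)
Your proposal is correct and follows essentially the same route as the paper: the forward direction via Fact~\ref{DPCsub}, and the converse by splitting on whether $d(x,z)$ is finite, noting that $d(z,x)=0$ places $z$ (and then $y$) in $\mathcal Q(x)$ so the componentwise hypothesis applies. The only cosmetic difference is that you obtain $y\in\mathcal Q(x)$ by direct triangle-inequality estimates, whereas the paper simply invokes the convexity of $\mathcal Q(x)$ from Remark~\ref{rem:convex}; both are valid.
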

\begin{proof}
The `only if' implication follows from Fact~\ref{DPCsub}.  Conversely, suppose 
that $\mathcal Q(x)$ satisfies {\rm(DPC)} for every $x\in X$. 
Let $x\geq_dy\geq_dz$ be elements of $X$, for which (QM2) yields $d(x,z)\leq d(x,y)+d(y,z)$. If $d(x,z)=\infty$ then $d(x,z)=d(x,y)+d(y,z)$. Assume now that $d(x,z)<\infty$, and so $z\in \mathcal Q(x)$. Since $\mathcal Q(x)$ is convex by Remark~\ref{rem:convex}, also $y\in\mathcal Q(x)$. 
Thus $d(x,z)=d(x,y)+d(y,z)$ since $\mathcal Q(x)$ satisfies {\rm(DPC)}.
\end{proof}

Proposition~\ref{DPCQ} was proved in a more restrictive context in~\cite{DGBKTZ}.

\subsection{Invariant generalised quasi-metric semilattices}\label{invariantsec}

Given a poset $(X,\leq)$, we say that a (generalised) quasi-metric $d$ on $X$ is \emph{compatible} with the partial order of $X$ if $\leq_d=\leq$. 
In the rest of this section, we focus on generalised quasi-metrics defined on semilattices that are compatible with the algebraic structure of the  underlying space.

\begin{definition}\label{invdef}
A {\em generalised quasi-metric semilattice} is a generalised quasi-metric space $(X,d)$ such that the poset $(X,\leq_d)$ is a semilattice. 
A generalised quasi-metric meet-semilattice (respectively, join-semilattice) satisfying $d(x,y)=d(x,x\wedge y)$ (respectively, $d(x,y)=d(x\vee y,y)$) for all $x,y\in X$ is said to be {\em invariant}. 
\end{definition}

In the above definition of (generalised) quasi-metric semilattice, we do not require the semilattice operation to be quasi-uniformly continuous. This slight abuse of terminology is motivated by the fact that we mainly deal with invariant quasi-metric semilattices, which automatically meet the continuity requirement (see Proposition \ref{invth}(b)).

In the sequel, when we work with generalised quasi-metric semilattices $(X,\leq_d)$, we 
simply denote by $\leq$ the specialisation order $\leq_d$ if there is no risk of ambiguity.

\begin{remark}
As a consequence of Remark~\ref{rem:duality_join_meet}, $(X,d)$ is a generalised quasi-metric meet-semilattice if and only if $(X,d^{-1})$ is a generalised quasi-metric join-semilattice. More explicitly,  
$\leq_{d^{-1}}$ is the inverse order of $\leq_d$ and 
$x\wedge_{\leq_{d}}y=x\vee_{\leq_{d^{-1}}}y$, for every $x,y\in X$. Moreover, $(X,d)$ is invariant if and only if so is $(X,d^{-1})$.
\end{remark}

Similarly to {\rm(DPC)} that is inherited by subspaces, invariance is inherited by subsemilattices as stated in the following result.

\begin{fact}\label{fact:inv}
Let $X$ be a generalised quasi-metric semilattice and $Y$ a subsemilattice of $X$ endowed with the induced generalised quasi-metric. 
If $X$ is invariant, then so is $Y$.
\end{fact}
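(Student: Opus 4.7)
The plan is to verify the two ingredients that make $Y$ an invariant generalised quasi-metric semilattice: that the specialisation order on $Y$ coming from $d\restriction_Y$ matches the one inherited from $X$, and that semilattice operations computed in $Y$ coincide with those computed in $X$. Once both are established, the invariance identity for $Y$ will follow immediately from the invariance identity for $X$.

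First I would argue that for $x,y\in Y$ one has $x\leq_{d\restriction_Y} y$ iff $d\restriction_Y(x,y)=0$ iff $d(x,y)=0$ iff $x\leq_d y$; hence the specialisation order of $(Y,d\restriction_Y)$ is simply the restriction of $\leq_d$ to $Y$. Next, assuming (as stated) that $Y$ is a subsemilattice of $X$, the meet $x\wedge_X y$ lies in $Y$ for every $x,y\in Y$. Since $x\wedge_X y$ is the greatest lower bound of $\{x,y\}$ with respect to $\leq_d$ and lies in $Y$, it is also the greatest lower bound with respect to the restricted order, i.e.\ $x\wedge_Y y=x\wedge_X y$. In particular, $(Y,d\restriction_Y)$ is a generalised quasi-metric meet-semilattice.

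Finally, invariance transfers: for every $x,y\in Y$,
\[
d\restriction_Y(x,y)=d(x,y)=d(x,x\wedge_X y)=d(x,x\wedge_Y y)=d\restriction_Y(x,x\wedge_Y y),
\]
where the second equality uses invariance of $d$ on $X$ and the third uses the agreement of meets. The join-semilattice case is dual, using $x\vee_Y y=x\vee_X y$ in the analogous identity $d(x,y)=d(x\vee y,y)$.

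There is no serious obstacle here; the only point that requires a moment of care is the coincidence of meets (or joins) computed in $Y$ with those computed in $X$, which is what makes the invariance identity on $X$ directly specialise to $Y$. Without the hypothesis that $Y$ is a subsemilattice this would fail in general, even though $Y$ might still inherit a semilattice structure from $(Y,\leq_d\restriction_Y)$ with a different meet operation.
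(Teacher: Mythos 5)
Your proof is correct and is exactly the routine verification the paper has in mind (the paper states this Fact without proof): the specialisation order on $Y$ is the restriction of $\leq_d$, meets (joins) in $Y$ agree with those in $X$ because $Y$ is closed under the operation, and the invariance identity then restricts verbatim. No issues.
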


Let us see some useful characterisations of invariance. The proof can be obtained by adjusting that provided in \cite{DGBKTZ}.

\begin{proposition}\label{invth}
Let $(X,d)$ be a generalised quasi-metric meet-semilattice. Then the following properties are equivalent:
\begin{compactenum}[(a)]
\item $X$ is invariant;
\item for every $x,y,z\in X$, $d(z\wedge x,z\wedge y)\leq d(x,y)$ (i.e., for every $z\in X$, the shift $s_z\colon x\mapsto z\wedge x$ is non-expansive);
\item for every 
$x,y,z\in X$, $d(x,y\wedge z)\leq d(x,y)+d(x,z)$ (i.e., for every $x\in X$, $d(x,\cdot)\colon X\to\R_{\geq 0}\cup\{\infty\}$ is subadditive).
\end{compactenum}
\end{proposition}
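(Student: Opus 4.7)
The plan is to establish the cyclic chain (a)$\,\Rightarrow\,$(c)$\,\Rightarrow\,$(b)$\,\Rightarrow\,$(a) using only the triangle inequality (QM2), the invariance identity, the monotonicity property recorded in Fact~\ref{prop:mono_d}, and the fact that $u\leq v$ implies $d(u,v)=0$ by the definition of the specialisation order. The implication (b)$\,\Rightarrow\,$(a) is essentially free, (c)$\,\Rightarrow\,$(b) a short computation, and (a)$\,\Rightarrow\,$(c) the technical core.

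First I would dispose of (b)$\,\Rightarrow\,$(a): specialising (b) to $z=x$ gives $d(x,x\wedge y)\leq d(x,y)$, and the reverse inequality follows from Fact~\ref{prop:mono_d} with $x_1=x_2=x$ and $y_2=x\wedge y\leq y=y_1$.

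For (c)$\,\Rightarrow\,$(b), my plan is to apply (c) with the point $z\wedge x$ as first argument and the pair $(z,y)$ as the two meet factors, obtaining $d(z\wedge x,z\wedge y)\leq d(z\wedge x,z)+d(z\wedge x,y)$. The first term vanishes because $z\wedge x\leq z$, and the second is bounded by $d(x,y)$ by one application of (QM2) through $x$, together with $d(z\wedge x,x)=0$.

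The hard part will be (a)$\,\Rightarrow\,$(c), where invariance must be invoked twice. Starting from $d(x,y\wedge z)=d(x,x\wedge y\wedge z)$ by invariance, I would insert $x\wedge y$ via (QM2) to obtain $d(x,y\wedge z)\leq d(x,x\wedge y)+d(x\wedge y,x\wedge y\wedge z)$. Invariance applied to the pair $(x,y)$ identifies the first summand with $d(x,y)$, and invariance applied to the pair $(x\wedge y,z)$---using the identity $(x\wedge y)\wedge z=x\wedge y\wedge z$---identifies the second with $d(x\wedge y,z)$. Finally Fact~\ref{prop:mono_d} combined with $x\wedge y\leq x$ gives $d(x\wedge y,z)\leq d(x,z)$, completing the chain. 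In the generalised setting nothing extra is required: every step is either an equality coming from invariance, an inequality from (QM2), or an instance of Fact~\ref{prop:mono_d}, and the conventions on $\infty$ render (c) trivial as soon as one of $d(x,y)$, $d(x,z)$ is infinite.
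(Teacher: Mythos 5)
Your proof is correct and complete: each of the three implications (b)$\Rightarrow$(a), (c)$\Rightarrow$(b), (a)$\Rightarrow$(c) checks out, including the uses of Fact~\ref{prop:mono_d}, the vanishing of $d(u,v)$ for $u\leq_d v$, and the $\infty$-conventions in the generalised setting. The paper itself omits the argument (deferring to~\cite{DGBKTZ}), so your cyclic chain serves as a valid self-contained proof in the same spirit.
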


For quasi-metric join-semilattices a similar result holds (as for item (c), it should be replaced with the following condition: {\em for every $x,y,z\in X$, $d(y\vee z,x)\leq d(y,x)+d(z,x)$}).

\smallskip
By invariance, we see that the connected components are subsemilattices:

\begin{proposition}\label{congd}
If $(X,d)$ is an invariant generalised quasi-metric semilattice, then $\cong_d$ is a congruence. In particular, every connected component $\mathcal Q(x)$ of $(X,d)$ is a convex invariant subsemilattice.
\end{proposition}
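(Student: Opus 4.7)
Assume without loss of generality that $(X,d)$ is a meet-semilattice (the join case is dual via Remark~\ref{rem:duality_join_meet}, since $d$ is invariant on $(X,\leq_d)$ if and only if $d^{-1}$ is invariant on $(X,\geq_d)$, and $\cong_d = \cong_{d^{-1}}$). The plan is to first establish that $\cong_d$ is a congruence, then deduce the structural properties of the connected components.

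For the congruence property, let $x,y,z,w \in X$ with $x \cong_d y$ and $z \cong_d w$, so that all of $d(x,y)$, $d(y,x)$, $d(z,w)$, $d(w,z)$ are finite. By the equivalent formulation of invariance given in Proposition~\ref{invth}(b), for every fixed $u \in X$ the shift $s_u \colon t \mapsto u \wedge t$ is non-expansive. Applying this to the shifts $s_z$ and $s_y$ yields
\[
d(x \wedge z,\, y \wedge z) \leq d(x,y) < \infty \quad\text{and}\quad d(y \wedge z,\, y \wedge w) \leq d(z,w) < \infty,
\]
and symmetrically for the reversed distances. By the triangle inequality (QM2), $d(x \wedge z,\, y \wedge w)$ and $d(y \wedge w,\, x \wedge z)$ are both finite, so $x \wedge z \cong_d y \wedge w$. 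Hence $\cong_d$ is a congruence on the semilattice $(X,\wedge)$.

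Given this, the structural claims about $\mathcal Q(x) = [x]_{\cong_d}$ follow immediately from the machinery already set up. Fact~\ref{prop:congruence} applied to the congruence $\cong_d$ tells us that each equivalence class is a convex subsemilattice of $X$; alternatively, convexity is already known from Remark~\ref{rem:convex}, and the subsemilattice property is exactly the congruence condition $x \wedge y \cong_d x \wedge x = x$ when $x \cong_d y$. Finally, invariance of $\mathcal Q(x)$ as a generalised quasi-metric semilattice (with respect to the restricted quasi-metric) is inherited from $X$ by Fact~\ref{fact:inv}.

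The only substantive step is the congruence property, and the main obstacle there is ensuring that the finiteness of distances is preserved when one multiplies both sides by semilattice elements. This is precisely what invariance, via the non-expansiveness of the shifts in Proposition~\ref{invth}(b), gives us for free; without invariance one could not pass from $x \cong_d y$ to $x \wedge z \cong_d y \wedge z$.
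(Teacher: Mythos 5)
Your proof is correct and follows essentially the same route as the paper's: both establish the congruence property by combining the non-expansiveness of shifts from Proposition~\ref{invth}(b) with the triangle inequality (you insert the intermediate point $y\wedge z$ where the paper uses $x\ast w$, an immaterial difference), and both then invoke Facts~\ref{prop:congruence} and~\ref{fact:inv} for the convex invariant subsemilattice claim.
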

\begin{proof}
To prove that $\cong_d$ is a congruence, let $x,y,z,w\in X$ satisfying $x\cong_dy$ and $z\cong_dw$. Then, applying (QM2) and Proposition~\ref{invth}(b), we obtain that
$$d(x\ast z,y\ast w)\leq d(x\ast z,x\ast w)+d(x\ast w,y\ast w)\leq d(z,w)+d(x,y)<\infty,$$
and similarly $d(y\ast w,x\ast z)<\infty$. Finally, Facts~\ref{prop:congruence} and 
\ref{fact:inv} imply the second part of the statement.	
\end{proof}

The convexity of the connected components of a generalised quasi-metric semilattice was already verified 
in Remark~\ref{rem:convex}.

Let us mention that a (generalised) quasi-metric semilattice satisfying {\rm(DPC)} is not necessarily invariant (see \cite{DGBKTZ}  and Example~\ref{example}(a) below).

\medskip
Next we review  Examples~\ref{ex:w_weighted_qm},~\ref{ex:qm_not_weighted2},~\ref{strings} and~\ref{ex:strong_pm} to see whether the quasi-metric spaces 
introduced there are generalised quasi-metric semilattices and what properties they satisfy.

\begin{example}\label{example}
\begin{compactenum}[(a)]
\item Let us assume the notation used in Example~\ref{ex:w_weighted_qm}. The specialisation orders $\leq_{d_{\SSS}}$, $\leq_{d_{\TT}}$ and $\leq_{d_\R}$ coincide with the usual orders. Moreover, it is easy to see that $(\SSS,d_\SSS)$, $(\TT,d_\TT)$ and $(\R,d_\R)$ are both invariant quasi-metric meet-semilattices and invariant quasi-metric join-semilattices satisfying {\rm(DPC)}.

\item Let $(\mathbb T,d)$ be the quasi-metric space defined in Example~\ref{ex:qm_not_weighted1}. Then the specialisation order $\leq_d$ coincides with the usual order $0\leq 1\leq 2$. Moreover, $(\mathbb T,d)$ is both an invariant quasi-metric meet-semilattice and an invariant quasi-metric join-semilattice. However, it does not satisfy {\rm(DPC)}.

\item As for the family of all strings $\Sigma^\ast$ endowed with the quasi-metric $d_p$ (see Example~\ref{strings}), we can see that, for every $s,s^\prime\in\Sigma^\ast$, $s\leq_{d_p}s^\prime$ if and only if $s$ is a prefix of $s^\prime$ (i.e., $s$ coincides with an initial substring of $s^\prime$). Thus the meet of two strings $s$ and $s^\prime$ is given by the longest common prefix of $s$ and $s^\prime$. The pair $(\Sigma^\ast,d_p)$ is then an invariant quasi-metric meet-semilattice with {\rm(DPC)}.

\item Let  $(\Sigma^{<\infty},d_p)$ be the quasi-metric space defined in Example~\ref{ex:strong_pm}. Since the weak partial metric $p$ satisfies property (PM2S), $d_p$ satisfies the following strengthened version of (QM1): for every $x,y\in X$, $d_p(x,y)=0$ if and only if $x=y$. This property implies that the topology induced by $d_p$ satisfies the separation axiom T$_1$. Hence, in particular, 
the specialisation order $\leq_{d_p}$ coincides with the equality, and so $\Sigma^{<\infty}$ is not a semilattice.
\end{compactenum}
\end{example}

Let us now consider the generalised quasi-metric spaces defined in Example~\ref{ex:w_weighted_gqm} and already studied 
in~\cite{CDFGBT}. 

\begin{example}\label{ex:qm groups}
\begin{compactenum}[(a)]
\item For a 
set $X$, the specialisation order $\leq_{d_{\mathcal P(X)}}$ coincides with the containment 
$\supseteq$, which means, more explicitly, that, for every $A,B\subseteq X$, $d_{\mathcal P(X)}(A,B)=0$ (i.e., $A\leq_{d_{\mathcal P(X)}}B$) if and only $A\supseteq B$. With this partial order, $\mathcal P(X)$ is a lattice, where the meet of $A$ and $B$ is their union, while the join is their intersection. Moreover, $(\mathcal P(X),d_{\mathcal P(X)})$ is both an invariant generalised quasi-metric meet-semilattice with {\rm(DPC)} and an invariant generalised quasi-metric join-semilattice with {\rm(DPC)}. 

\item Given an abelian group $G$ and two subgroups $H,K\in L(G)$, $d_{L(G)}(H,K)=0$ (and so $H\leq_{d_{L(G)}}K$) if and only if $H\supseteq K$. Thus  $L(G)$ with this partial order  is 
a lattice, where, for every $H,K\in L(G)$, the meet of $H$ and $K$ is $H+K$ and the join is $H\cap K$. Then $(L(G),d_{L(G)})$ is both an invariant generalised quasi-metric meet-semilattice with {\rm(DPC)} and an invariant generalised quasi-metric join-semilattice with {\rm(DPC)}.
\end{compactenum}
\end{example}

\begin{remark}\label{rem:specialisation_order_by_p}
Let $X$ be a non-empty set. Any generalised weak partial metric $p$ on $X$ induces a partial order $\leq_p$ on $X$ defined as follows: if $x,y\in X$, then $x\leq_p y$ whenever $p(x,x)=p(x,y)$. It is trivial to see that $\leq_p=\leq_{p+c}$ for every constant $c\in\R$. The partial order $\leq_p$ is just the other side of the coin of the specialisation order. In fact, we have that $\leq_p=\leq_{d_p}$, and, moreover, if $d$ is a generalised quasi-metric on $X$ weakly weighted by $w$, then $\leq_d=\leq_{p_{d,w}}$ (see Theorem \ref{theo:wwgqm_vs_gpm} for the definitions of $d_p$ and $p_{d,w}$). In order to show it, let $x,y\in X$ and $w$ be a weak weight for $d$. Then the claim descends from the following two chains:
$$p(x,y)-p(x,x)=d_p(x,y),\text{ and } p_{d,w}(x,y)-p_{d,w}(x,x)=d(x,y)+w(x)-d(x,x)-w(x)=d(x,y).$$

Therefore, $(X,\leq_p)$ is a semilattice, and in this case we say that $(X,p)$ is a {\em generalised weak partial metric semilattice}, if and only if $(X,\leq_{d_p})$ is a semilattice. Furthermore, if $\ast$ is the semilattice operation, $p(x,y)=p(x,x\ast y)$ for every $x,y\in X$ if and only if $d_p$ is invariant. For consistency, we refer to this property by saying that $p$ is {\em invariant}. Similarly, a generalised quasi-metric semilattice $(X,d)$ weakly weighted by $w$ is invariant if and only if so is $p_{d,w}$. Thus, Corollary \ref{cor:pm ww} leads to a similar correspondence between invariant weakly weighted (generalised) quasi-metric semilattices and invariant (generalised) weak partial metric semilattices.
\end{remark}

\section{Weakly weighted quasi-metric semilattices and their characterisations}\label{wwqmsec}

The notion of weak weightedness for generalised quasi-metric semilattices (and so that of generalised weak partial metric semilattice according to Remark \ref{rem:specialisation_order_by_p}) is too restrictive, as the following remark shows. Motivated by that observation, in this section we focus on weakly weighted quasi-metric semilattices.

\begin{remark}\label{rem:gqm_semil_not_weighted}
Let $(X,d)$ be a generalised quasi-metric meet-semilattice (the other case can be similarly treated). Suppose that there exist $x,y\in X$ such that $d(x,y)=\infty$ (and so $d$ is not a quasi-metric).
Fact~\ref{prop:mono_d} implies that $d(x,y)\leq d(x,x\wedge y)$, and so $d(x,x\wedge y)=\infty$. Moreover, $d(x\wedge y,x)=0$, and so $(X,d)$ cannot be weakly weighted in view of Proposition~\ref{prop:characterisation_ww_gqm}. 
\end{remark}

\subsection{Weakly weighted quasi-metric semilattices and (DPC)}\label{ss:weighting}

\begin{proposition}\label{prop:weak_wei_oc}
A quasi-metric space $(X,d)$ satisfies {\rm (DPC)} provided that it is weakly weighted.
\end{proposition}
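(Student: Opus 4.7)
The plan is to derive (DPC) directly from the weak-weight equation, and in fact the cleanest route is to reuse Lemma~\ref{lem:ww property}. Suppose $d$ is weakly weighted (by some $w$) and take $x,y,z\in X$ with $x\geq_d y\geq_d z$, i.e.\ $d(y,x)=d(z,y)=0$. By the triangle inequality (QM2), $d(z,x)\leq d(z,y)+d(y,x)=0$, so $d(z,x)=0$ as well, and therefore the trivial identity $d(z,x)=d(z,y)+d(y,x)$ holds.

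Now I would apply Lemma~\ref{lem:ww property} to the triple $(z,y,x)$: the lemma's hypothesis $d(z,x)=d(z,y)+d(y,x)$ is exactly what we just established, and its conclusion is $d(x,z)=d(x,y)+d(y,z)$. This is (DPC).

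Should one prefer a hands-on verification rather than invoking the lemma, the same conclusion drops out immediately from the defining identity \eqref{eq:w_weight}: the conditions $d(y,x)=0$ and $d(z,y)=0$ give $d(x,y)=w(y)-w(x)$ and $d(y,z)=w(z)-w(y)$, while $d(z,x)=0$ gives $d(x,z)=w(z)-w(x)$; adding the first two recovers the third. There is really no obstacle here, since all the real work was already done in proving Lemma~\ref{lem:ww property}; the only conceptual step is noticing that a descending chain in $\leq_d$ produces the additive triangle $d(z,x)=d(z,y)+d(y,x)$ for free, which then propagates to the reverse direction through the symmetry afforded by the weak weight.
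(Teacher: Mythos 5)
Your proof is correct and follows essentially the same route as the paper, which likewise deduces (DPC) from Lemma~\ref{lem:ww property}; your version merely makes explicit the (correct) intermediate step that the descending chain forces $d(z,x)=0=d(z,y)+d(y,x)$ before the lemma is applied to the triple $(z,y,x)$. The hands-on verification you add is also the computation the authors could have used directly, so there is nothing to correct.
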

\begin{proof}
Let $w$ be a weak weight for $d$, and $x,y,z\in X$ such that $x\geq_d y\geq_d z$. By Lemma~\ref{lem:ww property}, one has that $d(x,z)=d(x,y)+d(y,z)$,
which implies that {\rm(DPC)} holds.
\end{proof}

We show in Theorem~\ref{thm:weak_weight} that the statement of Proposition~\ref{prop:weak_wei_oc} can be reversed 
for invariant quasi-metric semilattices. To this end, we use the following functions and the subsequent lemma describing them.

Let $(X,d)$ be an invariant quasi-metric semilattice satisfying {\rm(DPC)}, and $x\in X$. We define the function $w_x\colon X\to\R$ as follows: 
\begin{equation}\label{eq:w_x}w_x(y):=d(x,y)-d(y,x), \quad \text{for every $y\in X$.}
\end{equation}

\begin{lemma}\label{lemma:DPC_to_weight}
Let $(X,d)$ be an invariant quasi-metric meet-semilattice satisfying {\rm(DPC)}. Then, 
$$w_x(y)=d(x,x\wedge y\wedge z)-d(y,x\wedge y\wedge z), \quad \text{for every $x,y,z\in X$.}$$
\end{lemma}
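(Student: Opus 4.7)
The plan is to establish the identity in two stages: first verify the special case $z = x$ directly from invariance, then extend to an arbitrary $z$ by a (DPC)-based reduction that shows the right-hand side is actually independent of $z$.

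For the base case, invariance of $d$ gives immediately $d(x,y) = d(x, x \wedge y)$ and $d(y,x) = d(y, y \wedge x) = d(y, x \wedge y)$. Subtracting,
\[
w_x(y) \;=\; d(x,y) - d(y,x) \;=\; d(x, x \wedge y) - d(y, x \wedge y),
\]
which is exactly the claim in the case $z = x$, since then $x \wedge y \wedge z = x \wedge y$.

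For general $z$, set $m := x \wedge y \wedge z$. The key observation is that $m \leq x \wedge y \leq x$ and $m \leq x \wedge y \leq y$, so (DPC) applies along both descending chains and yields
\[
d(x, m) = d(x, x \wedge y) + d(x \wedge y, m), \qquad d(y, m) = d(y, x \wedge y) + d(x \wedge y, m).
\]
Subtracting these two identities cancels the common term $d(x \wedge y, m)$, and combining with the base case produces
\[
d(x, m) - d(y, m) \;=\; d(x, x \wedge y) - d(y, x \wedge y) \;=\; w_x(y),
\]
which is the desired identity.

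The argument has no serious obstacle: invariance handles the base case directly, and (DPC) provides the $z$-independence via cancellation. The only mildly subtle point is recognising that the apparent dependence of the right-hand side on $z$ is spurious, and that this is precisely what (DPC) buys.
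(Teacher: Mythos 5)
Your proof is correct and follows essentially the same route as the paper's: invariance reduces $w_x(y)$ to $d(x,x\wedge y)-d(y,x\wedge y)$, and (DPC) applied along the chains $x\geq x\wedge y\geq x\wedge y\wedge z$ and $y\geq x\wedge y\geq x\wedge y\wedge z$ cancels the common term $d(x\wedge y, x\wedge y\wedge z)$. The paper merely presents the same cancellation by adding and subtracting that term in a single chain of equalities.
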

\begin{proof}
	Let $x,y,z\in X$. Then invariance and {\rm(DPC)} imply that
	\begin{align*}
	w_x(y)&\,=d(x,y)-d(y,x)=d(x,x\wedge y)-d(y,x\wedge y)=\\
	&\,=d(x,x\wedge y)+d(x\wedge y,x\wedge y\wedge z)-d(y,x\wedge y)-d(x\wedge y,x\wedge y\wedge z)=\\
	&\,=d(x,x\wedge y\wedge z)-d(y,x\wedge y\wedge z).\qedhere
	\end{align*}
\end{proof}

\begin{theorem}\label{thm:weak_weight}
	Let $(X,d)$ be an invariant quasi-metric semilattice satisfying {\rm(DPC)}. Then, for every $x\in X$, $w_x$ is a weak weight for $d$ (so, $(X,d)$ is weakly weighted).
	 In particular, for every $x,y\in X$, $w_x\sim w_y$.
\end{theorem}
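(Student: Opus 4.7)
The plan is to show, for a fixed $x\in X$, that $w_x$ satisfies the defining identity \eqref{eq:w_weight}, and then invoke Proposition~\ref{prop:w+c}(a) to deduce the second assertion. I will treat the meet-semilattice case; the join-semilattice case is dual via Remark~\ref{rem:duality_join_meet}. So fix $x\in X$ and arbitrary $y,z\in X$; the goal is
\[
d(y,z)+w_x(y) \;=\; d(z,y)+w_x(z),
\]
which, after using the definition \eqref{eq:w_x}, reduces to
\[
d(y,z)-d(z,y) \;=\; w_x(z)-w_x(y).
\]

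The key idea is to use Lemma~\ref{lemma:DPC_to_weight} to put $w_x(y)$ and $w_x(z)$ on a common footing. Setting $m:=x\wedge y\wedge z$, the lemma applied twice (once in its stated form, once after exchanging the roles of $y$ and $z$, which is possible since $m$ is symmetric in them) gives
\[
w_x(y)=d(x,m)-d(y,m),\qquad w_x(z)=d(x,m)-d(z,m),
\]
so that $w_x(z)-w_x(y)=d(y,m)-d(z,m)$. It therefore suffices to prove $d(y,z)-d(z,y)=d(y,m)-d(z,m)$.

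Now I would exploit the chains $y\geq y\wedge z\geq m$ and $z\geq y\wedge z\geq m$, which are available because $m\leq y\wedge z$ by construction. Applying {\rm(DPC)} to each chain yields
\[
d(y,m)=d(y,y\wedge z)+d(y\wedge z,m),\qquad d(z,m)=d(z,y\wedge z)+d(y\wedge z,m),
\]
and subtracting eliminates the common summand $d(y\wedge z,m)$:
\[
d(y,m)-d(z,m)=d(y,y\wedge z)-d(z,y\wedge z).
\]
Invariance then rewrites the right-hand side as $d(y,z)-d(z,y)$, which is exactly what was needed. Hence $w_x$ satisfies \eqref{eq:w_weight} and is a weak weight for $d$.

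The final assertion is immediate: since $w_x$ and $w_y$ are both weak weights for the same quasi-metric $d$, Proposition~\ref{prop:w+c}(a) forces $w_x\sim w_y$. I do not anticipate a serious obstacle; the only subtle point is the clean choice of auxiliary element $m=x\wedge y\wedge z$, which simultaneously serves as a lower bound below both $y$ and $z$ (enabling {\rm(DPC)}) and realises the values of $w_x(y)$ and $w_x(z)$ via the lemma.
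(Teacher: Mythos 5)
Your proof is correct and follows essentially the same route as the paper's: both apply Lemma~\ref{lemma:DPC_to_weight} with the common meet $m=x\wedge y\wedge z$ and then reduce $w_x(y)-w_x(z)=d(z,m)-d(y,m)$ to $d(z,y)-d(y,z)$. The only cosmetic difference is that the paper gets this last identity by recognising $d(z,m)-d(y,m)$ as $w_z(y)$ via one more application of the same lemma, whereas you re-derive it by hand using {\rm(DPC)} along the chains through $y\wedge z$ together with invariance -- which is exactly the argument inside the lemma's proof.
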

\begin{proof}
 We prove the statement assuming that $(X,d)$ is a meet-semilattice.
	Fixed $x\in X$, we claim that, for every $y,z\in X$, $d(y,z)+w_x(y)=d(z,y)+w_x(z)$. 
	Applying Lemma~\ref{lemma:DPC_to_weight} and {\rm(DPC)}, we obtain that 
	\begin{align*}
	w_x(y)-w_x(z)&\,=d(x,x\wedge y\wedge z)-d(y,x\wedge y\wedge z)-d(x,x\wedge y\wedge z)+d(z,x\wedge y\wedge z)=\\
	&\,=d(z,x\wedge y\wedge z)-d(y,x\wedge y\wedge z)=w_z(y)=d(z,y)-d(y,z),
	\end{align*}
	which concludes the verification that $w_x$ is a weak weight for $d$.
 The last assertion follows from Proposition~\ref{prop:w+c}(a).
\end{proof}

So, Proposition~\ref{prop:weak_wei_oc} and Theorem~\ref{thm:weak_weight} immediately give  the following result.

\begin{corollary}\label{theo:oc_vs_weak_wei}
An invariant quasi-metric semilattice is weakly weighted if and only if it satisfies {\rm(DPC)}. 
\end{corollary}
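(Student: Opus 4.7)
The plan is simply to combine the two immediately preceding results, since the corollary is designed as their joint consequence and nothing extra is needed.

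For the ``only if'' direction I would invoke Proposition~\ref{prop:weak_wei_oc}: if $(X,d)$ is weakly weighted (by some $w$), then for any chain $x\geq_d y\geq_d z$ Lemma~\ref{lem:ww property} applied in the form used in the proof of that proposition immediately gives $d(x,z)=d(x,y)+d(y,z)$, which is (DPC). Note that this direction does not require the semilattice structure, nor invariance; it is a feature of weakly weighted quasi-metrics alone. So this half is essentially just a citation.

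For the ``if'' direction I would appeal directly to Theorem~\ref{thm:weak_weight}: assuming $(X,d)$ is an invariant quasi-metric semilattice satisfying (DPC), fix any $x\in X$ and take the function
\[w_x(y) := d(x,y) - d(y,x), \qquad y\in X,\]
from \eqref{eq:w_x}. Theorem~\ref{thm:weak_weight} shows that $w_x$ is a weak weight for $d$, so $(X,d)$ is weakly weighted by definition.

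Since both ingredients are already established, there is no real obstacle; the only ``work'' is noting that Proposition~\ref{prop:weak_wei_oc} supplies one implication, Theorem~\ref{thm:weak_weight} supplies the other, and together they give the biconditional. The resulting proof should be essentially a two-line remark pointing to these two results.
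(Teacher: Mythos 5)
Your proof is correct and is exactly the paper's argument: the corollary is obtained by combining Proposition~\ref{prop:weak_wei_oc} (weakly weighted implies {\rm(DPC)}, with no need for invariance or the semilattice structure) with Theorem~\ref{thm:weak_weight} (for an invariant quasi-metric semilattice with {\rm(DPC)}, each $w_x$ is a weak weight). Nothing further is needed.
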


Let us consider the examples contained in Example~\ref{ex:w_weighted_qm} satisfying {\rm(DPC)}.

\begin{example}\label{ex:q_m_semilattices_wei}
\begin{compactenum}[(a)]
\item We start from the Sierpi\'nski space $\mathbb S=\{0,1\}$. Since $(\mathbb S,d_{\mathbb S})$ is an invariant quasi-metric semilattice with {\rm(DPC)}, according to Theorem~\ref{thm:weak_weight}, it is weakly weighted. Moreover, 
$$w_1(x)=d_\mathbb S(1,x)-d_\mathbb S(x,1)=d_\mathbb S(1,x)= 1-x,$$
which is the weight provided in Example~\ref{ex:w_weighted_qm}(a).

\item Analogously to (a),  we recover the weight $w$ defined in Example~\ref{ex:w_weighted_qm}(b) for $(\T,d_\T)$ as $w_2$.

\item In Example~\ref{ex:w_weighted_qm}(c) we have provided a weak weight $w$ for the quasi-metric space  $(\R,d_\R)$ with $d_\R(x,y)=\max\{x-y,0\}$ for every $x,y\in\R$. A straightforward computation gives $w=w_0$. 
\end{compactenum}
\end{example}

Next we review the examples contained in Example~\ref{entex}.

\begin{example} 
\begin{compactenum}[(a)]
\item As in Example~\ref{entex}(a), let $X$ be a finite set and on the power set $\mathcal P(X)$ consider the quasi-metric $d_{\mathcal P(X)}$. 
We know that $d_{\mathcal P(X)}$ is weighted by the weight $w(A)=\lvert A\rvert$ for every $A\subseteq X$, 
and now we recover it as 
$w=w_\emptyset$. 

\item Let $G$ be a finite abelian group and $d_{L(G)}$ be the quasi-metric on $L(G)$ defined in Example~\ref{entex}(b). 
We have already noticed in the mentioned example that $d_{L(G)}$ is weighted by the weight $w(H)=\log\lvert H\rvert$ for every subgroup $H$ of $G$, and now we have that 
$w=w_{\{0\}}$.
\end{compactenum}
\end{example}

Now we present some important applications of Theorem~\ref{thm:weak_weight}.

\begin{theorem}\label{thm:weight}
	Let $(X,d)$ be an invariant quasi-metric semilattice satisfying {\rm(DPC)}. 
	Then the following properties are equivalent:
	\begin{compactenum}[(a)]
		\item $(X,d)$ is weighted;
		\item for every $x\in X$, there exists $c_x\in\R$ such that  $
		w_x(y)\geq c_x$, for every $y\in X$;
		\item there exist $x\in X$ and $c\in\R$ such that  $
		w_x(y)\geq c$, for every $y\in X$.
	\end{compactenum}
If $(X,d)$ is a join-semilattice, then also the following condition is equivalent:
\begin{compactenum}[(d)]
	\item $(X,d)$ is an {\em $M$-space} (\cite[Definition~2.1]{Sch_1}), i.e., for every $x\in X$, there exists  $c_x^\prime\in\R_{\geq0}$ such that, for every $y\in X$ satisfying $x\leq y$, $d(y,x)\leq c_x^\prime$.
\end{compactenum}
\end{theorem}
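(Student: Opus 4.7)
The plan is to leverage the characterization of weightedness from Proposition~\ref{prop:bound} together with the explicit family of weak weights $\{w_x\}_{x\in X}$ produced by Theorem~\ref{thm:weak_weight}. Since we already know that $(X,d)$ is weakly weighted (Corollary~\ref{theo:oc_vs_weak_wei}) and that all weak weights differ by a constant (Proposition~\ref{prop:w+c}(a)), the equivalence of (a), (b), and (c) should be essentially a re-packaging of these results.

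\textbf{Step 1: (a)$\Leftrightarrow$(b)$\Leftrightarrow$(c).} By Theorem~\ref{thm:weak_weight}, for every $x\in X$ the function $w_x$ is a weak weight for $d$. By Proposition~\ref{prop:bound}(a), $(X,d)$ is weighted if and only if some (equivalently, every) weak weight for $d$ is bounded below. Proposition~\ref{prop:w+c}(a) says that any two weak weights differ by a constant, so boundedness below of one $w_x$ is equivalent to boundedness below of every $w_y$. This gives (a)$\Leftrightarrow$(b)$\Leftrightarrow$(c) at once.

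\textbf{Step 2: (a)$\Rightarrow$(d) in the join-semilattice case.} Let $w\colon X\to\R_{\geq 0}$ be a weight for $d$. Given $x\leq y$ in $X$, we have $d(x,y)=0$ (since $\leq\,=\,\leq_d$), so the weighting identity \eqref{eq:w_weight} collapses to $w(x)=d(y,x)+w(y)$, whence $d(y,x)=w(x)-w(y)\leq w(x)$. Thus $c'_x:=w(x)$ witnesses the M-space property.

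\textbf{Step 3: (d)$\Rightarrow$(c) in the join-semilattice case.} Fix $x\in X$ and let $c'_x\in\R_{\geq 0}$ be as in (d). For arbitrary $y\in X$, invariance of the join-semilattice gives $d(y,x)=d(y\vee x,x)$. Since $x\leq y\vee x$, the M-space hypothesis yields $d(y\vee x,x)\leq c'_x$, so $d(y,x)\leq c'_x$ for every $y\in X$. Therefore
\[
w_x(y)=d(x,y)-d(y,x)\geq 0-c'_x=-c'_x,
\]
which verifies (c) with $c=-c'_x$.

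There is no real obstacle here: the argument is purely a careful assembly of Proposition~\ref{prop:bound}, Proposition~\ref{prop:w+c}(a), Theorem~\ref{thm:weak_weight} and the invariance identity $d(y,x)=d(y\vee x,x)$. The only subtle point worth double-checking is that the M-space condition truly bounds all the values $d(y,x)$ (not just those with $y\geq x$), which is exactly what invariance in a join-semilattice provides.
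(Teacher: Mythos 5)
Your proposal is correct and follows essentially the same route as the paper: the equivalence of (a), (b), (c) is obtained from Theorem~\ref{thm:weak_weight} together with Propositions~\ref{prop:bound}(a) and~\ref{prop:w+c}(a), and the link with the $M$-space condition rests on the observation that $w_x(y)=-d(y,x)$ when $x\leq y$ and on the invariance identity $d(y,x)=d(y\vee x,x)$. The only cosmetic difference is that you prove (a)$\Rightarrow$(d) using a non-negative weight $w$ and (d)$\Rightarrow$(c) for a single base point, whereas the paper argues (b)$\Rightarrow$(d)$\Rightarrow$(b) directly with the functions $w_x$; given the already-established equivalences this is immaterial.
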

\begin{proof}
Theorem~\ref{thm:weak_weight} implies that, for every $x\in X$, $w_x$ is a weak weight for $d$. Then the equivalences (a)$\Leftrightarrow$(b)$\Leftrightarrow$(c) follow from Proposition~\ref{prop:bound}(a).
	
(b)$\Rightarrow$(d)  Let $x\in X$ and $c_x$ as in (b). If $y\in X$ is such that $x\leq y$, then $c_x\leq w_x(y)=d(x,y)-d(y,x)=-d(y,x)$; in particular $c_x\leq 0$.  Hence, $d(y,x)\leq -c_x=:c_x^\prime\geq0$. 

(d)$\Rightarrow$(b) Let $x,y\in X$, and $c_x^\prime\in\R$ satisfying the hypothesis in (d). Then invariance implies that \[w_x(y)=d(x,y)-d(y,x)\geq -d(y,x)=-d(y\vee x,x)\geq -c_x^\prime=:c_x.\qedhere\]
\end{proof}

The equivalence of (a) and (d) in the above theorem provides another proof to \cite[Theorem 2.31]{Sch_1}.

We can analogously obtain the counterpart of Theorem~\ref{thm:weight} characterising co-weighted quasi-metric semilattices.
 The equivalence of (a) and (d) in the theorem below covers~\cite[Theorem 2.34]{Sch_1}.

\begin{theorem}\label{coro:co_weight_mspace}
	Let $(X,d)$ be an invariant quasi-metric semilattice satisfying {\rm(DPC)}. Then the following properties are equivalent:
	\begin{compactenum}[(a)]
		\item $(X,d)$ is co-weighted;
		\item for every $x\in X$, there exists $c_x\in\R$ such that  $
		w_x(y)\leq c_x$, for every $y\in X$;
		\item there exist $x\in X$ and $c\in\R$ such that  $
		w_x(y)\leq c$, for every $y\in X$.
	\end{compactenum}
If $(X,d)$ is a meet-semilattice, then also the following condition is equivalent:
\begin{compactenum}[(d)]
	\item $X$ is an {\em $m$-space} (\cite[Definition~2.10]{Sch_1}), i.e., for every $x\in X$, there exists $c_x^\prime\in\R_{\geq0}$ such that, for every $y\in X$ satisfying $y\leq x$, $d(x,y)\leq c_x^\prime$.
\end{compactenum}
\end{theorem}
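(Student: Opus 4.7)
My plan is to follow the pattern of Theorem~\ref{thm:weight}, dualising the lower/upper bound roles and swapping meet with join. The starting point is Theorem~\ref{thm:weak_weight}, which guarantees that for every $x\in X$ the function $w_x$ defined in \eqref{eq:w_x} is a weak weight for $d$.

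For the equivalence of (a), (b), and (c), I would invoke Proposition~\ref{prop:bound}(b), according to which a weakly weighted quasi-metric is co-weighted if and only if some (equivalently, every) weak weight for it is bounded above. Combined with the fact that each $w_x$ is a weak weight, this yields (a)$\Leftrightarrow$(b)$\Leftrightarrow$(c) at once.

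For the equivalence of (b) and (d) in the meet-semilattice case, I would argue as follows. For (b)$\Rightarrow$(d), fix $x\in X$ and let $c_x$ be as in (b). For $y\leq x$ one has $d(y,x)=0$, so $w_x(y)=d(x,y)$, and hence $d(x,y)\leq c_x$; taking $y=x$ shows $c_x\geq 0$, so $c_x':=c_x$ works. For (d)$\Rightarrow$(b), fix $x\in X$ and let $c_x'$ be as in (d). For any $y\in X$, invariance in the meet-semilattice gives $d(x,y)=d(x,x\wedge y)$, and since $x\wedge y\leq x$, condition (d) yields $d(x,x\wedge y)\leq c_x'$; together with $d(y,x)\geq 0$, this gives $w_x(y)=d(x,y)-d(y,x)\leq c_x'$, so $c_x:=c_x'$ works.

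There is no substantive obstacle in this argument, since every step is the formal dual of a step already carried out in the proof of Theorem~\ref{thm:weight}. The only point that deserves attention is that in the direction (d)$\Rightarrow$(b) one must use the meet-semilattice form of invariance (namely $d(x,y)=d(x,x\wedge y)$) rather than the join-semilattice form used in the companion result.
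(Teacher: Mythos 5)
Your proposal is correct and follows essentially the same route as the paper: the equivalence of (a), (b), (c) via Theorem~\ref{thm:weak_weight} and Proposition~\ref{prop:bound}(b), and the same two computations for (b)$\Leftrightarrow$(d) using $d(y,x)=0$ when $y\leq x$ and the invariance identity $d(x,y)=d(x,x\wedge y)$. Your extra remark that taking $y=x$ forces $c_x\geq 0$ is a small point the paper glosses over, but otherwise the arguments coincide.
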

\begin{proof}
	Theorem~\ref{thm:weak_weight} implies that, for every $x\in X$, $w_x$ is a weak weight for $d$. Then the equivalences (a)$\Leftrightarrow$(b)$\Leftrightarrow$(c) follow from Proposition~\ref{prop:bound}(b). 
	
	(b)$\Rightarrow$(d) Let $x\in X$ and $c_x$ as in (b). Then, if $y\in X$ satisfies $y\leq x$, 
	$$c_x\geq w_x(y)=d(x,y)-d(y,x)=d(x,y).$$
	
	(d)$\Rightarrow$(b) Let $x,y\in X$, and $c_x^\prime\in\R$ satisfying the hypothesis in (d). Then, since $X$ is invariant, and $x\wedge y\leq x$,
	\[w_x(y)=d(x,y)-d(y,x)\leq d(x,y)=d(x,x\wedge y)\leq c_x^\prime.\qedhere\]
\end{proof}

\begin{remark}\label{topbottom}
Let us consider two particular cases of 
Theorems~\ref{thm:weight} and~\ref{coro:co_weight_mspace}.
Let $(X,d)$ be an invariant quasi-metric semilattice satisfying {\rm(DPC)} and with a bottom element $\perp$. The definition of the function $w_\perp$ can be simplified as follows: for every $x\in X$, since $\perp\leq x$, $w_\perp(x)=d(\perp,x)-d(x,\perp)=-d(x,\perp)\leq 0$. Hence, 
$(X,d)$ is co-weighted. If, otherwise $X$ has a top element $\top$, for every $x\in X$, $w_\top(x)=d(\top,x)\geq 0$, 
and so $(X,d)$ is weighted.

Clearly, the functions $w_\perp$ and $w_\top$ can be considered also in semilattices that do not satisfy {\rm(DPC)}. Then the following results can be proved.
\begin{compactenum}[(a)]
\item Let $X$ be a quasi-metric meet-semilattice.
\begin{compactenum}
	\item[(a$_1$)] If $X$ has a bottom element $\perp$, then $w_\perp$ 
	is {\em supadditive} (i.e., $w_\perp(x\wedge y)\geq w_\perp(x)+w_\perp(y)$ for every $x,y\in X$) according to Fact~\ref{prop:mono_d}.
	\item[(a$_2$)] If $X$ has a top element $\top$ and $X$ is invariant, then $w_\top$ 
	is subadditive because of Proposition~\ref{invth}.
\end{compactenum}
\item Let $X$ be a quasi-metric join-semilattice.
\begin{compactenum}
	\item[(b$_1$)] If $X$ has a top element $\top$, then $w_\top$ is subadditive thanks to Fact~\ref{prop:mono_d}. 
	\item[(b$_2$)] If $X$ has a bottom element $\perp$ and $X$ is invariant, then $w_\perp$ is supadditive according to Proposition~\ref{invth}.
\end{compactenum}
\end{compactenum}	
\end{remark}

\subsection{Semivaluations, semi-co-valuations and weak weights}\label{ss:semivaluation}

We start recalling the definition of semivaluation from~\cite{Nak70}, giving also the corresponding notion of semi-co-valuation. These were proposed with the following terminology in~\cite{Sch} with domain $\R_{\geq0}$ instead of the whole $\R$ and using equivalent defining conditions.

\begin{definition}\label{def:val_and_coval}
Let $(X,\leq)$ be a meet-semilattice. A function $f\colon X\to\R$ is called:
\begin{compactenum}[-]
\item {\em meet valuation} if $f(x)+f(x\wedge y\wedge z)\geq f(x\wedge y)+f(x\wedge z)$, for every $x,y,z\in X$;
\item {\em meet co-valuation} if $f(x)+f(x\wedge y\wedge z)\leq f(x\wedge y)+f(x\wedge z)$, for every $x,y,z\in X$.
\end{compactenum}
Let $(X,\leq)$ be a join-semilattice. A function $f\colon X\to\R$ is called:
\begin{compactenum}[-]
	\item {\em join valuation} if $f(x)+f(x\vee y\vee z)\leq f(x\vee y)+f(x\vee z)$, for every $x,y,z\in X$;
	\item {\em join co-valuation} if $f(x)+f(x\vee y\vee z)\geq f(x\vee y)+f(x\vee z)$, for every $x,y,z\in X$.
\end{compactenum}
\end{definition}

We refer to \cite{DGBKTZ} for other characterisations of join valuations, and the respective equivalent conditions for the other properties can be easily and similarly obtained.

The following definition, as given in \cite{Nak70} and \cite{Sch}, conveniently regroups the notions provided in Definition~\ref{def:val_and_coval}. 

\begin{definition}
A function  $f\colon X\to \R$ on a semilattice $(X,\leq)$  is called a {\em semivaluation} (respectively, {\em semi-co-valuation}) if it is either a meet or a join valuation (respectively, either a meet or a join co-valuation). 
\end{definition}
In other words, the prefixes join and meet before the term (co-)valuation appear when we want to emphasise the type of semilattice  with which we are working. In the remaining cases we use the more general term semi(-co-)valuation.

\begin{proposition}\label{prop:mono_semival+}
Let $(X,\leq)$ be a  semilattice and $f\colon X\to \R$ a function.
\begin{compactenum}[(a)]
\item The function $f$ is a semivaluation (respectively, semi-co-valuation) if and only if $-f$ is a semi-co-valuation (respectively, semivaluation).
\item If $f$ is a semi(-co-)valuation, then every $f'\in[f]_\sim$ has the same property. 
\item If $f$ is a semivaluation, then $f$ is non-decreasing.
\item If $f$ is a semi-co-valuation, then $f$ is non-increasing.
\end{compactenum}
\end{proposition}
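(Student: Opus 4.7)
My plan is to handle the four items in the order (a), (b), (c), (d), with (c) being the only item requiring any real thought. Items (a) and (b) should follow directly from the defining inequalities in Definition~\ref{def:val_and_coval} by simple algebraic manipulation.

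For (a), I will observe that multiplying the defining inequality of a meet valuation by $-1$ reverses the direction and produces exactly the defining inequality of a meet co-valuation (and analogously for the join case), yielding the equivalence. For (b), if $f'=f+c$ with $c\in\R$, then both sides of the defining inequality of a meet (respectively, join) (co-)valuation increase by the same constant $2c$, so the inequality is preserved and $f'$ has the same type as $f$.

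For (c), my approach is to plug suitably chosen triples into the defining inequality to extract monotonicity. Fix $a,b\in X$ with $a\leq b$. If $f$ is a meet valuation, then $a\wedge b=a$, and substituting $x=b$, $y=z=a$ into $f(x)+f(x\wedge y\wedge z)\geq f(x\wedge y)+f(x\wedge z)$ yields $f(b)+f(a)\geq 2f(a)$, hence $f(a)\leq f(b)$. If instead $f$ is a join valuation, then $a\vee b=b$, and substituting $x=a$, $y=z=b$ into $f(x)+f(x\vee y\vee z)\leq f(x\vee y)+f(x\vee z)$ yields $f(a)+f(b)\leq 2f(b)$, again giving $f(a)\leq f(b)$. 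In either case $f$ is non-decreasing.

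Finally, (d) is a one-line consequence of (a) and (c): if $f$ is a semi-co-valuation then $-f$ is a semivaluation by (a), hence non-decreasing by (c), so $f$ itself is non-increasing. I do not anticipate any genuine obstacle here; the only subtle point is picking the right substitution in (c), and the symmetric choice $y=z$ (with $x$ playing the role of the larger or smaller element, depending on whether one works with meets or joins) handles both cases uniformly.
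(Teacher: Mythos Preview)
Your proof is correct and follows essentially the same approach as the paper: items (a) and (b) are declared trivial, one of the monotonicity statements is obtained by substituting a triple with a repeated element into the defining inequality, and the other is then deduced via (a). The only cosmetic difference is that the paper proves (d) directly (using the triple $(y,x,x)$ in the meet co-valuation inequality for $x\leq y$) and derives (c) from (a) and (d), whereas you prove (c) directly and derive (d) from (a) and (c).
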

\begin{proof}
Items (a) and (b) are trivial, while item (c) descends from items (a) and (d). 

We want to prove item (d). Assume that $X$ is a meet-semilattice and $f$ a meet co-valuation. If $x,y\in X$ satisfy $x\leq y$, then
$$f(y)+f(x)=f(y)+f(y\wedge x\wedge x)\leq f(y\wedge x)+f(y\wedge x)=f(x)+f(x),$$
and so $f(y)\leq f(x)$, that is, $f$ is non-increasing. If $X$ is a join-semilattice and $f$ is a join co-valuation, then the claim can be similarly shown.
\end{proof}

As for the monotonicity results provided in Proposition~\ref{prop:mono_semival+}(c) and (d), proofs can be found in \cite[Lemma 4]{Sch} and \cite[Lemma 3.1(1)]{Nak70}. 
The same items of Proposition~\ref{prop:mono_semival+} also imply that 
a semi(-co-)valuation is strictly monotone if and only if it is injective.

\smallskip
Let us now start describing the connection between weak weights and and semi(-co-)valuations.

\begin{theorem}\label{theo:w_wei_to_valu_join_1}
Let $(X,d)$ be an invariant quasi-metric semilattice and let $w$ be a weak weight for $d$. Then $w$ is a strictly decreasing semi-co-valuation, while $-w$ is a strictly increasing semivaluation.
\end{theorem}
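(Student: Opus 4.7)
The plan is to prove the first statement only — that $w$ is a strictly decreasing semi-co-valuation — since the assertion about $-w$ then follows immediately from Proposition~\ref{prop:mono_semival+}(a).

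Strict monotonicity is a direct consequence of the defining identity \eqref{eq:w_weight}: if $x < y$ in the specialisation order, then $d(x,y)=0$ while $d(y,x)>0$ by (QM1), so \eqref{eq:w_weight} yields $w(x)=w(y)+d(y,x)>w(y)$. Thus $w$ is strictly decreasing.

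For the semi-co-valuation inequality, I would treat the meet-semilattice case in detail; the join-semilattice case is entirely parallel (and can alternatively be reduced to the meet case by applying it to the invariant quasi-metric meet-semilattice $(X,d^{-1})$, which is weakly weighted by $-w$). Fix $x,y,z\in X$ and set $a=x\wedge y$, $b=x\wedge z$, $c=x\wedge y\wedge z$. Applying \eqref{eq:w_weight} to the pair $(x,a)$, and using both $a\leq x$ and invariance $d(x,a)=d(x,y)$, gives $w(x)-w(a)=-d(x,y)$. Applying \eqref{eq:w_weight} to the pair $(b,c)$, using $c\leq b$ and invariance $d(b,c)=d(x\wedge z,(x\wedge z)\wedge y)=d(x\wedge z,y)$, gives $w(c)-w(b)=d(x\wedge z,y)$. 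Summing these two identities collapses the quantity $w(x)+w(c)-w(a)-w(b)$ to $-d(x,y)+d(x\wedge z,y)$. Since $x\wedge z\leq x$, Fact~\ref{prop:mono_d} gives $d(x\wedge z,y)\leq d(x,y)$, so the quantity is $\leq 0$, which is exactly the meet co-valuation inequality.

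The main obstacle is not conceptual but bookkeeping: choosing the right pairings in \eqref{eq:w_weight} so that invariance cleanly transforms the auxiliary distances (turning $d(x,x\wedge y)$ into $d(x,y)$ and $d(x\wedge z,x\wedge y\wedge z)$ into $d(x\wedge z,y)$), and orienting Fact~\ref{prop:mono_d} correctly. Once the identities are set up this way, the monotonicity of $d$ in its first argument closes the argument in a single line.
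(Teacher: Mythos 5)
Your proposal is correct and follows essentially the same route as the paper: both proofs use the weak-weight identity \eqref{eq:w_weight} on the comparable pairs $(x,x\wedge y)$ and $(x\wedge z,x\wedge y\wedge z)$ to convert the co-valuation inequality into the single distance comparison $d(x\wedge z,x\wedge y\wedge z)\leq d(x,x\wedge y)$, and the strict-monotonicity argument is identical. The only (cosmetic) difference is that the paper obtains this comparison from the non-expansiveness of the shift $s_z$ (Proposition~\ref{invth}(b)), whereas you obtain it by rewriting both sides via invariance as $d(x\wedge z,y)\leq d(x,y)$ and invoking Fact~\ref{prop:mono_d}.
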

\begin{proof}
We prove the result for $(X,d)$ a meet-semilattice.
Let $x,y,z\in X$. Invariance and Proposition~\ref{invth}(b) imply that $d(x\wedge z,x\wedge y\wedge z)\leq d(x,x\wedge y)$. Since $d(x\wedge y\wedge z,x\wedge z)=0=d(x\wedge y,x)$, \eqref{eq:w_weight} implies that
$$w(x\wedge y\wedge z)-w(x\wedge z)=d(x\wedge z,x\wedge y\wedge z)\leq d(x,x\wedge y)=w(x\wedge y)-w(x),$$
and so $w$ is a meet co-valuation. Moreover, $w$ is strictly decreasing as, if $x< y$, then $d(y,x)>0$ thanks to (QM1) since $d(x,y)=0$, and so 
\[w(y)< w(y)+d(y,x)=d(x,y)+w(x)=w(x).\]
That $-w$ is a strictly increasing meet valuation follows from Proposition~\ref{prop:mono_semival+}(a) and (c).
\end{proof}

When a semilattice admits a semi(-co-)valuation, it can be equipped with a distance function as follows.

\begin{definition}\label{def:d_f}
If $X$ is a meet-semilattice and $f$ is a meet co-valuation, we define $d_f\colon X\times X\to\R_{\geq 0}$ as follows: $$d_f(x,y):=f(x\wedge y)-f(x),\quad \text{for every $x,y\in X$.}$$

Dually, if $X$ is a join-semilattice and $f$ is a join co-valuation, we define $d_f\colon X\times X\to\R_{\geq 0}$ as follows: $$d_f(x,y):=f(y)-f(x\vee y), \quad \text{for every $x,y\in X$.}$$
\end{definition}

Definition~\ref{def:d_f} generalises that given in \cite[Theorem 11]{Sch}.

\smallskip
The following result shows how to produce weakly weighted invariant quasi-metrics from semi(-co-)valuations, 

\begin{theorem}\label{theo:w_wei_to_valu_2}
Let $(X,\leq)$ be a semilattice 
and $f\colon X\to \R$. If $f$ is a strictly decreasing semi-co-valuation, then $d_f$ is an invariant quasi-metric on $X$ satisfying $\leq=\leq_{d_f}$, and weakly weighted by $f$.

If $f$ is a strictly increasing semivaluation, then $d_{-f}$ is an invariant quasi-metric on $X$ satisfying $\leq=\leq_{d_{-f}}$, and weakly weighted by $-f$.
\end{theorem}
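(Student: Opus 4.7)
Assume first that $X$ is a meet-semilattice and $f$ a strictly decreasing meet co-valuation; the join-semilattice case is dual, and the second statement about strictly increasing semivaluations reduces to the first by passing from $f$ to $-f$ via Proposition~\ref{prop:mono_semival+}(a) (noting that a semivaluation is strictly increasing iff it is injective and non-decreasing). The plan is to verify in order: (i) $d_f\geq 0$ and (QM1); (ii) the triangle inequality (QM2); (iii) the identification $\leq=\leq_{d_f}$; (iv) invariance; (v) the weak weight relation for $f$.

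For (i), since $x\wedge y\leq x$ and $f$ is non-increasing by Proposition~\ref{prop:mono_semival+}(d), we get $f(x\wedge y)\geq f(x)$, hence $d_f(x,y)\geq 0$. The strict monotonicity of $f$ (equivalently, its injectivity) implies that $d_f(x,y)=d_f(y,x)=0$ forces $x\wedge y=x=y$, which is (QM1). For (iv), note that $d_f(x,x\wedge y)=f(x\wedge(x\wedge y))-f(x)=f(x\wedge y)-f(x)=d_f(x,y)$. For (iii), $x\leq_{d_f}y$ iff $f(x\wedge y)=f(x)$, which by injectivity gives $x\wedge y=x$, i.e.\ $x\leq y$; and the converse is immediate. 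For (v), both $d_f(x,y)+f(x)$ and $d_f(y,x)+f(y)$ equal $f(x\wedge y)$, establishing \eqref{eq:w_weight}.

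The main obstacle is (ii), the triangle inequality: we must show
\[
f(x\wedge z)+f(y)\;\leq\; f(x\wedge y)+f(y\wedge z)\quad\text{for every $x,y,z\in X$.}
\]
The plan is to apply the meet co-valuation inequality with the roles of the first two variables swapped, namely with $y$ in the privileged position: this yields
\[
f(y)+f(y\wedge x\wedge z)\;\leq\; f(y\wedge x)+f(y\wedge z).
\]
Since $x\wedge y\wedge z\leq x\wedge z$ and $f$ is non-increasing, one has $f(x\wedge z)\leq f(x\wedge y\wedge z)$, so adding $f(y)$ to both sides and combining with the previous inequality gives the desired bound. Rearranging produces (QM2) for $d_f$.

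This completes the verification that $d_f$ is an invariant quasi-metric with $\leq=\leq_{d_f}$ and is weakly weighted by $f$. The statement for strictly increasing semivaluations follows by applying what we just proved to $-f$, which is a strictly decreasing semi-co-valuation by Proposition~\ref{prop:mono_semival+}(a), and noting that by Definition~\ref{def:d_f} the associated quasi-metric is precisely $d_{-f}$.
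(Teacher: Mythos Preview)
Your proof is correct and follows essentially the same route as the paper's: the key step for (QM2) is the chain $f(y)+f(x\wedge z)\leq f(y)+f(x\wedge y\wedge z)\leq f(x\wedge y)+f(y\wedge z)$, obtained exactly as you do by combining the co-valuation inequality (with $y$ in the first slot) and the monotonicity of $f$. The remaining verifications (non-negativity, (QM1), invariance, $\leq=\leq_{d_f}$, and the weak-weight identity) are handled in the same way, only in a slightly different order.
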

\begin{proof}
Assume that $(X,\leq)$ is a meet-semilattice and so $f$ is a meet co-valuation. Let us first prove that $d_f$ is a quasi-metric.
Since $f$ is decreasing, $d_f$ assumes non-negative values. To prove (QM1), take two points $x,y\in X$ such that $d_f(x,y)=d_f(y,x)=0$. Then the definition of $d_f$ implies that $f(x)=f(x\wedge y)=f(y)$. As $f$ is strictly monotone, the equalities $x=x\wedge y=y$ descend. To verify the triangular inequality (QM2), let $x,y,z\in X$. 
Due to the fact that $f$ is a non-increasing 
meet co-valuation,
$$f(y)+f(x\wedge z)\leq f(y)+f(x\wedge z\wedge y)\leq f(x\wedge y)+f(z\wedge y),$$
and so,  again since $f$ is non-increasing,
\begin{align*}d_f(x,z)&\,=f(x\wedge z)-f(x)=f(x\wedge y)-f(x)+f(x\wedge z)-f(x\wedge y)=\\
&\,=d_f(x,y)+f(x\wedge z)-f(x\wedge y)\leq d_f(x,y)+f(x\wedge z)-f(y)\leq d_f(x,y)+d_f(y,z).\end{align*}

Note that $x\leq_{d_f}y$ if and only if $f(x)=f(x\wedge y)$, which is then equivalent to $x=x\wedge y$ and therefore to $x\leq y$. Hence, $\leq=\leq_{d_f}$. 
Finally, $d_f$ is invariant by construction and it can be easily seen that $f$ is a weak weight for $d_f$.

The second part of the theorem follows from the first one. In fact, if $f$ is a strictly increasing semivaluation, then $-f$ is a strictly decreasing semi-co-valuation, by Proposition~\ref{prop:mono_semival+}(a). Thus, we can associate to $f$ the map $d_{-f}$. 
\end{proof}

\begin{corollary}\label{coro:w_wei_to_valu2}
Let $(X,\leq)$ be a semilattice. There are one-to-one correspondences between the three sets consisting of the following objects:
	\begin{compactenum}[(a)]
		\item invariant weakly weighted quasi-metrics $d$ on $X$ satisfying $\leq_d=\leq$;
		\item equivalence classes of strictly decreasing semi-co-valuations $f$ 
		on $X$;
		\item equivalence classes of strictly increasing semivaluations $g$ 
		on $X$.
	\end{compactenum}
	More precisely, to $d$ we associate $[w]_\sim$ and $[-w]_\sim$, where $w$ is a weak weight for $d$, to $[f]_\sim$ we associate $d_f$ and $[-f]_\sim$, and, finally, to $[g]_\sim$ we associate $d_{-g}$ and $[-g]_\sim$.
\end{corollary}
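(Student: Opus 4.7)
The plan is to assemble the corollary from Theorems~\ref{theo:w_wei_to_valu_join_1} and~\ref{theo:w_wei_to_valu_2}, using Proposition~\ref{prop:w+c}(a) to control the ambiguity in the choice of weak weight and Proposition~\ref{prop:mono_semival+}(a) to transfer freely between decreasing semi-co-valuations and increasing semivaluations.

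First, I would check that each of the two maps between (a) and (b) is well-defined on equivalence classes. In the direction (a)$\to$(b), any weak weight $w$ of an invariant weakly weighted quasi-metric $d$ with $\leq_d=\leq$ is, by Theorem~\ref{theo:w_wei_to_valu_join_1}, a strictly decreasing semi-co-valuation, and Proposition~\ref{prop:w+c}(a) asserts that the set of weak weights for $d$ is exactly a single $\sim$-class. In the direction (b)$\to$(a), given a strictly decreasing semi-co-valuation $f$, Theorem~\ref{theo:w_wei_to_valu_2} produces $d_f$ with the required properties; independence of $d_f$ from the representative of $[f]_\sim$ is the one-line observation that replacing $f$ by $f+c$ leaves the difference $f(x\ast y)-f(x)$ unchanged, where $\ast$ is the relevant semilattice operation.

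Next, I would verify that these two maps are mutually inverse. The composition (b)$\to$(a)$\to$(b) is immediate because Theorem~\ref{theo:w_wei_to_valu_2} states outright that $f$ is itself a weak weight for $d_f$, so the returned class is $[f]_\sim$. The composition (a)$\to$(b)$\to$(a) amounts to proving $d_w=d$ whenever $w$ is a weak weight for $d$. In the meet-semilattice case, the key observation is that $x\wedge y\leq_d x$ forces $d(x\wedge y,x)=0$, so \eqref{eq:w_weight} collapses to $w(x\wedge y)-w(x)=d(x,x\wedge y)$, which equals $d(x,y)$ by invariance; the join case is dual. This identity is the only step with genuine content, and I expect the principal (mild) obstacle to be the careful bookkeeping across the meet and join cases rather than any conceptual difficulty.

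Finally, the bijection between (b) and (c) is a formal consequence of Proposition~\ref{prop:mono_semival+}(a): the map $f\mapsto -f$ is a bijection between strictly decreasing semi-co-valuations and strictly increasing semivaluations, and it descends to a bijection of $\sim$-classes since $-(f+c)=-f-c$. Composing with the already established correspondence (a)$\leftrightarrow$(b) yields the explicit formulas stated in the corollary: $d\mapsto[-w]_\sim$ in one direction and $[g]_\sim\mapsto d_{-g}$ in the other, completing the three-way bijection.
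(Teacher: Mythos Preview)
Your proposal is correct and follows essentially the same approach as the paper's proof: both establish the bijection (a)$\leftrightarrow$(b) by checking well-definedness via Theorem~\ref{theo:w_wei_to_valu_join_1}, Proposition~\ref{prop:w+c}(a) and Theorem~\ref{theo:w_wei_to_valu_2}, verify the two compositions are identities (the key computation $d_w=d$ using $d(x\wedge y,x)=0$ and invariance is exactly the paper's), and handle (b)$\leftrightarrow$(c) via Proposition~\ref{prop:mono_semival+}(a).
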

\begin{proof} 
Let $(X,\leq)$ be a meet-semilattice. We prove 	the correspondence between (a) and (b) as that between (b) and (c) is trivial by Theorem~\ref{theo:w_wei_to_valu_2} and Proposition~\ref{prop:mono_semival+}(a).
Both associations described in the statement are well-defined, respectively, one by virtue of Theorem~\ref{theo:w_wei_to_valu_join_1} and Proposition~\ref{prop:w+c}(a), and the converse one by Theorem~\ref{theo:w_wei_to_valu_2} since two semi-co-valuations $f,g$ of $X$ satisfy $d_f=d_g$ provided that $f\sim g$. 

Moreover, we have to verify that the two associating maps are one the inverse of the other. Indeed, given a strictly decreasing semi-co-valuation $f$ of $X$, we have seen in Theorem~\ref{theo:w_wei_to_valu_2} that $f$ is a weak weight for $d_f$ and to $d_f$ we associate $[f]_\sim$. 
On the other hand, we check that if $d$ is an invariant quasi-metric on $X$ with $\leq_d=\leq$ and weakly weighted by $w$, then $d=d_w$. For $x,y\in X$, since $d(x\wedge y,x)=0$ and $d$ is invariant,
\[d_w(x,y)=w(x\wedge y)-w(x)=w(x\wedge y)+d(x\wedge y,x)-w(x)=d(x,x\wedge y)=d(x,y).\qedhere\] 
\end{proof}

As particular cases of Corollary~\ref{coro:w_wei_to_valu2}, we find the following classical results due to Schellekens~\cite{Sch} since each (co-)weighted quasi-metric admits a unique fading (co-)weight in view of Remark~\ref{fading}.

\begin{corollary}
	\begin{compactenum}[(a)]
		\item {\rm(\cite[Theorem 11]{Sch})} Let $(X,\leq)$ be a meet-semilattice. There are one-to-one correspondences between the three sets consisting of the following objects:
		\begin{compactenum}[(i)]
		\item invariant co-weighted quasi-metrics $d$ on $X$ with $\leq_d=\leq$;
		\item fading strictly increasing meet valuations 
		on $X$;
		\item fading strictly decreasing meet co-valuations 
		on $X$.
		\end{compactenum}
		\item {\rm(\cite[Theorem 10]{Sch})} Let $(X,\leq)$ be a join-semilattice. There are one-to-one correspondences between the three sets consisting of the following objects:
		\begin{compactenum}[(i)] 
		\item invariant weighted quasi-metrics $d$ on $X$ with $\leq_d=\leq$;
		\item fading strictly decreasing join co-valuations 
		on $X$;
		\item fading strictly increasing join valuations 
		on $X$. 
		\end{compactenum}
	\end{compactenum}
\end{corollary}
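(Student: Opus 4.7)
My plan is to derive this corollary as a direct restriction of Corollary~\ref{coro:w_wei_to_valu2} to the subclass of (co-)weighted quasi-metrics, using Remark~\ref{fading} to select the canonical (fading) representative within each equivalence class under $\sim$.

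For part (b), I would start from Corollary~\ref{coro:w_wei_to_valu2} applied to join-semilattices, which gives that invariant weakly weighted quasi-metrics $d$ on $X$ with $\leq_d=\leq$ are in bijection with equivalence classes of strictly decreasing join co-valuations and, symmetrically, with equivalence classes of strictly increasing join valuations, via $d\mapsto[w]_\sim$ and $d\mapsto[-w]_\sim$ for any weak weight $w$ of $d$. Proposition~\ref{prop:bound}(a) then singles out the weighted quasi-metrics as exactly those whose class $[w]_\sim$ of weak weights contains a lower-bounded member; by Proposition~\ref{prop:w+c}(b) together with Remark~\ref{fading}, such a class contains a unique fading element, namely $w-\inf_{x\in X}w(x)$. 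Restricting the bijection and picking this fading representative gives the correspondence between (i) and (ii). The analogous analysis on the dual class $[-w]_\sim$, combined with Proposition~\ref{prop:mono_semival+}(a), yields the correspondences with (iii).

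Part (a) is proved by the same strategy on meet-semilattices. By Theorem~\ref{theo:w_wei_to_valu_join_1}, a weak weight $w$ for an invariant quasi-metric on a meet-semilattice is a strictly decreasing meet co-valuation, and by Proposition~\ref{prop:mono_semival+}(a) its negative $-w$ is a strictly increasing meet valuation. Proposition~\ref{prop:bound}(b) characterises co-weightedness as the existence of an upper bound on some weak weight, equivalently as the existence of a fading representative in the relevant class (after translation). Selecting the unique fading member of this class via Remark~\ref{fading} produces the bijection with (ii), and the same procedure applied to the other half of Corollary~\ref{coro:w_wei_to_valu2} gives the bijection with (iii).

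The main obstacle is pure bookkeeping: I must keep careful track of sign conventions (weak weight versus co-weight, valuation versus co-valuation, whether to pass to the fading representative of $[w]_\sim$ or of $[-w]_\sim$) and verify that in each case the existence of the fading representative coincides with the (co-)weightedness condition, as guaranteed by Proposition~\ref{prop:bound}. Once these conventions are aligned, nothing genuinely new is proved — the statement is essentially a careful unwinding of Corollary~\ref{coro:w_wei_to_valu2} combined with Remark~\ref{fading}.
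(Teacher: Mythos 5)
Your proposal coincides with the paper's own proof, which consists of a single sentence declaring the corollary a particular case of Corollary~\ref{coro:w_wei_to_valu2} together with the existence of a unique fading (co-)weight from Remark~\ref{fading}; your expansion via Proposition~\ref{prop:bound} and Proposition~\ref{prop:w+c} is exactly the unwinding the authors leave implicit. One remark on the ``bookkeeping'' you defer: the correspondences with the sets labelled (ii) are immediate (the unique fading weight in (b), resp.\ the unique fading co-weight in (a), is itself the required fading co-valuation, resp.\ valuation), but for the sets labelled (iii) the recipe ``take the fading representative of the opposite-sign class'' does not literally apply, since, e.g., in (b) the class $[-w]_\sim$ contains a fading element only when $w$ is bounded above, i.e.\ when $d$ is also co-weighted. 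The paper's one-line proof glosses over this in exactly the same way, so your argument is faithful to it; just be aware that the (iii) identifications require more than selecting a fading representative of $[-w]_\sim$.
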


\section{Componentwisely weakly weighted generalised quasi-metric semilattices and their characterisations}\label{ss:wwgen}

Remark~\ref{rem:gqm_semil_not_weighted} shows that the notion of weak weightedness does not suit generalised quasi-metric semilattices. However, in what follows we prove characterisations of componentwisely weakly weighted generalised quasi-metric semilattices similar to those obtained for weakly weighted quasi-metric semilattices.


Proposition~\ref{prop:weak_wei_oc} extends to generalised quasi-metric spaces as follows. 

\begin{proposition}\label{prop:weak_wei_gen_to_oc}
If $(X,d)$ is a componentwisely weakly weighted generalised quasi-metric space, then $(X,\leq_d)$ satisfies {\rm(DPC)}.
\end{proposition}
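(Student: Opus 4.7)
The plan is to reduce the generalised case to the already-established quasi-metric case (Proposition~\ref{prop:weak_wei_oc}) by splitting on the connected component structure. First, I would fix a chain $x\geq_d y\geq_d z$ in $X$, unwind the specialisation order as $d(y,x)=d(z,y)=0$, and note that the triangle inequality yields $d(z,x)\leq d(z,y)+d(y,x)=0$, hence $d(z,x)=0$ as well. The goal is then to show $d(x,z)=d(x,y)+d(y,z)$.

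I would split into two cases according to whether $x\cong_d z$ or not. If $x\cong_d z$, then $x$ and $z$ belong to a common connected component $\mathcal Q$, and since $z\leq_d y\leq_d x$, the convexity of $\mathcal Q$ established in Remark~\ref{rem:convex} forces $y\in\mathcal Q$ as well. By Fact~\ref{fact:cww_iff_each_Q_ww}, the restriction $d\restriction_{\mathcal Q}$ is weakly weighted, so by the quasi-metric version already proved in Proposition~\ref{prop:weak_wei_oc} it satisfies (DPC), and the equality $d(x,z)=d(x,y)+d(y,z)$ follows by applying (DPC) to the chain $x\geq y\geq z$ inside $\mathcal Q$.

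If instead $x\not\cong_d z$, then since we already know $d(z,x)=0<\infty$, the failure of the equivalence forces $d(x,z)=\infty$. It remains to show that $d(x,y)+d(y,z)=\infty$ too; if both summands were finite we would have $x\cong_d y$ and $y\cong_d z$, hence $x\cong_d z$ by transitivity, contradicting the case assumption. Thus at least one summand equals $\infty$, giving $d(x,y)+d(y,z)=\infty=d(x,z)$, which is (DPC) for this triple.

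I do not expect any genuine obstacle: the work is entirely in organising the case split and invoking convexity of connected components (Remark~\ref{rem:convex}) to bring us back to the quasi-metric setting. The only subtlety is recognising that the componentwise hypothesis gives no direct control on how distinct components interact, which is precisely why the second case has to be handled via the $\infty$-absorption of addition rather than via the weak weight itself.
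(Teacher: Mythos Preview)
Your proof is correct and follows essentially the same approach as the paper. The paper's one-line proof cites Fact~\ref{fact:cww_iff_each_Q_ww}, Proposition~\ref{prop:weak_wei_oc}, and Proposition~\ref{DPCQ}; your argument simply inlines the proof of Proposition~\ref{DPCQ} (the same case split on whether $d(x,z)$ is finite, the same appeal to convexity of connected components from Remark~\ref{rem:convex}) rather than citing it as a black box.
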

\begin{proof}
It immediately follows from Fact~\ref{fact:cww_iff_each_Q_ww}, and Propositions~\ref{prop:weak_wei_oc} and~\ref{DPCQ}. 
\end{proof}

Similarly to the previous result, the following theorem can be proved applying Propositions~\ref{prop:weak_wei_gen_to_oc} and~\ref{DPCQ}, Theorem~\ref{thm:weak_weight} and Fact~\ref{fact:cww_iff_each_Q_ww}.

\begin{theorem}\label{theo:gen_oc_vs_weak_wei}
Let $(X,d)$ be an invariant generalised quasi-metric semilattice. Then $(X,d)$ satisfies {\rm(DPC)} if and only if it is componentwisely weakly weighted. 
Moreover, $w$ is a componentwise weak weight for $d$ precisely when $w\approx w_X$, where $w_X$ is defined by $w_X\restriction_{\mathcal Q(x_i)}=w_{x_i}$ for every $i\in I$, where 
$\{x_i\}_{i\in I}$ is a fixed family of representatives of the connected components of $X$.
\end{theorem}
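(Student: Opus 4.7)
The plan is to split the biconditional into its two directions and then settle the characterisation of all componentwise weak weights; the hint in the paper (Propositions~\ref{prop:weak_wei_gen_to_oc} and~\ref{DPCQ}, Theorem~\ref{thm:weak_weight}, and Fact~\ref{fact:cww_iff_each_Q_ww}) essentially dictates the order of operations, and I would simply orchestrate these four results.

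The easy direction is ``componentwisely weakly weighted $\Rightarrow$ {\rm(DPC)}'': this is exactly the content of Proposition~\ref{prop:weak_wei_gen_to_oc}, which requires no semilattice hypothesis. For the converse, I would assume $(X,d)$ satisfies {\rm(DPC)} and first apply Proposition~\ref{DPCQ} to pass to the connected components: each $(\mathcal Q(x), d\restriction_{\mathcal Q(x)})$ inherits {\rm(DPC)}. Because $(X,d)$ is an invariant generalised quasi-metric semilattice, Proposition~\ref{congd} tells me that $\cong_d$ is a congruence and each connected component is a convex invariant subsemilattice; together with Fact~\ref{fact:inv} this means $(\mathcal Q(x_i), d\restriction_{\mathcal Q(x_i)})$ is an invariant quasi-metric semilattice satisfying {\rm(DPC)}. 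Hence Theorem~\ref{thm:weak_weight} supplies a weak weight for $d\restriction_{\mathcal Q(x_i)}$, and I take exactly the canonical representative $w_{x_i}$ furnished by that theorem. Gluing these across the components, as sanctioned by Fact~\ref{fact:cww_iff_each_Q_ww}, produces a componentwise weak weight $w_X$ for $d$, completing the equivalence.

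For the ``moreover'' clause, I would combine Fact~\ref{fact:cww_iff_each_Q_ww} with the last sentence of Theorem~\ref{thm:weak_weight}: a function $w\colon X\to\R$ is a componentwise weak weight for $d$ if and only if each restriction $w\restriction_{\mathcal Q(x_i)}$ is a weak weight for $d\restriction_{\mathcal Q(x_i)}$, which, by Theorem~\ref{thm:weak_weight} together with Proposition~\ref{prop:w+c}(a), happens precisely when $w\restriction_{\mathcal Q(x_i)}\sim w_{x_i}$ for every $i\in I$. This is exactly the relation $w\approx_{\{\mathcal Q(x_i)\}_{i\in I}} w_X$ introduced in \eqref{simsim}, so the description is complete. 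I do not foresee a real obstacle here; the only point deserving a brief remark is that a different choice of representatives $x_i\in\mathcal Q(x_i)$ produces a componentwise weak weight in the same equivalence class $[w_X]_{\approx}$, by the last assertion of Theorem~\ref{thm:weak_weight} applied componentwise, so the whole statement is independent of the chosen system of representatives.
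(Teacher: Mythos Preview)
Your proposal is correct and follows essentially the same route as the paper, which simply cites Propositions~\ref{prop:weak_wei_gen_to_oc} and~\ref{DPCQ}, Theorem~\ref{thm:weak_weight}, and Fact~\ref{fact:cww_iff_each_Q_ww} without further elaboration. Your explicit invocation of Proposition~\ref{congd} (and Fact~\ref{fact:inv}) to ensure that each connected component is an invariant quasi-metric subsemilattice is a helpful clarification of why Theorem~\ref{thm:weak_weight} applies componentwise.
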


\begin{example}\label{ex:PX_and_LG_cww}
Using Theorem~\ref{theo:gen_oc_vs_weak_wei} we can prove that the generalised quasi-metric meet-semilattices provided in Example~\ref{ex:w_weighted_gqm} are componentwisely weakly weighted. It is in fact easier to see that they satisfy {\rm(DPC)} (see \cite{DGBKTZ}), and so Theorem~\ref{theo:gen_oc_vs_weak_wei} implies that they are componentwisely weakly weighted. 
\begin{compactenum}[(a)] 
\item Let $S$ be a set. For the reasoning above, the generalised quasi-metric semilattice $(\mathcal P(S),d_{\mathcal P(S)})$ is componentwisely weakly weighted. Let us construct a componentwise weak weight. Fix a family $\{A_i\}_{i\in I}$ of representatives of the equivalence classes of $\cong_{d_{\mathcal P(S)}}$. Then define $w\colon \mathcal P(X)\to\R$ as follows: for every $A\subseteq X$, take the index $i\in I$ such that $A\cong_{d_{\mathcal P(S)}}A_i$ and set 
$$w(A)=w_{A_i}(A)=d_{\mathcal P(S)}(A_i,A)-d_{\mathcal P(S)}(A,A_i)=\lvert A\setminus A_i\rvert-\lvert A_i\setminus A\rvert.$$
\item Let us now explicitly provide the componentwise weak weight for the generalised quasi-metric semilattice $(L(G),d_{L(G)})$, where $G$ is an abelian group. Let $\{K_i\}_{i\in I}$ be a family of representatives of the equivalence classes of $\cong_{d_{L(G)}}$. Then define $w\colon L(G)\to\R$ as follows: for every $i\in I$, $w$ coincides with $w_{K_i}$ on $\mathcal Q(K_i)$, i.e., for every subgroup $H$ of $G$ such that $H\cong_{d_{L(G)}}K_i$, 
$$w(H)=w_{K_i}(H)=d_{L(G)}(K_i,H)-d_{L(G)}(H,K_i)=\log\lvert H+K_i:K_i\rvert-\log\lvert H+K_i:H\rvert.$$  
\end{compactenum}
\end{example}

Keeping Theorem~\ref{theo:gen_oc_vs_weak_wei} 
in mind, we generalise the notion of semivaluations and semi-co-valuations  in order to extend Corollary~\ref{coro:w_wei_to_valu2} to generalised quasi-metrics (see Corollary~\ref{coro:w_wei_to_valu2gen}).

\begin{definition}\label{def:gen_sv_and_scv}
Let $(X,\leq)$ be a meet-semilattice. Given a congruence $\cong$ on $X$, a function $f\colon X\to\R$ is called:
\begin{compactenum}[-]
	\item {\em generalised meet valuation with respect to $\cong$} if $f(x)+f(x\wedge y\wedge z)\geq f(x\wedge y)+ f(x\wedge z)$, for every $x,y,z\in X$ satisfying $x\cong x\wedge z$ and $y\cong x\wedge y$;
	\item {\em generalised meet co-valuation with respect to $\cong$} if $f(x)+f(x\wedge y\wedge z)\leq f(x\wedge y)+ f(x\wedge z)$, for every $x,y,z\in X$ satisfying $x\cong x\wedge z$ and $y\cong x\wedge y$.
\end{compactenum}
Let $(X,\leq)$ be a join-semilattice. Given a congruence $\cong$, a function $f\colon X\to\R$ is called:
\begin{compactenum}[-]
	\item {\em generalised join valuation with respect to 
		$\cong$} if $f(x)+f(x\vee y\vee z)\leq f(x\vee y)+f(x\vee z)$, for every $x,y,z\in X$ satisfying $x\cong x\vee z$ and $y\cong x\vee y$;
	\item {\em generalised join co-valuation with respect to 
		$\cong$} if $f(x)+f(x\vee y\vee z)\geq f(x\vee y)+f(x\vee z)$, for every $x,y,z\in X$ satisfying $x\cong x\vee z$ and $y\cong x\vee y$.
\end{compactenum}
Similarly to the non-generalised case, a {\em generalised semivaluation} is either a generalised join valuation or a generalised meet valuation, and a {\em generalised semi-co-valuation} is either a generalised join co-valuation or a generalised meet co-valuation.
\end{definition}

It is easy to see that the notions provided in Definition~\ref{def:gen_sv_and_scv} extend the ones given in Definition~\ref{def:val_and_coval}. In fact, for example, if $X$ is a meet-semilattice, then $f\colon X\to\R$ is a meet co-valuation if and only if it is a generalised meet co-valuation with respect to the trivial congruence having just one equivalence class $X$.

Let us provide an immediate, but useful fact.
\begin{fact}\label{fact:equivalence_classes}
Let $x,y,z$ be three points in a semiliattice $X$ and $\cong$ be a congruence. If $x\cong x\ast z$ and $y\cong x\ast y$, then
$$y\cong x\ast y\cong x\ast y\ast z\cong y\ast z.$$
\end{fact}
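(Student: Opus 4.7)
The plan is to derive each of the three equivalences in the chain directly from the congruence property applied to the two given hypotheses, together with the commutativity, associativity, and idempotency of the semilattice operation $\ast$.

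First I would establish the middle equivalence $x \ast y \cong x \ast y \ast z$. Starting from the hypothesis $x \cong x \ast z$ and the trivial $y \cong y$, the definition of congruence yields $x \ast y \cong (x \ast z) \ast y$, and by commutativity and associativity the right-hand side is $x \ast y \ast z$. Combining this with the given $y \cong x \ast y$ by transitivity immediately delivers $y \cong x \ast y \cong x \ast y \ast z$, which covers the first two links of the chain.

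For the last link $x \ast y \ast z \cong y \ast z$, I would apply the congruence property to the other hypothesis $y \cong x \ast y$ together with $z \cong z$, obtaining $y \ast z \cong (x \ast y) \ast z = x \ast y \ast z$, and then invoke symmetry. The only thing that could conceivably go wrong is a subtle misuse of associativity or commutativity, but since in a semilattice the operation $\ast$ (whether $\wedge$ or $\vee$) is commutative, associative, and idempotent, regroupings like $(x \ast z) \ast y = x \ast y \ast z$ and $(x \ast y) \ast z = x \ast y \ast z$ are unambiguous. There is therefore no genuine obstacle; the fact is a bookkeeping consequence of the congruence axiom, and the short proof simply records the two applications of that axiom needed to close the chain.
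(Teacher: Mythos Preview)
Your proof is correct. The first two links of the chain are obtained exactly as in the paper: apply the congruence property to $x\cong x\ast z$ (operating with $y$) to get $x\ast y\cong x\ast y\ast z$, and combine with the hypothesis $y\cong x\ast y$.

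For the final link $x\ast y\ast z\cong y\ast z$ you take a slightly different route from the paper. The paper appeals to the convexity of equivalence classes (Fact~\ref{prop:congruence}): since $y\cong x\ast y\ast z$ and $y\ast z$ sits between these two elements in the partial order, convexity forces $y\ast z$ into the same class. You instead apply the congruence axiom a second time, now to the hypothesis $y\cong x\ast y$ together with $z\cong z$, obtaining $y\ast z\cong (x\ast y)\ast z=x\ast y\ast z$ directly. Your argument is marginally more self-contained, since it avoids invoking the earlier convexity result and uses nothing beyond the congruence axiom and the semilattice identities; the paper's version, on the other hand, illustrates how convexity can shortcut such verifications once it is available.
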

\begin{proof}
Since $x\cong x\ast z$ and $\cong$ is a congruence, $x\ast y\cong x\ast y\ast z$. Finally, the convexity of the equivalence class implies that $y\cong y\ast z$ since $y\cong x\ast y\ast z$. 
\end{proof}

The following diagram represents the statement of Fact~\ref{fact:equivalence_classes} if $X$ is a meet-semilattice. The hypotheses are marked by the symbol $\cong$ on the edge connecting the two elements. Then the elements of the equivalence class that the fact relates to  one another 
are framed.
$$
\xymatrix{
x\ar@{-}_{\cong}[dr]\ar@{-}[drr] & & z\ar@{-}[dr]\ar@{-}[dl] & & *+[F]{y}\ar@{-}[dl]\ar@{-}_{\cong}[dll]\\
& x\wedge z\ar@{-}[dr] & *+[F]{x\wedge y} \ar@{-}[d] & *+[F]{y\wedge z} \ar@{-}[dl] &\\
& & *+[F]{x\wedge y\wedge z} & &
}
$$

\begin{proposition}\label{prop:mono_semival+gen_new}
Let $(X,\leq)$ be a semilattice, $\cong$ be a congruence on $X$ and $f\colon X\to \R$ be a function.
\begin{compactenum}[(a)]
	\item The function $f$ is a generalised semivaluation if and only if $-f$ is a generalised semi-co-valuation.
	\item If $f$ is a generalised semi(-co-)valuation, then every $f'\approx f$ with respect to $\{[x]_{\cong}\}_{x\in X}$ has the same property. 
	\item If $f$ is a generalised semivaluation, then $f\restriction_{[x]_{\cong}}$ is a semivaluation for every $x\in X$. In particular, $f\restriction_{[x]_{\cong}}$ is non-decreasing.
	\item If $f$ is a generalised semi-co-valuation, then $f\restriction_{[x]_{\cong}}$ is a semi-co-valuation. In particular, $f\restriction_{[x]_{\cong}}$ is non-increasing.
\end{compactenum}
\end{proposition}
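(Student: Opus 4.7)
I would prove each item for meet-semilattices; the join versions follow by Remark~\ref{rem:duality_join_meet}. Item~(a) is immediate from Definition~\ref{def:gen_sv_and_scv}: the inequalities defining a generalised meet valuation and a generalised meet co-valuation differ only by a sign, while the hypotheses $x\cong x\wedge z$ and $y\cong x\wedge y$ on the triple $(x,y,z)$ coincide, so negating $f$ converts one condition into the other.

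For item~(b), my plan is to exploit the fact that in the defining four-term inequality the additive constants cancel in pairs. Since $f'\approx f$ with respect to $\{[x]_\cong\}_{x\in X}$, for each equivalence class $C$ there exists $c_C\in\R$ with $f'\restriction_C=f\restriction_C+c_C$. Fix $x,y,z\in X$ satisfying $x\cong x\wedge z$ and $y\cong x\wedge y$. Fact~\ref{fact:equivalence_classes} then ensures that $y$, $x\wedge y$, $y\wedge z$ and $x\wedge y\wedge z$ all lie in a common class $D_1$, while $x$ and $x\wedge z$ lie together in a class $D_2$. Writing $c_i=c_{D_i}$, one computes
\[
f'(x)+f'(x\wedge y\wedge z)-f'(x\wedge y)-f'(x\wedge z)=f(x)+f(x\wedge y\wedge z)-f(x\wedge y)-f(x\wedge z),
\]
since the contributions $c_2+c_1$ from the left-hand side cancel the contributions $c_1+c_2$ from the right-hand side. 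Hence the inequality in Definition~\ref{def:gen_sv_and_scv} transfers from $f$ to $f'$.

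For items~(c) and~(d), I would fix $\bar x\in X$, set $C=[\bar x]_\cong$ and argue that the generalised condition specialises to the ordinary one on $C$. By Fact~\ref{prop:congruence}, $C$ is a subsemilattice of $X$; hence for any triple $a,b,c\in C$ the congruence property yields $a\wedge c\cong a\wedge a=a$ and $a\wedge b\cong b\wedge b=b$, so that the hypotheses $a\cong a\wedge c$ and $b\cong a\wedge b$ of Definition~\ref{def:gen_sv_and_scv} are automatically satisfied. Consequently, the generalised (co-)valuation inequality restricted to $C$ becomes the ordinary (co-)valuation inequality, and $f\restriction_C$ is a (co-)valuation on $C$. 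The stated monotonicity then follows directly from Proposition~\ref{prop:mono_semival+}(c) and~(d), applied to $f\restriction_C$.

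The argument is essentially mechanical: the only point demanding care is in item~(b), namely tracking via Fact~\ref{fact:equivalence_classes} precisely which of the four terms $f'(x)$, $f'(x\wedge y\wedge z)$, $f'(x\wedge y)$, $f'(x\wedge z)$ sit in the same equivalence class, so that the corresponding constants cancel. Once this bookkeeping is in place, every other step is routine unpacking of definitions.
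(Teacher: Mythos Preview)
Your proof is correct. The paper states this proposition without proof, treating it as routine; your argument supplies exactly the natural verification the authors presumably had in mind, with the only nontrivial point---the bookkeeping for item~(b) via Fact~\ref{fact:equivalence_classes}---handled correctly.
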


It is important to note that we cannot extend the final implications of items (c) and (d) of Proposition~\ref{prop:mono_semival+gen_new} and state that a generalised semi(-co)-valuation is monotone with respect to the whole semilattice. On an intuitive level, item (b) of Proposition~\ref{prop:mono_semival+gen_new} represents an obstruction to that property. We provide an explicit example of a strictly increasing generalised semi-co-valuation.

\begin{example}
Consider the lattice $\SSS=\{0,1\}$ as defined in Example~\ref{ex:w_weighted_qm}(a), and the congruence $\cong$ given by the equality. It can be verified that the map $f=id_{\mathbb S}$ is a generalised meet co-valuation even though it is strictly increasing.
\end{example}

\begin{corollary}\label{wgensemicoval}
	Let $(X,d)$ be an invariant generalised quasi-metric semilattice and $w$ be a componentwise weak weight for $d$. Then $w$ is a generalised semi-co-valuation with respect to $\cong_d$ such that, for every $x\in X$, $w\restriction_{\mathcal Q(x)}$ is strictly decreasing. Thus $-w$ is a generalised semivaluation with respect to $\cong_d$ such that, for every $x\in X$, $w\restriction_{\mathcal Q(x)}$ is strictly increasing.
\end{corollary}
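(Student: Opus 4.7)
I will treat the meet-semilattice case in detail; the join case is dual. The strategy is to first extract local information from each connected component and then to glue it with the invariance of $d$ to cover the cross-component condition in Definition~\ref{def:gen_sv_and_scv}.

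First, by Fact~\ref{fact:cww_iff_each_Q_ww}, $w\restriction_{\mathcal Q(x)}$ is a weak weight for $d\restriction_{\mathcal Q(x)}$ for every $x\in X$. By Proposition~\ref{congd}, each $\mathcal Q(x)$ is an invariant subsemilattice of $X$ (with the restricted quasi-metric), so Theorem~\ref{theo:w_wei_to_valu_join_1} applies and yields that $w\restriction_{\mathcal Q(x)}$ is a strictly decreasing semi-co-valuation on $\mathcal Q(x)$. This already gives the ``componentwise strict decrease'' part of the statement.

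To verify the generalised meet co-valuation condition, fix $x,y,z\in X$ with $x\cong_d x\wedge z$ and $y\cong_d x\wedge y$. Fact~\ref{fact:equivalence_classes} forces $x\wedge y\cong_d x\wedge y\wedge z$, so the two pairs $\{x,x\wedge z\}$ and $\{x\wedge y, x\wedge y\wedge z\}$ each lie in a single connected component (possibly different). Applying the weak-weight identity of Definition~\ref{def:w_weighted} in $\mathcal Q(x)$ and $\mathcal Q(y)$ respectively, together with invariance (Definition~\ref{invdef}) to rewrite $d(x,x\wedge z)=d(x,z)$ and $d(x\wedge y, x\wedge y\wedge z)=d(x\wedge y,z)$, I obtain
\[
w(x\wedge z)-w(x)=d(x,z), \qquad w(x\wedge y\wedge z)-w(x\wedge y)=d(x\wedge y,z).
\]
The desired inequality $w(x)+w(x\wedge y\wedge z)\leq w(x\wedge y)+w(x\wedge z)$ therefore reduces to $d(x\wedge y,z)\leq d(x,z)$.

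The latter is the main (and only nonroutine) step, and it is handled by the invariance characterisation Proposition~\ref{invth}(b): writing $d(x\wedge y,z)=d(x\wedge y,(x\wedge y)\wedge z)=d(x\wedge y, y\wedge z)$ (first by invariance, then since $x\wedge y\wedge z=(x\wedge y)\wedge(y\wedge z)$ and using invariance again), the inequality $d(y\wedge x, y\wedge z)\leq d(x,z)$ given by Proposition~\ref{invth}(b) (non-expansiveness of the shift $s_y$) closes the argument. The last assertion, that $-w$ is a generalised semivaluation with the dual monotonicity, then follows at once from Proposition~\ref{prop:mono_semival+gen_new}(a). The main obstacle is purely notational: ensuring via Fact~\ref{fact:equivalence_classes} that every weak-weight identity we invoke is used within a single connected component, since the relation~\eqref{eq:w_weight} only holds componentwise.
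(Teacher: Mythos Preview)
Your proof is correct and follows essentially the same route as the paper: both arguments use the componentwise weak-weight identity on the pairs $\{x,x\wedge z\}$ and $\{x\wedge y,x\wedge y\wedge z\}$ (justified via Fact~\ref{fact:equivalence_classes}), reduce the co-valuation inequality to a single distance inequality, and settle that inequality using the shift non-expansiveness of Proposition~\ref{invth}(b); the strict monotonicity on components comes from Theorem~\ref{theo:w_wei_to_valu_join_1}, and the statement for $-w$ from Proposition~\ref{prop:mono_semival+gen_new}(a). The only cosmetic difference is that you express the key inequality as $d(x\wedge y,z)\leq d(x,z)$ whereas the paper writes it as $d(x\wedge y,x\wedge y\wedge z)\leq d(x,x\wedge z)$, which are equivalent by invariance.
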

\begin{proof}
Assume that $(X,\leq_d)$ is a meet-semilattice. Let us take three points $x,y,z\in X$ such that $x\cong_d x\wedge z$ and $y\cong_d x\wedge y$. Then, thanks to Fact~\ref{fact:equivalence_classes} and the invariance of $d$,
\begin{align*}
w(x)&\,=w(x)+d(x,x\wedge z)-d(x,x\wedge z)=w(x\wedge z)-d(x,x\wedge z)\leq\\
&\,\leq w(x\wedge z)-d(x\wedge y,x\wedge y\wedge z)-w(x\wedge y)+w(x\wedge y)=\\
&\,=w(x\wedge z)-w(x\wedge y\wedge z)+w(x\wedge y).
\end{align*}
Hence, $w$ is a generalised meet co-valuation with respect to $\cong_d$.  
Theorem~\ref{theo:w_wei_to_valu_join_1} implies the second part of the statement.

The second statement follows from the first one by applying Proposition~\ref{prop:mono_semival+gen_new}(a).
\end{proof}

\begin{definition}\label{def:d_fgen}
	If $X$ is a meet-semilattice, $\cong$ a congruence on $X$ and $f\colon X\to \R$ is a generalised meet co-valuation, we define $d_{f}\colon X\times X\to\R_{\geq 0}\cup\{\infty\}$ as follows: for every $x,y\in X$,
	$$d_{f}(x,y):=\begin{cases}\begin{aligned}&f(x\wedge y)-f(x) &\text{if $x\cong x\wedge y$,}\\
			&\infty &\text{otherwise.}\end{aligned}\end{cases}$$
	%
	
	If $X$ is a join-semilattice, $\cong$ a congruence on $X$ and
	$f\colon X\to \R$ is a generalised join 
	co-valuation, we define $d_{f}\colon X\times X\to\R_{\geq 0}\cup\{\infty\}$ as follows: for every $x,y\in X$, 
	$$d_{f}(x,y):=\begin{cases}
		\begin{aligned}& f(y)-f(x\vee y)&\text{if $y\cong x\vee y$,}\\
			&\infty &\text{otherwise.}\end{aligned}\end{cases}$$
\end{definition}

\begin{theorem}\label{theo:w_wei_to_valu_2gen}
	Let $(X,\leq)$ be a semilattice and $\cong$ a congruence on it. 
	
	If $f$ is a generalised semi-co-valuation such that $f\restriction_{[x]_{\cong}}$ is strictly decreasing for every $x\in X$, then $d_f$ is an invariant generalised quasi-metric on $X$ satisfying $\leq=\leq_{d_f}$, and it is componentwisely weakly weighted by $f$. Moreover, 
	$\cong\,=\,\cong_{d_f}$.
	
	If $f$ is a generalised semivaluation such that $f\restriction_{[x]_{\cong}}$ is strictly increasing for every $x\in X$, then $d_{-f}$ is an invariant generalised quasi-metric on $X$ satisfying $\leq=\leq_{d_{-f}}$, and it is componentwisely weakly weighted by $-f$. Moreover, 
	$\cong\,=\,\cong_{d_{-f}}$.
\end{theorem}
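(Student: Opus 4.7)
\smallskip
\noindent\textbf{Proof proposal.} I would prove the meet-semilattice case in detail; the join-semilattice version is dual, and the second statement of the theorem follows from the first by applying Proposition~\ref{prop:mono_semival+gen_new}(a), which gives that $-f$ is a generalised meet co-valuation whose restriction to each $\cong$-class is strictly decreasing. So assume $(X,\leq)$ is a meet-semilattice and $f\colon X\to\R$ is a generalised meet co-valuation with $f\restriction_{[x]_\cong}$ strictly decreasing for every $x\in X$. The goals are to verify (QM1), (QM2); the equalities $\leq\,=\,\leq_{d_f}$ and $\cong\,=\,\cong_{d_f}$; invariance; and the componentwise weak weightedness by $f$.

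The easier steps I would dispose of first. Whenever $x\cong x\wedge y$, we have $x\wedge y\leq x$ and both lie in $[x]_\cong$, so strict decrease gives $f(x\wedge y)\geq f(x)$ and hence $d_f(x,y)\geq 0$. For (QM1), if $d_f(x,y)=d_f(y,x)=0$, both values are finite, so $x\cong x\wedge y\cong y$ and $f(x)=f(x\wedge y)=f(y)$; strict decrease (in particular injectivity) on $[x]_\cong$ forces $x=x\wedge y=y$. For $\leq\,=\,\leq_{d_f}$, use the same injectivity for $(\Leftarrow)$, while $(\Rightarrow)$ is immediate. Invariance splits according to whether $x\cong x\wedge y$: if so, then $x\wedge(x\wedge y)=x\wedge y$ is also $\cong$-equivalent to $x$, and both $d_f(x,y)$ and $d_f(x,x\wedge y)$ equal $f(x\wedge y)-f(x)$; otherwise, both are $\infty$.

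The substantive part is (QM2) together with $\cong\,=\,\cong_{d_f}$. For $\cong\,=\,\cong_{d_f}$ the direction $(\Rightarrow)$ uses Fact~\ref{prop:congruence}: if $x\cong y$, then the class $[x]_\cong$ is a convex subsemilattice, so $x\wedge y\in[x]_\cong$ and both $d_f(x,y),d_f(y,x)$ are finite; conversely $x\cong_{d_f}y$ means both values are finite, forcing $x\cong x\wedge y\cong y$. For (QM2), the nontrivial case is when $d_f(x,y)$ and $d_f(y,z)$ are both finite, i.e. $x\cong x\wedge y$ and $y\cong y\wedge z$. Applying the congruence property twice (together with convexity of $\cong$-classes from Fact~\ref{prop:congruence}) yields the sandwich $x\wedge y\wedge z\leq x\wedge z\leq x$ with $x\wedge y\wedge z\cong x$, so $x\wedge z\in[x]_\cong$ and in particular $d_f(x,z)$ is finite. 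To establish the numerical inequality, I would apply the generalised meet co-valuation axiom to the triple $(x',y',z')=(y,x,z)$: the hypotheses $x'\cong x'\wedge z'$ and $y'\cong x'\wedge y'$ are exactly $y\cong y\wedge z$ and $x\cong x\wedge y$, producing
\[f(y)+f(x\wedge y\wedge z)\leq f(x\wedge y)+f(y\wedge z).\]
Since $f$ is strictly decreasing on the common $\cong$-class containing $x\wedge z$ and $x\wedge y\wedge z$, one has $f(x\wedge z)\leq f(x\wedge y\wedge z)$, and substituting produces $f(y)+f(x\wedge z)\leq f(x\wedge y)+f(y\wedge z)$, which rearranges to $d_f(x,z)\leq d_f(x,y)+d_f(y,z)$. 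This triangle step is the main obstacle, since it is the only place where all three hypotheses (congruence, convexity of classes, and the generalised co-valuation axiom) must be combined.

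Finally, for componentwise weak weightedness I would verify \eqref{eq:w_weight} on each $\cong$-class: when $x\cong y$, both $d_f(x,y)$ and $d_f(y,x)$ equal $f(x\wedge y)-f(x)$ and $f(x\wedge y)-f(y)$ respectively, so
\[d_f(x,y)+f(x)=f(x\wedge y)=d_f(y,x)+f(y),\]
as required by Definition~\ref{def:componentwisely_ww_gen_qm}. This concludes the first statement, and the second follows as noted above.
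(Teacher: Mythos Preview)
Your proposal is correct and follows essentially the same approach as the paper. The only cosmetic difference is in the verification of (QM2): the paper applies the generalised co-valuation axiom directly to the triple $(x,y,z)$ (using Fact~\ref{fact:equivalence_classes} to place $y\wedge z$ in the right $\cong$-class and then bounding $f(x\wedge y\wedge z)\geq f(y\wedge z)$), whereas you apply it to the permuted triple $(y,x,z)$ and use convexity of $\cong$-classes to place $x\wedge z$ in $[x]_\cong$ and bound $f(x\wedge z)\leq f(x\wedge y\wedge z)$; these are symmetric variants of the same computation.
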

\begin{proof}
We give a proof when $(X,\leq)$ is a meet-semilattice and $f$ is a generalised meet co-valuation. 

First of all, we claim that, $\cong\, =\,\cong_{d_f}$ (note that $\cong_{d_f}$ can be defined even if $d_f$ is not a priori a generalised quasi-metric). Indeed, let $x,y\in X$. If $x\cong y$, then, since $\cong$ is a congruence, $x\cong x\wedge y\cong y$, and thus $d_f(x,y)<\infty$ and $d_f(y,x)<\infty$. Vice versa, if $d_f(x,y)<\infty$ and $d_f(y,x)<\infty$, then $x\cong x\wedge y$ and $y\cong x\wedge y$, and so $x\cong y$.

Let us now show that $d_f$ is a generalised quasi-metric. Since $f$ is non-increasing on each $[x]_{\cong}$, $d_f$ assumes non-negative values. To prove (QM1), take two points $x,y\in X$ such that $d_f(x,y)=d_f(y,x)=0$. Then the definition of $d_f$ implies that $x\cong x\wedge y\cong y$ and $f(x)=f(x\wedge y)=f(y)$. As $f$ is strictly monotone on $[x]_{\cong}$, the equalities $x=x\wedge y=y$ descend.	
	
Let us prove that $d_f$ satisfies (QM2). For every triple of points $x,y,z\in X$, we claim that
\begin{equation}\label{eq:d_f_QM2}
d_f(y,z)\leq d_f(y,x)+d_f(x,z).
\end{equation}
If either $d_f(y,x)=\infty$ or $d_f(x,z)=\infty$, there is nothing to prove. Therefore, we can assume that $y\cong x\wedge y$ and $x\cong x\wedge z$. Then Fact~\ref{fact:equivalence_classes} implies that also $y\cong y\wedge z$. Since $f$ is a generalised meet co-valuation and $f$ is non-increasing on the equivalence classes of $\cong$, we obtain the following chain:
\begin{align*}f(x\wedge y)-f(y)+f(x\wedge z)-f(x)&\,\geq f(x\wedge y)-f(y)+f(x\wedge z)+f(x\wedge y\wedge z)-f(x\wedge y)-f(x\wedge z)=\\
&\,=f(x\wedge y\wedge z)-f(y)\geq f(y\wedge z)-f(y).\end{align*}
Hence, \eqref{eq:d_f_QM2} descends by plugging in the definition of $d_f$.

Note that, for $x,y\in X$, $x\leq_{d_f}y$ if and only if $x\cong x\wedge y$ and $f(x)=f(x\wedge y)$, which is then equivalent to $x=x\wedge y$ and therefore to $x\leq y$. Hence, $\leq=\leq_{d_f}$. 
Finally, $d_f$ is invariant by construction and it can be easily seen that $f$ is a componentwise weak weight for $d_f$.

The second statement can be derived from the first one by using Proposition~\ref{prop:mono_semival+gen_new}.
\end{proof}

Using the steps provided in this subsection, we can now extend Corollary~\ref{coro:w_wei_to_valu2} to generalised quasi-metrics.

\begin{corollary}\label{coro:w_wei_to_valu2gen}
Let $(X,\leq)$ be a semilattice and $\cong$ a congruence on it. There are one-to-one correspondences between the three sets consisting of the following objects:
	\begin{compactenum}[(a)]
		\item invariant componentwise weakly weighted generalised quasi-metrics $d$ on $X$ satisfying $\leq_d=\leq$;
		\item equivalence classes of generalised semi-co-valuations $f$ strictly decreasing on each $[x]_{\cong}$;
		\item equivalence classes of generalised semivaluations $g$ strictly increasing on each $[x]_{\cong}$.
	\end{compactenum}
	More explicitly, to $d$ we associate $[w]_\approx$ and $[-w]_\approx$ where $w$ is a componentwise weak weight for $d$ and  $\approx$ is relative to $\{\mathcal Q_{d}(x)\}_{x\in X}$. To $[f]_{\approx}$, where $\approx$ is relative to $\{[x]_{\cong}\}_{x\in X}$, we associate $d_f$ and $[-f]_\approx$. Finally, to $[g]_{\approx}$, where $\approx$ is relative to $\{[x]_{\cong}\}_{x\in X}$, we associate $d_{-g}$ and $[-g]_\approx$.
%
%
%
\end{corollary}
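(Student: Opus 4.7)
The plan is to mimic the proof of Corollary~\ref{coro:w_wei_to_valu2}, but replace each ingredient by its ``generalised'' upgrade developed in Section~\ref{ss:wwgen}. I will write the argument for a meet-semilattice; the join case follows dually using Remark~\ref{rem:duality_join_meet}.

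First, the bijection between (b) and (c) is immediate: by Proposition~\ref{prop:mono_semival+gen_new}(a), the assignment $[f]_\approx \mapsto [-f]_\approx$ interchanges generalised semi-co-valuations and generalised semivaluations, and it interchanges strict decrease on each $[x]_\cong$ with strict increase, so it is a well-defined involution between (b) and (c).

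For (a)$\leftrightarrow$(b), I will define maps $\Phi\colon \text{(a)}\to \text{(b)}$ and $\Psi\colon \text{(b)}\to \text{(a)}$ and show they are mutually inverse. Given $d$ in (a), pick any componentwise weak weight $w$ for $d$ and set $\Phi(d) = [w]_\approx$. By Corollary~\ref{wgensemicoval}, $w$ is a generalised semi-co-valuation with respect to $\cong_d$ that is strictly decreasing on each $\mathcal Q(x)$; the implicit identification $\cong\,=\,\cong_d$ (forced by the structure of the correspondence and delivered on the other side by Theorem~\ref{theo:w_wei_to_valu_2gen}) therefore places $[w]_\approx$ in (b), and Corollary~\ref{approx} guarantees the independence from the choice of $w$. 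In the opposite direction, set $\Psi([f]_\approx) = d_f$ as in Definition~\ref{def:d_fgen}. Theorem~\ref{theo:w_wei_to_valu_2gen} supplies in one shot all the properties required for $d_f$ to belong to (a): invariance, $\leq_{d_f}=\leq$, $\cong_{d_f}=\cong$, and the fact that $f$ is itself a componentwise weak weight for $d_f$. That $\Psi$ descends to $\approx$-classes is a direct inspection of Definition~\ref{def:d_fgen}, since adding a constant on a single equivalence class $[x]_\cong$ leaves all the finite differences $f(x\wedge y)-f(x)$ (taken inside that class) unchanged, and the $\infty$ entries are unaffected.

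It then remains to verify $\Phi\circ\Psi=\mathrm{id}$ and $\Psi\circ\Phi=\mathrm{id}$. The former is immediate: $f$ itself is a componentwise weak weight for $d_f$, so Corollary~\ref{approx} yields $\Phi(d_f)=[f]_\approx$. For the latter, I must check $d_w = d$ for any componentwise weak weight $w$ of $d\in\text{(a)}$. When $x\cong_d y$, the weak weight equation \eqref{eq:w_weight} on the component $\mathcal Q(x)$, combined with $d(x\wedge y,x)=0$ (since $x\wedge y\leq x$) and the invariance identity $d(x,x\wedge y)=d(x,y)$, gives a one-line computation $d_w(x,y)=w(x\wedge y)-w(x)=d(x,y)$; when $x\not\cong_d y$, both sides equal $\infty$ by the definitions of $d_w$ and $\cong_d$.

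I do not anticipate any serious obstacle; the argument is essentially a bookkeeping assembly of Corollaries~\ref{approx} and~\ref{wgensemicoval} with Theorem~\ref{theo:w_wei_to_valu_2gen}. The only point demanding mild care is keeping the congruence data straight on both sides of the correspondence: any $d$ in (a) comes with its own $\cong_d$, and the correspondence only makes sense once one identifies $\cong_d$ with the ambient $\cong$, which is precisely what Theorem~\ref{theo:w_wei_to_valu_2gen} ensures on the valuation side.
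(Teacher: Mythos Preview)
Your proposal is correct and follows essentially the same route as the paper: reduce (b)$\leftrightarrow$(c) to the sign flip via Proposition~\ref{prop:mono_semival+gen_new}(a), build the two maps between (a) and (b) from Corollaries~\ref{approx} and~\ref{wgensemicoval} and Theorem~\ref{theo:w_wei_to_valu_2gen}, and verify they are mutual inverses by the same one-line computation $d_w(x,y)=w(x\wedge y)-w(x)=d(x,x\wedge y)=d(x,y)$ on finite entries together with the $\infty$ case. Your explicit remark that the correspondence tacitly requires identifying $\cong_d$ with the ambient $\cong$ is well taken; the paper handles this point only implicitly (invoking $\cong=\cong_d$ ``by Theorem~\ref{theo:w_wei_to_valu_2gen}'' in the final sentence), so your caveat is, if anything, a clarification rather than a deviation.
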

\begin{proof} 
Let $(X,\leq)$ be a meet-semilattice with a congruence $\cong$.
We prove 	the correspondence between (a) and (b) since that between (b) and (c) is trivial by Theorem~\ref{theo:w_wei_to_valu_2gen} and Proposition~\ref{prop:mono_semival+gen_new}(a).

Both associations described in the statement are well-defined by virtue of 
Corollaries~\ref{approx} and~\ref{wgensemicoval}, Proposition~\ref{prop:mono_semival+gen_new}(b) and Theorem~\ref{theo:w_wei_to_valu_2gen} since two generalised semi-co-valuations $f,f'$ of $X$ satisfy $d_f=d_{f'}$ provided that $f\approx f'$. 

Moreover, we have to verify that the two associating maps are one the inverse of the other. Indeed, given a generalised semi-co-valuation $f$ of $X$ strictly decreasing on each $[x]_{\cong}$, we have seen in Theorem~\ref{theo:w_wei_to_valu_2gen} that $f$ is a generalised weak weight for $d_f$; and then to $d_f$ we associate $[f]_\approx$ as in Corollary~\ref{wgensemicoval}. 
On the other hand, to check that if $d$ is an invariant generalised quasi-metric on $X$ with $\leq_d=\leq$ and componentwise weakly weighted by $w$, then $d=d_w$, proceed as in  the proof of Corollary~\ref{coro:w_wei_to_valu2}, taking into account that, for $x,y\in X$, $d(x,y)=\infty$ if and only if $d_w(x,y)=\infty$ as $\cong\,=\,\cong_d$ by Theorem~\ref{theo:w_wei_to_valu_2gen}. 
\end{proof}

\section{Entropy for meet-semilattices with respect to an equivalence relation}\label{entropysec}

A self-map $\phi\colon X\to X$ of a meet-semilattice $X$ is an {\em endomorphism} if, for every $x,y\in X$, $\phi(x\wedge y)=\phi(x)\wedge\phi(y)$. 
For $n\in\N_+$, the \emph{$n$-th trajectory} of $\phi$ in $x\in X$ is
	$$T_{n}(\phi,x):=x\wedge\phi(x)\wedge\cdots\wedge\phi^{n-1}(x).$$
Fix a function $f\colon X\to \R$ and, for $x\in X$, let
\begin{equation}\label{newheq} 
h_f(\phi,x):=\limsup_{n\to \infty}\frac{f(T_{n}(\phi,x))}{n}.
\end{equation}
Clearly, if the function $g\colon X\to\R$ satisfies $g\sim f$, then $h_g(\phi,x)=h_f(\phi,x)$ for every $x\in X$.

\smallskip
We introduce in the following definition a new notion of intrinsic entropy for endomorphisms of meet-semilattices with respect to  an equivalence relation.

\begin{definition}\label{newh}
Let $X$ be a meet-semilattice, $\cong$ an equivalence relation on $X$ and $\phi$ an endomorphism of $X$.
A point $x\in X$ is \emph{$(\phi,{\cong})$-inert} if $T_{n}(\phi,x)\cong x$ 
for every $n\in\N_+$.

For a function $f\colon X\to \R$, the {\em intrinsic semilattice entropy relative to $\cong$} is defined as
$$\widetilde h_{f,\cong}(\phi):=\sup\{h_f(\phi,x)\mid x\in X,\ x\ \text{is $(\phi,{\cong})$-inert}\}.$$
\end{definition}

Whenever the equivalence relation $\cong$ is a congruence on $X$, we can provide the following characterisations of  a $(\phi,\cong)$-inert point. We say that an endomorphism $\phi$ of a meet-semilattice $X$ {\em respects} an equivalence relation $\cong$ if $\phi(x)\cong\phi(y)$ for every $x,y\in X$ satisfying $x\cong y$.

\begin{lemma}\label{rem:newTQ} 
Let $X$ be a meet-semilattice, $\cong$ a congruence on $X$, $\phi$ an endomorphism of $X$ and $x\in X$.
Then the following conditions are equivalent:
\begin{compactenum}[(a)]
\item $x$ is $(\phi,{\cong})$-inert;
\item  $x\wedge\phi^n(x)\cong x$ for every $n\in\N$.
\end{compactenum}
Moreover, they imply the following item:
\begin{compactenum}[(c)]
	\item $x\wedge\phi(x)\cong x$.
\end{compactenum}
\medskip
\noindent
In addition, if $\phi$ respects $\cong$, then (a) and (b) are equivalent to (c).
\end{lemma}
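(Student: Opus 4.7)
The plan is to establish the equivalence (a)$\Leftrightarrow$(b), the trivial implication (a)$\Rightarrow$(c), and, under the additional hypothesis that $\phi$ respects $\cong$, to upgrade (c) back to (a) by induction.

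For (a)$\Rightarrow$(b), I would exploit the convexity of each equivalence class $[x]_{\cong}$ granted by Fact~\ref{prop:congruence} (which applies because $\cong$ is a congruence): for every $n\in\N$, the chain
\[
T_{n+1}(\phi,x)\;\leq\;x\wedge\phi^n(x)\;\leq\;x
\]
holds, and since the outer endpoints are $\cong$-equivalent by (a), so is the middle term. Conversely, to derive (a) from (b), I would rewrite the trajectory as $T_n(\phi,x)=\bigwedge_{k=0}^{n-1}(x\wedge\phi^k(x))$, noticing that the $k=0$ factor is simply $x\wedge x=x$ and can be freely inserted. Each factor is $\cong x$ by (b), and applying the congruence property to the meet of these $n$ relations yields $T_n(\phi,x)\cong \bigwedge_{k=0}^{n-1}x=x$. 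The implication (a)$\Rightarrow$(c) is the special case $n=2$ of the definition of $(\phi,\cong)$-inertness.

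For the final assertion, I would assume (c) together with the hypothesis that $\phi$ respects $\cong$, and proceed by induction on $n\geq 1$ to obtain (a). The cases $n=1$ (trivially) and $n=2$ (which is exactly (c)) serve as the base. For the inductive step, the key identity is
\[
T_{n+1}(\phi,x)=x\wedge\phi(T_n(\phi,x)),
\]
which follows directly from $\phi$ being a meet endomorphism. Then the inductive hypothesis $T_n(\phi,x)\cong x$ combined with $\phi$ respecting $\cong$ gives $\phi(T_n(\phi,x))\cong\phi(x)$; applying the congruence property and invoking (c) yields $T_{n+1}(\phi,x)\cong x\wedge\phi(x)\cong x$. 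The main obstacle is conceptual rather than technical: the hypothesis that $\phi$ respects $\cong$ is precisely what is needed to propagate the equivalence through $\phi$ in the induction, and without it (c) is genuinely weaker than (a), so one cannot hope to reverse (a)$\Rightarrow$(c) in general.
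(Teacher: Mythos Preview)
Your proof is correct and follows essentially the same route as the paper's: the convexity argument for (a)$\Rightarrow$(b), the congruence property for (b)$\Rightarrow$(a), and the inductive use of $T_{n+1}(\phi,x)=x\wedge\phi(T_n(\phi,x))$ together with $\phi$ respecting $\cong$ for (c)$\Rightarrow$(a) all match. The only cosmetic difference is that for (b)$\Rightarrow$(a) you rewrite the trajectory as $\bigwedge_{k=0}^{n-1}(x\wedge\phi^k(x))$ and apply the congruence in one stroke, whereas the paper proceeds by induction via $T_{n+1}(\phi,x)=T_n(\phi,x)\wedge\phi^n(x)$; both arguments are the same idea unwound differently.
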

\begin{proof}
(a)$\Rightarrow$(b) For every $n\in\N_+$ one has that $x\cong T_{n+1}(\phi,x)$  by assumption. By the convexity of $[x]_{\cong}$ provided by Fact~\ref{prop:congruence}, also $x\wedge\phi^{n}(x)\cong x$.

(b)$\Rightarrow$(a) For $n=1$ one has $T_1(\phi,x)=x\cong x$. Thus, we proceed by induction and assume that $T_n(\phi,x)\cong x$ for $n\in\mathbb N_+$. Notice that $T_{n+1}(\phi,x)=\phi^n(x)\wedge T_n(\phi,x)$. Thus, since $\cong$ is a congruence, (b) and the induction hypothesis imply that
$T_{n+1}(\phi,x)\cong\phi^n\wedge x\cong x,$
which concludes the proof.

The implication (b)$\Rightarrow$(c) is trivial.

(c)$\Rightarrow$(a) Assume that $x\cong x\wedge \phi(x)$. As $T_1(\phi,x)=x\cong x$, we proceed by induction: if $T_{n}(\phi,x)\cong x$ for some $n\in\mathbb N_+$, then $\phi(T_{n}(\phi,x))\cong\phi(x)$, and so, since $\cong$ is a congruence, we have
\[T_{n+1}(\phi,x)=x\wedge \phi(T_{n}(\phi,x))\cong x\wedge\phi(x)\cong x.\qedhere\]
\end{proof}

 Note that item (c) of Lemma \ref{rem:newTQ} is strictly weaker than (a) and (b) if $\phi$ does not respect $\cong$. We provide a small counterexample.
 
\begin{example}
Consider the lattice $X$ described by the following Hasse diagram
$$\xymatrix@-1pc{& \top\ar@{-}[dr]\ar@{-}[dl] &\\
	x\ar@{-}[dr] & & y\ar@{-}[dl]\\
	& \perp, &
}$$
the endomorphism $\phi$ defined by $\phi(\top)=x$, $\phi(x)=\phi(y)=\phi(\perp)=\perp$, and the congruence $\cong$ partitioning $X$ into the two subsets $\{\top,x\}$ and $\{y,\perp\}$. Note that $\phi$ does not respect $\cong$ since $\phi(\top)\not\cong\phi(x)$ even though $\top\cong x$. The point $\top$ satisfies item (c) of Lemma \ref{rem:newTQ}, but $\top\wedge\phi^2(\top)=\perp\not\cong\top$.
\end{example}

\begin{remark}
Let $X$ be a meet-semilattice together with a congruence $\cong$, and let $\phi$ be an endomorphism of $X$ that respects $\cong$.
\begin{compactenum}[(a)]
\item If $x\in X$ is $(\phi,\cong)$-inert, then every $y\in [x]_{\cong}$ is $(\phi,\cong)$-inert. Indeed, $y\cong x\cong x\wedge\phi(x)\cong y\wedge\phi(y)$ and then Lemma \ref{rem:newTQ}(c) applies.
\item Consequently, the set $\mathcal I_\phi(X,\cong)$ of all $(\phi,\cong)$-inert points of $X$ is a disjoint union of congruence classes of $X$.
\end{compactenum}
\end{remark}

\subsection{Entropy for generalised normed semilattices}\label{hSesec}

Following~\cite{uatc,DGBKTZ}, a \emph{generalised normed meet-semilattice} is a meet-semilattice $X$ with a function $v\colon X\to\R_{\geq 0}\cup\{\infty\}$, and so $v$ is called \emph{generalised norm}. 
Let  $\cong_v$ be the equivalence relation induced by the following partition: $$\mathcal F_v(X):=\{x\in X\mid v(x)<\infty\}\quad \text{and}\quad X_\infty:=X\setminus\mathcal F_v(S).$$
 Define $f_v\colon X\to \R_{\geq0}$ by letting $f_v(x)=v(x)$ if $x\in\mathcal F_v(X)$ and $f_v(x)=0$ otherwise.

The following result shows that $h_{f_v,\cong_v}(\phi)$ coincides with the \emph{semigroup entropy} $h_v(\phi)$ defined in~\cite{DGBKTZ} 
by slightly generalising that invented 
in~\cite{uatc}.
It is worth to recall that many of the classical entropy functions in mathematics, such as the measure entropy~(\cite{K,Sinai}), several instances of topological entropy~(\cite{AKM,Hood,G}), many kinds of (adjoint) algebraic entropy~(\cite{AKM,DG,DGS,DGSZ,GK,SZ,W}) and the two set-theoretic entropies~(\cite{AZD,DG-islam}), are found to be semigroup entropies, and in 
almost all these cases the underlying semigroup is a semilattice.

\begin{proposition}\label{hv} 
Let $v\colon X\to\R_{\geq 0}\cup\{\infty\}$ be a generalised norm on a meet-semilattice $X$ that is subadditive on $\mathcal F_v(S)$, and let $\phi\colon X\to X$ be a monotone endomorphism. Then, 
for every $x\in \mathcal F_v(X)$, $x$ 
is $(\phi,\cong_v)$-inert and
\begin{equation}\label{hveq}
h_{f_v}(\phi,x)=
	\limsup_{n\to \infty}\frac{v(T_{n}(\phi,x))}{n}.
		\end{equation}
In particular, $\widetilde h_{f_v,\cong_v}(\phi)=\sup\{h_{f_v}(\phi,x)\mid x\in\mathcal F_v(X)\}$ and so $\tilde h_{f_v,\cong_v}(\phi)=h_v(\phi)$.
\end{proposition}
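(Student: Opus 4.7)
The plan is to first establish that $\mathcal F_v(X)$ is stable under both $\phi$ and the meet operation. Indeed, monotonicity of $\phi$ in the sense of \cite{uatc,DGBKTZ} gives $v(\phi(y))\leq v(y)$ for every $y\in X$, which yields $\phi(\mathcal F_v(X))\subseteq\mathcal F_v(X)$; and subadditivity of $v$ on $\mathcal F_v(X)$ gives $v(y\wedge z)\leq v(y)+v(z)<\infty$ whenever $y,z\in\mathcal F_v(X)$. Starting from $x\in\mathcal F_v(X)$, a straightforward induction yields $\phi^k(x)\in\mathcal F_v(X)$ for every $k\in\N$, and combining this with closure under meets shows $T_n(\phi,x)\in\mathcal F_v(X)$ for every $n\in\N_+$. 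Since $\cong_v$ has precisely the two classes $\mathcal F_v(X)$ and $X_\infty$, we conclude $T_n(\phi,x)\cong_v x$ for all $n$, proving $(\phi,\cong_v)$-inertia of $x$.

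For the equality~\eqref{hveq}, observe that by the previous step both $x$ and all its trajectories $T_n(\phi,x)$ belong to $\mathcal F_v(X)$, so by the definition of $f_v$ one has $f_v(T_n(\phi,x))=v(T_n(\phi,x))$ for every $n$; the claimed formula then follows directly from the definition~\eqref{newheq} of $h_{f_v}(\phi,x)$.

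Finally, any $(\phi,\cong_v)$-inert $x\in X_\infty$ produces trajectories $T_n(\phi,x)\in X_\infty$, so $f_v(T_n(\phi,x))=0$ for every $n$ and hence $h_{f_v}(\phi,x)=0$; since $h_{f_v}(\phi,\cdot)$ is non-negative everywhere, these vanishing contributions do not affect $\widetilde h_{f_v,\cong_v}(\phi)$. Therefore the supremum defining $\widetilde h_{f_v,\cong_v}(\phi)$ can be restricted to the $(\phi,\cong_v)$-inert points of $\mathcal F_v(X)$, which, as already proved, is the whole of $\mathcal F_v(X)$. The resulting supremum coincides with $h_v(\phi)$ by its definition in \cite{uatc,DGBKTZ}, modulo the pointwise identification~\eqref{hveq}. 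The main (minor) obstacle is pinning down the conventional meaning of ``monotone endomorphism'' adopted in \cite{uatc,DGBKTZ}; once this is made explicit, everything else is routine bookkeeping based on the two-class partition defining $\cong_v$.
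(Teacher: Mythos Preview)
Your proof is correct and follows essentially the same approach as the paper's: both use monotonicity of $\phi$ (i.e., $v(\phi(y))\leq v(y)$) together with subadditivity of $v$ on $\mathcal F_v(X)$ to show inductively that $T_n(\phi,x)\in\mathcal F_v(X)$, deduce $(\phi,\cong_v)$-inertia from the two-class partition, and then observe that $(\phi,\cong_v)$-inert points in $X_\infty$ contribute $0$ to the supremum. Your write-up is in fact slightly more explicit than the paper's, particularly in spelling out why \eqref{hveq} holds and why the $X_\infty$ contributions can be discarded.
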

\begin{proof} 
Let  $x\in\mathcal F_v(X)$. Since $\phi$ monotone, for all $n\in\N_+$, $v(\phi^n(x))\leq v(x)<\infty$.
As  $v$ is subadditive on $\mathcal F_v(X)$, one computes
$$v(T_1(\phi,x))=v(x\wedge \phi(x))\leq v(x)+v(\phi(x))\leq 2v(x)<\infty$$
and, by induction on $n\in\N_+$, one has
$$v(T_{n+1}(\phi,x))\leq v(T_n(\phi,x))+v(\phi^n(x))\leq v(T_n(\phi,x))+v(x)<\infty;$$
 hence $x$ is $(\phi,\cong_v)$-inert.
 
The last assertion follows since every $x\in \mathcal F_v(X)$ is $(\phi,\cong_v)$-inert and $h_{f_v}(\phi,x)=0$ for every $(\phi,\cong_v)$-inert 
$x\in X_\infty$, and by the definition of semigroup entropy.
\end{proof}

The limit superior in \eqref{hveq} is actually a limit as proved in~\cite{uatc}.

\begin{remark}
Let $v\colon X\to\R_{\geq 0}\cup\{\infty\}$ be a generalised norm on a meet-semilattice $X$, and let $\phi$ be an endomorphism of $X$.

If $v$ is non-decreasing, then all the elements of $\mathcal F_v(X)$ are $(\phi,\cong_v)$-inert and the conclusions of Proposition~\ref{hv} hold true.

On the other hand, in case $v$ is non-increasing, all the points of $X_\infty$ are $(\phi,\cong_v)$-inert. 
Therefore, if $v$ is also subadditive on $\mathcal F_v(S)$ and $\phi$ is monotone, as in Proposition~\ref{hv} every $x\in X$ 
is $(\phi,\cong_v)$-inert and
\begin{equation*}
h_{f_v}(\phi,x)=\begin{cases}\begin{aligned} &\limsup_{n\to \infty}\frac{v(T_{n}(\phi,x))}{n}&\text{if $x\in\mathcal F_v(X)$,}\\
			&0 &\text{otherwise.}\end{aligned}\end{cases}\end{equation*}
\end{remark}

\begin{remark}
Let $v\colon X\to\R_{\geq 0}\cup\{\infty\}$ be a generalised norm on a meet-semilattice $X$, and let $\phi$ be an endomorphism of $X$.  If $v$ is non-increasing and subadditive, then $\cong_v$ is a congruence: 
for every $x,y\in X$, $x\wedge y\in\mathcal F_v(X)$ if and only if both $x$ and $y$ belong to $\mathcal F_v(X)$. Moreover, if $\phi$ is monotone, then $\phi(\mathcal F_v(X))\subseteq \mathcal F_v(X)$. However, this is not enough to get that $\phi$ respects $\cong_v$: one needs either $\phi(X_\infty)\subseteq X_\infty$ or $\phi(X_\infty)\subseteq\mathcal F_v(X)$.
\end{remark}

\subsection{Entropy for componentwisely weakly weighted invariant  generalised quasi-metric semilattices}\label{tildehsec}

Let $(X,d)$ be an invariant generalised quasi-metric meet-semilattice together with a componentwise weak weight $w$ for $d$. Let $\phi$ be a non-expansive endomorphism of $X$ (i.e., $d(\phi(x),\phi(y))\leq d(x,y)$). Hence, we have the following data:
\begin{compactenum}[-]
\item  the equivalence relation $\cong_d$ is a congruence and so every connected component $\mathcal Q(x)$ is a convex invariant subsemilattice of $X$ (see Proposition~\ref{congd});
\item $\phi$ respects $\cong_d$ by definition of non-expansive endomorphism;
\item for every $x\in X$ one has that $w\restriction_{\mathcal Q(x)}\,\sim w_{x}$, where
$w_{x}=d(x,y)-d(y,x)<\infty$ for every $y\in \mathcal Q(x)$ (see Theorem~\ref{theo:gen_oc_vs_weak_wei} and 
\eqref{eq:w_x}).
\end{compactenum}

\medskip
The following result shows in particular that $\widetilde h_{w,\cong_d}(\phi)$ coincides with the \emph{intrinsic semilattice entropy} $h_d(\phi)$ from~\cite{CDFGBT}. This entropy was introduced to get a unifying approach able to cover all (or at least, most) of the intrinsic-like entropies in literature:
 the intrinsic algebraic entropy~(\cite{DGSV}), the algebraic and the topological entropies for locally linearly compact vector spaces~(\cite{CGBalg,CGBtop}), the topological entropy for totally disconnected locally compact abelian groups~(\cite{DGB-BT,GBV}) and the algebraic entropy for compactly covered locally compact abelian groups~(\cite{DGB-BT,GBST}).

\begin{proposition}\label{prop:cww int} 
Let $(X,d)$ be an invariant generalised quasi-metric meet-semilattice componentwisely weakly weighted by $w$ and $\phi$ a non-expansive endomorphism of $X$. 
\begin{compactenum}[(a)]
\item An element $x\in X$ is $(\phi,\cong_d)$-inert if and only if $d(x,\phi(x))<\infty$.
\item For every $x\in\mathcal I_{\phi}(X,\cong_d)$,
\begin{equation}\label{hdeq}h_{w}(\phi,x)=\limsup_{n\to\infty}\frac{d(x,T_{n}(\phi,x))}{n}.\end{equation}
\end{compactenum}
In particular, $h_{w}(\phi,x)$ does not depend on the choice of $w$.

Moreover, $\widetilde h_{w,\cong_d}(\phi)=\sup\{\widetilde h_{w}(\phi,x)\mid \text{$x\in X$ such that $d(x,\phi(x))<\infty$}\}$  and so $\widetilde h_{w,\cong_d}(\phi)=\widetilde h_d(\phi)$.
\end{proposition}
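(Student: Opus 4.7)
First, for (a), note that $\phi$ being non-expansive implies that $\phi$ respects $\cong_d$: for $x,y\in X$ with $d^s(x,y)<\infty$, one has $d^s(\phi(x),\phi(y))\leq d^s(x,y)<\infty$, so $\phi(x)\cong_d\phi(y)$. Moreover, $\cong_d$ is a congruence by Proposition~\ref{congd}. Hence the equivalence (a)$\Leftrightarrow$(c) of Lemma~\ref{rem:newTQ} is available, reducing $(\phi,\cong_d)$-inertness of $x\in X$ to the single condition $x\wedge\phi(x)\cong_d x$. Since $x\wedge\phi(x)\leq x$ yields $d(x\wedge\phi(x),x)=0$, that condition is equivalent to $d(x,x\wedge\phi(x))<\infty$, which by invariance (Definition~\ref{invdef}) is in turn equivalent to $d(x,\phi(x))<\infty$.

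For (b), fix $x\in\mathcal I_\phi(X,\cong_d)$. By definition, every trajectory point $T_n(\phi,x)$ lies in $\mathcal Q(x)$, so the values $w(T_n(\phi,x))$ are determined by $w\restriction_{\mathcal Q(x)}$, which is a weak weight of the induced quasi-metric on $\mathcal Q(x)$ by Fact~\ref{fact:cww_iff_each_Q_ww}. Since $(X,d)$ satisfies {\rm(DPC)} by Proposition~\ref{prop:weak_wei_gen_to_oc}, so does $\mathcal Q(x)$, which is an invariant quasi-metric semilattice; thus Theorem~\ref{thm:weak_weight} applied to $\mathcal Q(x)$ yields that $w_x\restriction_{\mathcal Q(x)}$ is itself a weak weight. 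Proposition~\ref{prop:w+c}(a) then provides $c\in\R$ with $w\restriction_{\mathcal Q(x)}=w_x\restriction_{\mathcal Q(x)}+c$. Since $T_n(\phi,x)\leq x$, one has $d(T_n(\phi,x),x)=0$ and therefore $w_x(T_n(\phi,x))=d(x,T_n(\phi,x))$. Substituting gives
\[\frac{w(T_n(\phi,x))}{n}=\frac{d(x,T_n(\phi,x))+c}{n},\]
and passing to $\limsup_{n\to\infty}$ absorbs the additive constant, producing \eqref{hdeq}. Since the right-hand side of \eqref{hdeq} makes no reference to $w$, the value $h_w(\phi,x)$ does not depend on the choice of componentwise weak weight.

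The concluding assertion then follows by combining Definition~\ref{newh} of $\widetilde h_{w,\cong_d}(\phi)$ with (a)---the supremum may be restricted to $(\phi,\cong_d)$-inert points, i.e.\ those $x$ with $d(x,\phi(x))<\infty$---and by comparing the formula in (b) with the definition of the intrinsic semilattice entropy $\widetilde h_d(\phi)$ of~\cite{CDFGBT}, whose local quantity at such an $x$ is precisely the right-hand side of \eqref{hdeq}. The only delicate point in the whole argument is securing that $\phi$ respects $\cong_d$, without which Lemma~\ref{rem:newTQ}(c)$\Rightarrow$(a) would not be available; non-expansivity supplies this automatically, and the rest is a matter of stringing together the ingredients already prepared in the preceding sections.
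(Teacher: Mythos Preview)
Your proof is correct and follows essentially the same approach as the paper's. The paper records the three key inputs---that $\cong_d$ is a congruence, that $\phi$ respects $\cong_d$, and that $w\restriction_{\mathcal Q(x)}\sim w_x$---as bullet points immediately preceding the proposition, and then its proof of (a) and (b) proceeds exactly as yours does (invariance to rewrite $d(x,\phi(x))$, Lemma~\ref{rem:newTQ} for the inertness characterisation, and the constant-shift argument for (b)); you simply re-derive those preliminary facts inline with a bit more detail.
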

\begin{proof} 
(a) For every $x\in X$, $d(x,\phi(x))=d(x,x\wedge \phi(x))$ by the invariance of $d$. Moreover, as $x\wedge\phi(x)\leq x$, $d(x\wedge\phi(x),x)=0$ and so $x\wedge\phi(x)\cong_d x$ if and only if $d(x,\phi(x))<\infty$. Lemma~\ref{rem:newTQ} concludes the proof.

(b) Let $x\in\mathcal I_{\phi}(X,\cong_d)$. First notice that, for every $n\in\N_+$, $T_n(\phi,x)\in[x]_{\cong_d}$. Since $w\restriction_{\mathcal Q(x)}\,\sim w_{x}$ as stated above, there exists a constant $c_x>0$ such that $w(T_n(\phi,x))=w_x(T_n(\phi,x))+c_x$ for every $n\in\N_+$. Moreover, since $T_n(\phi,x)\leq x$, $w_x(T_n(\phi,x))=d(x,T_n(\phi,x))$, and so  one computes
\[h_{w}(\phi,x)=\limsup_{n\to\infty}\frac{w(T_{n}(\phi,x))}{n}=\limsup_{n\to\infty}\frac{w_x(T_{n}(\phi,x))}{n}=\limsup_{n\to\infty}\frac{d(x,T_{n}(\phi,x))}{n}.\]

The last assertion follows from the definitions.
\end{proof}

The limit superior in \eqref{hdeq} is actually a limit, as proved in \cite[Theorem~3.15]{CDFGBT}.

\medskip
The requirement that the generalised quasi-metric meet-semilattice has to satisfy (DPC) may seem restrictive. However, this property is fulfilled by all the examples of intrinsic entropies collected in~\cite{CDFGBT}. Nevertheless, we explicitly pose the following problem. 

\begin{question}\label{q:non_DPC}
Let $\phi$ be a non-expansive endomorphism of an invariant generalised quasi-metric meet-semilattice $(X,d)$. Fix a family $\{x_i\}_{i\in I}$ of representatives of the connected components of $X$, and define $w_X$ as in Theorem \ref{theo:gen_oc_vs_weak_wei}, i.e., by setting $w_X\restriction_{\mathcal Q(x_i)}=w_{x_i}$, for every $i\in I$. For a $(\phi,\cong_d)$-inert point $x\in X$, is it true that the value $\widetilde h_{w_X}(\phi,x)$ does not depend on the choice of the representative family? If so, can we conclude that $h_{w_X}(\phi,x)=\limsup_{n\to\infty}\frac{d(x,T_{n}(\phi,x))}{n}$?
\end{question}

%
%

\subsection{How to recover some classical entropies}
We conclude by listing some concrete and classical examples of entropies obtained using Example~\ref{ex:w_weighted_gqm} (see also Examples~\ref{ex:qm groups} and~\ref{ex:PX_and_LG_cww}).

\subsubsection{The algebraic entropy}\label{algss}

Let $G$ be an abelian group and $(L(G),\supseteq)$ the meet-semilattice consisting of all subgroups of $G$ ordered by containment. 
Every group endomorphism $\phi\colon G\to G$ induces an endomorphism of $L(G)$, which we call again $\phi\colon L(G)\to L(G)$, by setting $H\mapsto \phi(H)$ for every $H\in L(G)$.
For $H\in L(G)$, $$T_n(\phi,H)=H+\phi(H)+\cdots+\phi^{n-1}(H),\quad \text{for every $n\in\N_+$}.$$

\begin{compactenum}[(a)]
\item The meet-semilattice $(L(G),\supseteq)$ can be equipped with the subadditive generalised norm $$v\colon L(G)\to \R\cup\{\infty\},\quad v(H):=\log|H|.$$ In particular, $\phi\colon L(G)\to L(G)$ is monotone with respect to $v$. By Proposition~\ref{hv}, one has 
$$\widetilde h_{f_v,\cong_v}(\phi)=\sup\{h_{f_v}(\phi,H)\mid H\in\mathcal F_v(L(G))\}.$$
Since $\mathcal F_v(L(G))=\{H\leq G\mid H\text{ finite}\}$ and $f_v\colon L(G)\to \R$  is given by $f_v(H)=\log|H|$ if $H$ is finite and $f_v(H)=0$ otherwise,
we get that 
\begin{equation}\label{enteq}
h_{f_v}(\phi,H)=\lim_{n\to\infty}\frac{\log|T_n(\phi,H)|}{n}, 
\quad\text{and so}\quad \tilde h_{f_v,\cong_v}(\phi)=\ent(\phi),
\end{equation}
where $\ent(\phi)$ is the classical \emph{algebraic entropy} of  abelian group endomorphisms (\cite{AKM,uatc,DGSZ,W}). 

\item The pair $(L(G),d_{L(G)})$ is an invariant generalised quasi-metric meet-semilattice satisfying {\rm (DPC)}. 
Suppose that $w$ is a componentwise weak weight for $d_{L(G)}$. By Proposition~\ref{prop:cww int}, we obtain that, for every $(\phi,\cong_{d_{L(G)}})$-inert subgroup $H$ of $G$,
$$h_{w}(\phi,H)=\lim_{n\to\infty}\frac{\log\lvert T_{n}(\phi,H):H\vert}{n}, 
\quad \text{and so}\quad \widetilde h_{w}(\phi)=\widetilde{\ent}(\phi),$$
where $\widetilde{\ent}(\phi)$ is the \emph{intrinsic algebraic entropy} of $\phi$ as defined in~\cite{DGSV} (see also~\cite[\S5.2.1]{CDFGBT}).
\end{compactenum}

\begin{remark}\label{rem911}
In the above setting, 
as $\mathcal F_v(L(G))$ is a $\phi$-invariant subsemilattice of $L(G)$, we can consider on it the restrictions $\phi_{fin}$ and $v_{fin}$ of $\phi$ and $v$, respectively. Restricting the generalised quasi-metric $d_{L(G)}$ defined in Example~\ref{ex:w_weighted_gqm}(b), we obtain a quasi-metric $d$ on $\mathcal F_v(L(G))$ for which $v_{fin}$ is a weight. Therefore, $v_{fin}$ is a strictly decreasing meet co-valuation by Theorem~\ref{theo:w_wei_to_valu_join_1}. 
Since $(\mathcal F_v(L(G)),d)$ has only one connected component, for every 
$E\in\mathcal F_v(G)$, $w_E\sim v_{fin}$ by Theorem~\ref{thm:weak_weight}  and Proposition~\ref{prop:w+c}(a), where $w_E$ is defined in Example~\ref{ex:PX_and_LG_cww}(b).
Thus, 
fixed any $E\in\mathcal F_v(G)$, for every $H\in \mathcal F_v(L(G))$, 
$$h_{f_v}(\phi,H)\,=h_{w_E}(\phi_{fin},H)=\lim_{n\to\infty}\frac{\log\lvert T_n(\phi,H)+E:E\rvert-\log\lvert T_n(\phi,H)+E:T_n(\phi,H)\rvert}{n},$$
which has to be compared with \eqref{enteq}.
\end{remark} 

\begin{example}\label{exlca}
Let now $G$ be a compactly covered locally compact abelian group (i.e., the Pontryagin dual $\widehat G$ of $G$ is a totally disconnected locally compact abelian group) and $\phi\colon G\to G$ a continuous endomorphism. Denote by $B(G)$ the subsemilattice of $L(G)$ consisting of all compact open subgroups of $G$ and
call $d_{B(G)}$ the restriction of $d_{L(G)}$ to $B(G)$.

In this case, $(B(G),d_{B(G)})$ is an invariant quasi-metric meet-semilattice satisfying {\rm (DPC)}, that is, we have only one connected component: for every $U,V\in B(G)$, $U\cong_{d_{B(G)}} V$; in fact,  
$|U+V:V|$ and $|U+V:U|$ are finite since $U$ and $V$ are open in the compact subgroup $U+V$.

Hence, fixed $V\in B(G)$, by Theorem~\ref{thm:weak_weight} and Proposition~\ref{prop:cww int}(a), 
we get that $$h_{w_V}(\phi,U)=\limsup_{n\to\infty}\frac{\log\vert T_{n}(\phi,U):U\vert}{n},\quad\text{for every}\ U\in B(G),$$
and so $$h_{w_V}(\phi)=\sup\{h_{w_V}(\phi,U)\mid U\in B(G)\}=h_{alg}(\phi),$$
where $h_{alg}(\phi)$ is the \emph{algebraic entropy} of $\phi$ as described in~\cite[Proposition~2.2]{DGB-BT}

For the general definition of the algebraic entropy for continuous endomorphisms of locally compact groups, see~\cite{DG-islam,V}. 
In~\cite{uatc}, it was already obtained as a semigroup entropy in the general case.
\end{example}

\begin{remark}
One can carry out considerations similar to those of Example~\ref{exlca} also in other settings. For example, they can be applied to the topological entropy of continuous endomorphisms of totally disconnected locally compact groups (see~\cite{CDFGBT,DGB-BT,uatc,GBV}),
and to the algebraic and the topological entropy of continuous endomorphisms of locally linearly compact vector spaces
(see~\cite{CGBalg,CGBtop}).
\end{remark}

\subsubsection{The set-theoretic entropy}

Analogously to \S\ref{algss} for the algebraic entropy, one can treat the set-theoretic entropy as follows.

Let $S$ be a non-empty set and $(\mathcal P(S),\supseteq)$ the meet-semilattice consisting of all subsets of $S$. 
Every self-map $\varphi\colon S\to S$ induces an endomorphism of $\mathcal P(S)$, that we call again $\varphi\colon\mathcal P(S)\to\mathcal P(S)$, such that $A\mapsto \varphi(A)$ for every $A\in\mathcal P(S)$.
Note that $$T_n(\varphi,A)=A\cup \varphi(A)\cup\cdots\cup \varphi^{n-1}(A),\quad \text{for every $n\in\N_+$}.$$

\begin{compactenum}[(a)]
\item The meet-semilattice  $(\mathcal P(S),\supseteq)$ can be equipped with the subadditive generalised norm 
$$v_{set}\colon \mathcal P(S)\to \R\cup\{\infty\},\quad v_{set}(A):=|A|.$$ In particular, $\varphi\colon \mathcal P(S)\to \mathcal P(S)$ is monotone with respect to $v_{set}$. By Proposition~\ref{hv}, 
$$\widetilde h_{f_{v_{set}},\cong_{v_{set}}}(\varphi)=\sup\{h_{f_{v_{set}}}(\varphi,A)\mid A\in\mathcal F_{v_{set}}(\mathcal P(S))\}.$$
Since $\mathcal F_{v_{set}}(\mathcal P(S))=\{A\subseteq S\mid A\ \text{finite}\}$ and $f_{v_{set}}\colon \mathcal P(S)\to \R$  is given by $f_{v_{set}}(A)=|A|$ if $A$ is finite and $f_{v_{set}}(A)=0$ otherwise, we get that 
$$h_{f_{v_{set}}}(\varphi,A)=\lim_{n\to\infty}\frac{|T_n(\varphi,A)|}{n}, 
\quad \text{and so}\quad \tilde h_{f_{v_{set}},\cong_{v_{set}}}(\varphi)=h_{set}(\varphi),$$
 where $h_{set}(\phi)$ is the classical \emph{set-theoretic entropy} of $\varphi$ (\cite{AZD,uatc,DGBKTZ}). 

\item The pair $(\mathcal P(S),d_{\mathcal P(S)})$ is an invariant generalised quasi-metric meet-semilattice satisfying {\rm (DPC)}. 
If $w$ is a componentwise weak weight for $d_{\mathcal P(S)}$, by Proposition~\ref{prop:cww int} one obtains that, for every $(\phi,\cong_{d_{\mathcal P(S)}})$-inert subset $A$ of $S$,
$$h_{w}(\varphi,A)=\lim_{n\to\infty}\frac{\lvert T_{n}(\varphi,A)\backslash A \vert}{n},\quad\text{and so}\quad \tilde h_{w,\cong_{d_{\mathcal P(S)}}}(\varphi)=\widetilde{h}_{set}(\varphi),$$ 
where $\widetilde{h}_{set}(\varphi)$ is the \emph{intrinsic set-theoretic entropy} of $\varphi$ from~\cite{DGBKTZ}. In the same paper, it was also proved that 
always $\widetilde h_{set}(\varphi)=h_{set}(\varphi)$. 
\end{compactenum}

\medskip
Similar considerations as those in Remark~\ref{rem911} can be done also for the set-theoretic entropy.

\Addresses

\end{document}